\newtheorem{theorem}{Theorem}[section]
\newtheorem{observation}[theorem]{Observation}
\newtheorem{corollary}[theorem]{Corollary}
\newtheorem{lemma}[theorem]{Lemma}
\theoremstyle{definition}
\newtheorem{definition}[theorem]{Definition}
\author{Guillaume Sagnol\footnotemark[1] \ and
	Daniel Schmidt genannt Waldschmidt\footnotemark[1]\\[1em]
	{\small Technische Universität Berlin, Fakultät II, Institut für Mathematik,}\\
	{\small MA 5-2, Straße des 17. Juni 136, 10623 Berlin, Germany.}\\
	\texttt{ \{sagnol,dschmidt\}@math.tu-berlin.de}}
\date{}
\title{Restricted Adaptivity in Stochastic Scheduling} 
\newcommand{\OPT}{\ensuremath{\textsc{OPT}}\xspace}
\newcommand{\OPTdelta}{\ensuremath{\textsc{OPT}_{\delta}^{\textsc{delay}}}\xspace}
\newcommand{\OPTshift}{\ensuremath{\textsc{OPT}_{\tau}^{\textsc{shift}}}\xspace}
\newcommand{\LEPTda}{\ensuremath{\textsc{LEPT}_{\delta,\alpha}}\xspace}
\newcommand{\LEPTF}{\ensuremath{\textsc{FLEPT}}\xspace}
\newcommand{\LEPT}{\ensuremath{\textsc{LEPT}}\xspace}
\newcommand{\E}{\mathbb{E}}
\newcommand{\Pro}{\mathbb{P}}
\newcommand{\R}{\mathbb{R}}
\newcommand{\J}{\ensuremath{\mathcal{J}}\xspace}
\newcommand{\M}{\ensuremath{\mathcal{M}}\xspace}
\def\NP{\ensuremath{\text{NP}}\xspace}
\begin{document}

\maketitle

\renewcommand*{\thefootnote}{\fnsymbol{footnote}}
\footnotetext[1]{This research was funded by the Deutsche Forschungsgemeinschaft (DFG, German Research Foundation) under Germany's Excellence Strategy – The Berlin Mathematics Research Center MATH+ (EXC-2046/1, project ID: 390685689).}

\begin{abstract}
We consider the stochastic scheduling problem of minimizing the expected makespan on $m$ parallel identical machines. While the (adaptive) list scheduling policy achieves an approximation ratio of $2$, any (non-adaptive) fixed assignment policy has performance guarantee $\Omega\left(\frac{\log m}{\log \log m}\right)$. Although the performance of the latter class of policies are worse, 
there are applications in which non-adaptive policies
are desired. In this work, we introduce the two classes of  $\delta$-delay and $\tau$-shift policies whose degree of adaptivity can be controlled by a parameter. We present a policy --  belonging to both classes -- which is an $\mathcal{O}(\log \log m)$-approximation for reasonably bounded parameters. In other words, an exponential improvement on the performance of any fixed assignment policy can be achieved when allowing a small degree of adaptivity. Moreover, we provide a matching lower bound for any $\delta$-delay and $\tau$-shift policy when both parameters, respectively, are in the order of the expected makespan of an optimal non-anticipatory policy.

\end{abstract}

\section{Introduction}

Load balancing problems
are one of the most fundamental problems in the field of scheduling, with applications in 
various sectors such as
manufacturing, construction, communication or operating systems.
The common challenge is the search for an
efficient allocation of scarce resources to a number of tasks.
While many variants of the problem are already hard to solve, 
in addition one may have to face uncertainty regarding the duration of the tasks; one way to model this is to use stochastic information learned from the available data.

In contrast to the solution concept of a schedule in deterministic problems, we are concerned with \emph{non-anticipatory policies} in stochastic scheduling problems. Such
a policy has the ability to \emph{react} to the information observed so far. 
While this adaptivity can be very powerful, there are situations
where assigning resources to jobs prior to their execution is a highly desired feature, e.g.\ for the scheduling of healthcare services. 
This is especially true for the daily planning of elective surgery units in hospitals, where a sequence of patients is typically set in advance for each operating room.
In this work, we present and analyze policies with restricted adaptivity which allows us to control the adaptivity of the policy.

The problem considered in this paper is the stochastic counterpart of the
problem of minimizing the makespan on
parallel identical machines,
denoted by $P\ \! ||\ \! \E[C_{\max}]$ using the three field notation due to Graham, Lawler, Lenstra and Rinnooy Kan~\cite{GLLR79}. 
The input consists of a set of $n$ jobs $\J$ and a set of $m$ parallel identical machines $\M$. Each job $j\in\J$ is associated with a non-negative random variable $P_j$ representing the processing time of the job. The processing times are assumed to be (mutually) independent and to have finite expectation. In this work, it is sufficient to only know the expected processing times. 

Roughly speaking,
a non-anticipatory policy may, at any point in time~$t$, decide to start a job on an idle machine or to wait until a later decision time. However, it may not anticipate any future information of the realizations, i.e., it may only make decisions based on the information observed up to time~$t$.
For further details we refer to the work by Möhring, Radermacher and Weiss~\cite{MRW84}.
The task is to find a non-anticipatory policy minimizing the expected makespan $\E[C_{\max}]:=\E[\max_{j\in\J}C_j]$, where $C_j$ denotes the (random) completion time of job $j$ under the considered policy. An optimal policy is denoted by \OPT.
By slight abuse of notation we use $\Pi$ for both the policy and the expected makespan of the policy.

An alternative way of understanding non-anticipatory policies is that
they maintain a queue of jobs for every machine. At any point in time $t$ it may start the first job in the queue of a machine if it is idle or it may change the queues arbitrarily, using only the information observed up to time~$t$. 
 In this work we consider a policy to be adaptive if it has the ability to react to the observations by changing the queues arbitrarily.
 The important class of non-idling non-adaptive policies is called the class of fixed assignment policies. Such a policy
 assigns all jobs to the machines beforehand, in form of ordered lists, and each machine processes the corresponding jobs as early as possible in this order.

While the class of (fully adaptive) non-anticipatory policies
and the class of (non-adaptive) fixed assignment policies can be 
considered as two extremes, the purpose of this paper is to
introduce two classes of policies bridging the gap between them continuously.

\begin{definition}[$\delta$-delay and $\tau$-shift policies]
	A $\delta$-delay policy for $\delta>0$ is a non-anticipatory policy which starts with a fixed assignment of all jobs to the machines and which may, at any point in time $t$, reassign not-started jobs to other machines with a delay of $\delta$: the reassigned jobs are not allowed to start before time $t+\delta$.\\
	A $\tau$-shift policy for $\tau>0$ is a non-anticipatory policy which starts with a fixed assignment of all jobs to the machines and which may reassign jobs to other machines, but only at times that are an integer multiple of $\tau$. 
\end{definition}

\begin{figure}[t]
	\centering
	\includegraphics[page=2,scale=0.96]{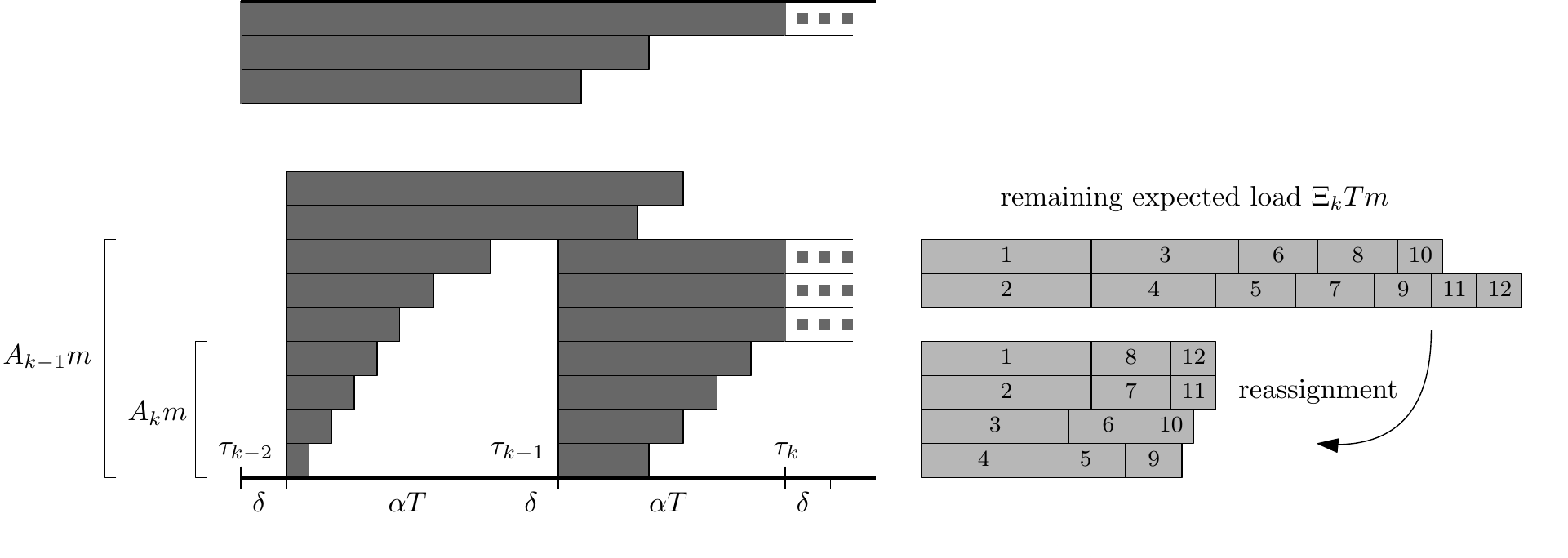}
	\caption{Snippets of the execution of a $\delta$-delay policy: Realizations of jobs observed up to time $t$ (left) and up to some time $>t+\delta$ (right) are depicted by rectangles in dark grey; the running job non-completed by the time of each snippet is indicated by squared dots in dark grey; jobs that did not start yet are depicted in light grey with the corresponding machine assignment.}
	\label{fig:Delta_Delay_Policy}
\end{figure}

\begin{figure}[t]
	\centering
	\includegraphics[page=3,scale=0.96]{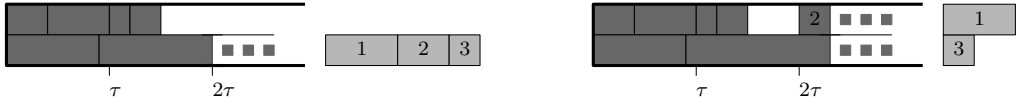}
	\caption{Snippets of the execution of a $\tau$-shift policy: Realizations of jobs observed up to time $2\tau$ (left) and some time $>2\tau$ (right) are depicted by rectangles in dark grey; the running job non-completed by the time of each snippet is indicated by squared dots in dark grey; jobs that did not start yet are depicted in light grey with the corresponding machine assignment.}
	\label{fig:Tau_Shift_Policy}
\end{figure}

Snippets of the execution of a $\delta$-delay policy and a $\tau$-shift policy can be found in Figure~\ref{fig:Delta_Delay_Policy} and Figure~\ref{fig:Tau_Shift_Policy}, respectively.
Observe that we recover the class of fixed assignment policies by letting  $\delta$ or $\tau$ go to $\infty$, and the class of non-anticipatory in the limit when $\delta$ or $\tau$ goes to $0$.

\subparagraph{Related Work}
Minimizing the makespan on parallel identical machines is a fundamental deterministic scheduling problem which dates back to the 60s.
Graham~\cite{G66} showed that the list scheduling algorithm computes a solution which is within a factor of $\left(2-\frac{1}{m}\right)$ away from an optimal solution. When the jobs are arranged in LPT-order, i.e., in non-increasing order of their processing times, he showed that list scheduling gives a $\left(\frac{4}{3}-\frac{1}{3m}\right)$-approximation~\cite{G69}. While $Pm||C_{\max}$, where the number of machines $m$ is constant, and $P||C_{\max}$ are (weakly) and strongly \NP-complete~\cite{GJ79}, respectively, Sahni~\cite{Sah76} and Hochbaum and Shmoys~\cite{HS87} obtained a FPTAS and a PTAS, respectively. In subsequent work~\cite{AAWY98,CJZ13,Hoc97,Jan10,JKV20} the running time of the PTAS was improved. More general machine environments were also considered in the literature~\cite{HS88,LST90}. 

The stochastic counterpart $P||\E[C_{\max}]$ where the processing times of the jobs are random and the objective is to minimize the expected makespan has also attracted attention.
One can easily see that the list scheduling algorithm by Graham~\cite{G66} also yields a $2$-approximation compared to an optimal non-anticipatory policy for the stochastic problem, as its analysis can be carried over to any realization. While list scheduling can be considered as a very adaptive policy, some applications require rather restricted policies, e.g. when scheduling operating rooms at a hospital~\cite{DMBH10,XJDK18}. A class of non-adaptive policies analyzed in the literature is comprised of fixed assignment policies, in which jobs must be assigned to the machines beforehand. Although more applicable, it is well known that the performance guarantee of an optimal fixed assignment is at least of the order $\Omega\left(\frac{\log m}{\log\log m}\right)$ with respect to an optimal non-anticipatory policy; see~\cite{GKNS21}. Much work was done in designing fixed assignment policies that are within a constant factor of an optimal fixed assignment policy. Kleinberg, Rabani and Tardos~\cite{KRT00} obtain a constant factor approximation for this problem for general probability distributions. When the processing times are exponentially and Poisson distributed, PTASes were found~\cite{GI99,DKLN20}. For the more general problem of makespan minimization on unrelated machines, Gupta, Kumar, Nagarajan and Shen~\cite{GKNS21} obtained a constant factor approximation. Closely related to the makespan objective, Molinaro~\cite{Mol19} obtained a constant factor approximation for the $\ell_p$-norm objective. In contrast to the literature for minimizing the makespan where approximative results were compared to an optimal fixed assignment policy, much work on the min-sum objective was done for designing approximative policies compared to an optimal non-anticipatory policy \cite{MSU99,MUV06,Sch08,SSU16,GMUX20}. When minimizing the sum of weighted completion times, Skutella, Sviridenko and Uetz~\cite{SSU16} showed that the performance ratio of an optimal fixed assignment policy compared to an optimal non-anticipatory policy can be as large as $\Omega(\Delta)$, where $\Delta$ is an upper bound on the squared coefficient of variation of the random variables. Lastly, Sagnol, Schmidt genannt Waldschmidt and Tesch~\cite{SST18}
considered the extensible bin packing objective, for which they showed that the fixed assignment policy induced by the LEPT order has a tight approximation ratio of $1+e^{-1}$ with respect to an optimal non-anticipatory policy.

Closely related to the reassignment of jobs in $\delta$-delay and $\tau$-shift policies, various non-preemptive scheduling problems with migration were considered in offline and online settings. Aggarwal, Motwani and Zhu~\cite{AMZ06} examined the offline problem where one must perform budgeted migration to improve a given schedule. For online makespan minimization on parallel machines, different variants on limited migration, e.g. bounds on the processing volume~\cite{SSS09} or bounds on the number of jobs~\cite{AH17}, were studied. Another related online problem was considered by Englert, Ozmen and Westermann~\cite{EOW14} where a reordering buffer can be used to defer the assignment of a limited number of jobs.

One source of motivation for this research is the aforementioned application to surgery scheduling. In this domain, a central problem is the allocation of patients to operating rooms. 
Although additional resource constraints exist,
the core of the problem can be modeled as the allocation of jobs with stochastic durations to parallel machines~\cite{DMBH10}. 
In this field, 
committing to a fixed assignment policy is common practice
in order to simplify staff management and reduce the stress level in the operating theatre~\cite{BD17,S+18,XJDK18}. 
Another obstacle to the introduction of sophisticated adaptive policies is the reluctance of computer-assisted scheduling systems among practitioners~\cite{ISM10}.
That being said, it is clear that resource reallocations do occasionally occur in operating rooms to deal with unforeseen events, hence, giving a reason to study some kind of semi-adaptive model. The proposed model of $\delta$-delay is an attempt to take into account the organizational overhead associated with rescheduling decisions; the model of $\tau$-shift policy
by the fact that rescheduling decisions cannot be made at any
point in time, but must be agreed upon in short meetings between the OR manager and the medical team. Moreover, we point out that the class of $\tau$-shift policies encompasses the popular class of \emph{proactive-reactive policies} used for the more general
resource constrained project scheduling problem~\cite{HL04},
in which a baseline schedule can be reoptimized after a set of
predetermined decision points (these approaches typically
consider a penalty in the objective function to account
for deviations between the initial baseline schedule and the reoptimized ones).

\subparagraph{Our Contribution} 
We introduce and analyze two new classes of policies ($\delta$-delay and $\tau$-shift policies) that interpolate between
the two extremes of non-adaptive and adaptive policies.
For the stochastic problem of minimizing the expected makespan on $m$ parallel identical machines, we analyze the policy \LEPTda, which belongs to the intersection of both classes.
This policy can in fact be seen as a generalization of the list policy \LEPT,
which waits for predefined periods of time before reassigning
the non-yet started jobs, taking the delay of $\delta$ into account. 
While an optimal fixed assignment policy has performance guarantee of at least $\Omega\left(\frac{\log m}{\log\log m}\right)$ compared to an optimal non-anticipatory policy, we show that \LEPTda is an $\mathcal{O}(\log\log m)$-approximation for some constant $\alpha>0$ and all $\delta = \mathcal{O}(1)\cdot \OPT$. Therefore, we exponentially improve the performance of non-adaptive policies by allowing a small amount of adaptivity. Moreover, we provide a matching lower bound for $\delta$-delay policies as well as for $\tau$-shift policies
if $\delta$ or $\tau$ are in $\Theta(\OPT)$.
This shows that there is no $\delta$-delay or $\tau$-shift policy
beating the approximation ratio of \LEPTda by more than a constant factor.

\subparagraph{Organization}

 Section~\ref{section:UpperBound} is devoted for the upper bound on the performance guarantee of \LEPTda. A lower bound on  optimal $\delta$-delay policies as well as $\tau$-shift policies is given in Section~\ref{section:LowerBound}. At the end, we conclude and give possible future research directions. Useful results from probability theory, detailed proofs as well as an overview of the variables and constants used in this work can be found in the appendix.

\section{Upper Bound}\label{section:UpperBound}

In this section, we show that there exists $\alpha>1$ such that the policy \LEPTda (see Definition~\ref{definition:LEPTDeltaAlpha}) has a performance guarantee doubly logarithmic in $m$ if $\delta=\mathcal{O}(1)\cdot\OPT$.

\begin{restatable}{theorem}{loglogmApproximation}\label{theorem:loglogmApproximation}
	There exists $\alpha>1$ such that \LEPTda is an $\mathcal{O}(\log\log m)$-approximation for $\delta=\kappa\cdot\OPT$ for any constant $\kappa>0$.
\end{restatable}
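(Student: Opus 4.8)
The plan is to design \LEPTda so that it proceeds in phases of geometrically increasing length, roughly doubling each time, and to argue that after $\mathcal{O}(\log\log m)$ phases essentially all work is done, while each phase contributes only $\mathcal{O}(\OPT)$ to the makespan. First I would recall the standard lower bounds on \OPT: both $\frac{1}{m}\sum_{j}\E[P_j]$ and $\max_j \E[P_j]$ are at most a constant times \OPT (the first is the average-load bound, valid per realization up to a factor; the second follows since the longest job must be processed somewhere). These two quantities are the only features of the instance \LEPTda needs to see, consistent with the paper's remark that knowing expected processing times suffices. I would normalize so that $\OPT = 1$, hence $\delta = \kappa$, and set a phase length $L = c(1+\kappa)$ for a suitable constant $c$, so that the delay $\delta$ is absorbed into a single phase length up to a constant.

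The core of the argument is a "doubling" potential argument on the number of machines that are still busy with the jobs assigned to them at the start of a phase. Concretely, in phase $i$ the policy takes the set of not-yet-started jobs, orders them in LEPT (longest expected processing time first) order, and distributes them greedily (list-scheduling style) across the machines that will become free during that phase; jobs reassigned at the start of phase $i$ may only start at time $t_i + \delta$, which is why we need $L \ge \delta$. The key step is to show that if at the start of a phase the total remaining expected work is $W$, then after a phase of length $\Theta(W/m + 1)$ the remaining expected work drops to, say, $W/2$ — intuitively because any machine that runs for a full phase length $\gg \max_j\E[P_j]$ processes an amount of work comparable to the phase length, by a concentration/Markov-type estimate on sums of independent bounded-expectation variables (this is presumably one of the "useful results from probability theory" deferred to the appendix). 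Iterating, after $\mathcal{O}(\log W_0) = \mathcal{O}(\log(m\cdot\OPT)) $ halvings one might fear $\log m$ phases, so the real trick must be sharper: after the first $\mathcal{O}(1)$ phases the load per machine is already $\mathcal{O}(1)$, and from then on the relevant quantity is the \emph{number of machines still carrying more than one job}, which squares down each phase (a machine with $k$ queued jobs of total expected length $\mathcal{O}(1)$ finishes all of them with constant probability in a constant number of phases, so the expected number of "bad" machines goes from $b$ to $O(b^{1/2})$-ish), giving the $\log\log m$ count. This is the part I expect to be the main obstacle: getting the recursion on the number of surviving overloaded machines to contract like $b \mapsto \sqrt{b}$ (or $b \mapsto b^{1-\Omega(1)}$) rather than merely $b \mapsto b/2$, which requires choosing $\alpha$ (the geometric growth factor of phase lengths) large enough that each phase gives every overloaded machine many independent chances to clear, while keeping the total length $\sum_i \alpha^i \cdot(\text{base})$ over the $\mathcal{O}(\log\log m)$ phases at $\mathcal{O}(\log\log m)\cdot\OPT$.

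To finish, I would combine the contributions: the makespan of \LEPTda is at most the sum over phases of the phase lengths, plus the time to drain the $\mathcal{O}(1)$ machines (in expectation) that survive all $\mathcal{O}(\log\log m)$ phases, the latter drain time being $\mathcal{O}(\OPT)$ by the $\max_j\E[P_j] = \mathcal{O}(\OPT)$ bound together with a union bound over the few survivors. Summing, $\E[C_{\max}^{\LEPTda}] \le \sum_{i=1}^{\mathcal{O}(\log\log m)} \mathcal{O}(1+\kappa)\cdot\OPT + \mathcal{O}(\OPT) = \mathcal{O}(\log\log m)\cdot\OPT$, where the dependence on $\kappa$ is hidden in the constant since $\kappa$ is constant. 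The value of $\alpha$ is then pinned down by the requirement in the contraction step. One delicate point to handle carefully is that "remaining expected work" is itself a random quantity (it depends on which jobs have started), so the halving/squaring recursions should be stated in expectation and one should be careful that the LEPT ordering ensures the large jobs are dispatched first, so that what remains after a phase is genuinely small-expectation jobs amenable to the concentration bounds.
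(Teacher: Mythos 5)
Your proposal has the right target shape (a doubly logarithmic number of reassignment rounds, each costing $\mathcal{O}(\OPT)$, driven by a potential that decays doubly exponentially), but it contains two genuine gaps. First, the phase lengths: you propose phases ``of geometrically increasing length, roughly doubling each time,'' and then want $\sum_i \alpha^i\cdot(\text{base})$ over $\mathcal{O}(\log\log m)$ phases to be $\mathcal{O}(\log\log m)\cdot\OPT$. It cannot be: $\sum_{i=1}^{K}2^i=\Theta(2^K)$, which for $K=\Theta(\log\log m)$ is $\Theta(\log m)\cdot\OPT$, destroying the bound. The paper uses $k^*+1=\Theta(\log\log m)$ phases of \emph{identical} length $\delta+\alpha T$ with $T:=2\max\{\frac1m\sum_j\E[P_j],\max_j\E[P_j]\}\leq 2\OPT$, so the total is trivially $\mathcal{O}(\log\log m)\cdot\OPT$; no growing phases are needed.

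Second, and more importantly, the contraction mechanism you sketch does not give the squaring you need, and you flag this yourself as ``the main obstacle'' without resolving it. If each overloaded machine clears ``with constant probability,'' the expected number of survivors drops only by a constant factor per phase, yielding $\Theta(\log m)$ phases. The missing idea is that the quantity to track is the \emph{remaining expected volume} $\Xi_k\cdot Tm$, and that at each round this volume is redistributed in an \LEPTF fashion over a constant fraction $a\geq\frac12$ of machines (shown to be available with high probability via a Chernoff/union-bound induction). By the load-balance property of \LEPTF, each such machine receives expected load at most $2\xi T/a$, and hence by Markov's inequality fails to finish within the window $\alpha T$ with probability at most $\frac{2\xi}{a\alpha}$. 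The expected carried-over volume is therefore at most $m\cdot\frac{2\xi T}{a}\cdot\frac{2\xi}{a\alpha}=\mathcal{O}(\xi^2)\cdot Tm$: the squaring comes from the product of two factors \emph{each} proportional to $\xi$ (per-machine load and per-machine failure probability), not from any independence of ``chances to clear.'' Formally the paper packages this as a stochastic dominance of $\Xi_{k+1}$ by a scaled binomial and runs an induction $\gamma_{k+1}=\frac12\gamma_k^2$ with high-probability bounds (your worry about the remaining work being random is handled by conditioning on the realization of already-started jobs and using independence of the not-yet-started ones). Without this quadratic volume recursion, your argument as written only supports an $\mathcal{O}(\log m)$ guarantee.
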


 In the following, we show Theorem~\ref{theorem:loglogmApproximation} for $\alpha=33$. We note that we did not optimize the constants appearing in our calculation as our lower bound shows that $\log\log(m)$ is the correct order.
 Notice that it suffices to show the performance guarantee for $m$ large enough as for $m=\mathcal{O}(1)$ the trivial policy assigning all jobs to a single machine is a constant factor approximation. To prove the main theorem, we proceed as follows: First, we define and discuss properties of the fixed assignment policy \LEPTF as it lies at the heart of our policy called \LEPTda. After we give the formal definition of \LEPTda, we derive lower bounds on \OPT needed to show its performance guarantee. The remaining part is devoted to show Theorem~\ref{theorem:loglogmApproximation}. 
 The main idea of the proof is that the policy works
 over a sequence of reassignment periods; 
 at the beginning of each period, there is a constant
 fraction of available machines with high probability. 
 This can be used to show the following squaring effect:
 if the remaining
 volume of non-started jobs is $\epsilon \cdot m \cdot \OPT$
 in a period, it will be at most
 $\epsilon^2 \cdot m \cdot \OPT$ in the next period,
with high probability.
 
 \medskip
 Recall that the List Scheduling algorithm due to Graham~\cite{G66} with respect to a list of all jobs schedules the next job in the list on the next idle machine. Let us define the fixed assignment policy induced by list scheduling in LEPT order.

\begin{restatable}[The fixed assignment policy \LEPTF]{definition}{LEPTFixed}\label{definition:LEPTF}
 Let all jobs be arranged in non-increasing order of their expected processing times. \LEPTF is the fixed assignment policy that assigns the jobs in this order to the same machines as List Scheduling would yield for the deterministic instance in which the processing times are replaced by their expected value.
\end{restatable}

As shown by Sagnol, Schmidt genannt Waldschmidt and Tesch~\cite{SST18}, \LEPTF admits bounds on the expected load of any machine captured in the next lemma.

\begin{restatable}[\cite{SST18}: Section 3, Lemma 3]{lemma}{ExpectedLoadLEPT}\label{lemma:Expected_Load_LEPT}
	Given an assignment of jobs to machines induced by \LEPTF, let $\ell_i$ denote the expected load of machine $i$, i.e., the sum of expected processing times of the jobs assigned to $i$. Moreover, let $n_i$ denote the number of jobs assigned to $i$ and let $\ell:=\min_{i\in\mathcal{M}} \ell_i$. Then, for all $i\in\mathcal{M}$ we have
	$
	\ell \leq \ell_i \leq \frac{n_i}{n_i-1} \ell,
	$
	where $\frac{n_i}{n_i-1} = \frac{1}{0}:=+\infty$ whenever $n_i=1$. 
\end{restatable}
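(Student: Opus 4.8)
The plan is to reduce the statement to the classical load bound for List Scheduling, exploiting that, by \cref{definition:LEPTF}, the assignment produced by \LEPTF coincides with the schedule that List Scheduling computes on the deterministic instance in which every job $j$ has processing time $p_j := \E[P_j]$, with the jobs presented in non-increasing order of $p_j$. Under this identification $\ell_i$ is exactly the load of machine $i$ in that deterministic schedule, and $n_i$ the number of jobs it receives. The lower bound $\ell \le \ell_i$ holds by definition of $\ell = \min_{i'}\ell_{i'}$, so the content of the lemma is the upper bound.

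For the upper bound I would fix a machine $i$ and dispose of the boundary cases first: if $n_i = 0$ then $\ell = 0 = \ell_i$ and there is nothing to prove, and if $n_i = 1$ the right-hand side is $+\infty$ by the stated convention. So assume $n_i \ge 2$ and let $j$ be the last job that List Scheduling places on machine $i$. Two facts drive the argument. First, a greedy property: at the moment $j$ is assigned, machine $i$ has minimum load among all machines, so its load just before receiving $j$, which equals $\ell_i - p_j$, is at most the load of every other machine at that time; since loads only increase over the run of the algorithm, this yields $\ell_i - p_j \le \ell_{i'}$ for every machine $i'$, and in particular $\ell_i - p_j \le \ell$. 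Second, since jobs arrive in non-increasing order of their $p$-values and $j$ is the last job on $i$, the value $p_j$ is the smallest among the $n_i$ values of jobs on $i$, hence $n_i\, p_j \le \ell_i$, i.e.\ $p_j \le \ell_i / n_i$. Combining the two inequalities gives $\ell_i \le \ell + \ell_i/n_i$, and rearranging produces $\ell_i\bigl(1 - \tfrac{1}{n_i}\bigr) \le \ell$, that is $\ell_i \le \frac{n_i}{n_i-1}\,\ell$.

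The argument has no genuinely hard step — it is the textbook List Scheduling load estimate transported from processing times to expected processing times — but the point deserving care is the rigorous justification of the greedy property: one must spell out that the load of machine $i$ immediately before job $j$ is inserted is a valid lower bound for the final load of \emph{every} machine, which uses both that List Scheduling always assigns to a currently least-loaded machine and the monotonicity of machine loads throughout the algorithm. I would isolate this as a one-line intermediate claim and then feed it into the two-line computation above; the result is originally due to \cite{SST18}, and the proof sketched here is a self-contained restatement.
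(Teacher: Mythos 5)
Your proof is correct, and it is the standard LPT list-scheduling load argument: the greedy property gives $\ell_i - p_j \le \ell$ for the last job $j$ on machine $i$, the non-increasing order gives $p_j \le \ell_i/n_i$, and combining yields the bound. The paper does not reprove this lemma but imports it from \cite{SST18}, where essentially this same argument appears, so there is nothing to compare beyond noting that your derivation is the expected one.
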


We immediately obtain by Lemma~\ref{lemma:Expected_Load_LEPT} the following structure on \LEPTF.

\begin{restatable}{corollary}{LEPTPartition}\label{cor:LEPT_Partition}
	Given an assignment of jobs to machines induced by \LEPTF, we can partition the set of machines into two types of machines: Either there is only a single job assigned to a machine or the expected load of a machine is bounded by $2\ell$. Moreover, $\ell$ can be bounded from above by the averaged expected load. In particular, if $x$ denotes the total (remaining) expected load and $m'$ is a lower bound on the total number of machines $m$, then $2\ell\leq 2\cdot \frac{x}{m'}$.
\end{restatable}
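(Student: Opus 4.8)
The plan is to read everything off Lemma~\ref{lemma:Expected_Load_LEPT} with essentially no extra work. First I would note that the map $n\mapsto\frac{n}{n-1}$ is non-increasing on the integers $n\geq 2$ and takes the value $2$ at $n=2$. Hence, for every machine $i$ with $n_i\geq 2$, the upper bound $\ell_i\leq\frac{n_i}{n_i-1}\,\ell$ supplied by Lemma~\ref{lemma:Expected_Load_LEPT} already gives $\ell_i\leq 2\ell$. The machines not covered by this bound are exactly those with $n_i=1$ (where the convention $\frac{n_i}{n_i-1}=+\infty$ makes the bound vacuous), i.e.\ the machines carrying a single job. This is precisely the asserted partition of $\mathcal{M}$ into single-job machines and machines of expected load at most $2\ell$.

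For the remaining claim I would use that a minimum never exceeds an average: since $\ell=\min_{i\in\mathcal{M}}\ell_i$ over the $m$ machines,
\[
\ell \;\leq\; \frac{1}{m}\sum_{i\in\mathcal{M}}\ell_i \;=\; \frac{x}{m},
\]
where $x:=\sum_{i\in\mathcal{M}}\ell_i$ is the total expected load of the jobs in the \LEPTF assignment under consideration. If $m'\leq m$ is any lower bound on the number of machines, then $\frac{x}{m}\leq\frac{x}{m'}$, and multiplying by $2$ yields $2\ell\leq 2\cdot\frac{x}{m'}$, as claimed.

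I do not expect a genuine obstacle: the corollary is a bookkeeping consequence of Lemma~\ref{lemma:Expected_Load_LEPT}. The only point that deserves a line of care is the statement's intended later use on a \emph{residual} instance, after some jobs have been started — there one should verify that restricting the \LEPTF assignment to the not-yet-started jobs still meets the hypotheses of Lemma~\ref{lemma:Expected_Load_LEPT}, and read $x$ as the remaining expected load and $m'$ as a lower bound on the number of currently available machines. With that identification the two displayed inequalities close the argument.
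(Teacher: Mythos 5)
Your proof is correct and matches the paper's (implicit) reasoning exactly: the paper states the corollary as an immediate consequence of Lemma~\ref{lemma:Expected_Load_LEPT}, using precisely the monotonicity of $n\mapsto\frac{n}{n-1}$ for $n\geq 2$ and the fact that the minimum load is at most the average. Your closing remark about re-verifying the hypotheses on residual instances is a sensible point of care but requires no change to the argument.
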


Corollary~\ref{cor:LEPT_Partition} will play a central role in showing Theorem~\ref{theorem:loglogmApproximation} as \LEPTF constitutes an essential part of \LEPTda, which we define now.

\begin{restatable}[Policy \LEPTda]{definition}{LEPTDeltaAlpha}\label{definition:LEPTDeltaAlpha}
	Let $\delta,\alpha>0,\ k^*:= \left\lfloor\log_2\left(\frac{2}{3}(\log_2(m))+1\right)\right\rfloor+2$ and let $T:=2\cdot\max\{\frac{1}{m} \sum_{j\in \J} \E[P_j],\max_j \E[P_j]\}$.  Moreover, let $ \tau_k:=k (\delta+\alpha T)$ for $k\in[k^*+1]$. At the beginning the jobs are assigned according to \LEPTF. For $k=1,\ldots, k^*+1$, \LEPTda reassigns the jobs that have not started yet before $\tau_k$ to the machines that have processed all jobs assigned at previous iterations $0,\ldots,k-1$ by time $\tau_1,\ldots,\tau_k$, respectively, according to \LEPTF. The reassigned jobs may start at time $\tau_k+\delta$ at the earliest.
\end{restatable}

We note that in practice it makes sense to use all available machines at each iteration instead of the machines that were available in each previous iteration. Although our policy is limited, we show that in its execution a constant fraction of machines is available in each iteration with high probability. It also simplifies our analysis and matches the bound shown in the next section. Furthermore, observe that \LEPTda is both a $\delta$-delay policy and a $(\delta+\alpha T)$-shift policy. Next, let us introduce some quantities which will turn out to be helpful to analyze \LEPTda.

\begin{restatable}{definition}{XiA}\label{notation:Xi_A}
	Let $\Xi_k$ denote the random variable describing the total expected processing time of the remaining jobs which have not been started at time $<\tau_k$ divided by $Tm$.  Moreover, let $A_k$ denote the random variable describing the fraction of machines which are available at each time $\tau_1,\ldots,\tau_k$, i.e., the machines have completed all jobs assigned in each iteration $0,\ldots,k-1$.
\end{restatable}

\begin{figure}[htb]
	\centering
	\includegraphics[page=1,scale=0.71]{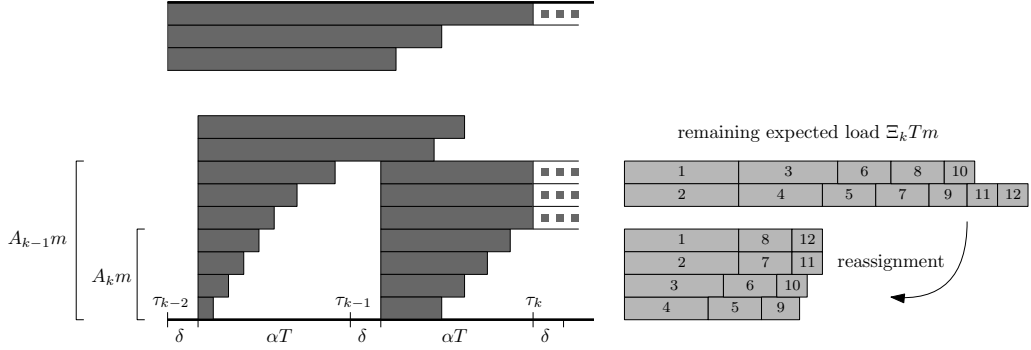}
	\caption{Snippet of $\LEPTda$: Realizations of jobs observed up to time $\tau_k$ are depicted by rectangles in dark grey; the running jobs non-completed by the time of the snippet are indicated by squared dots in dark grey; the expected processing times of the jobs that did not start yet are depicted in light gray.
	The remaining expected processing time of the 
	twelve light gray jobs is $\Xi_k Tm$.
	These jobs are reassigned in an \LEPTF fashion
	to the $A_k m$ machines at
	the bottom at time $\tau_k$, and the first job of each
	newly formed queue will start at time $\tau_k+\delta$.
    }
	\label{fig:LEPT_delta_alpha}
\end{figure}

A snippet of $\LEPTda$ together with the introduced notation is illustrated in Figure~\ref{fig:LEPT_delta_alpha}.

Observe that the randomness of $\Xi_k$ occurs only in the set of remaining jobs. We begin with some simple observations.

\begin{restatable}{observation}{PropertiesXiA}\label{observation:Properties_Xi_A}
	For any $k$, we have $ \Xi_k\leq \Xi_{k-1}\leq 1$ and $A_k\leq A_{k-1}\leq 1$ almost surely.
\end{restatable}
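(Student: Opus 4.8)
The plan is to verify the four inequalities directly from the definitions of $\Xi_k$ and $A_k$, observing that monotonicity is essentially built into the construction of \LEPTda. First I would handle the upper bounds $\Xi_0 \le 1$ and $A_k \le 1$. For $\Xi_0$ (or, depending on indexing, the first index): at time $0$ no job has started, so the total remaining expected processing time is $\sum_{j\in\J}\E[P_j]$, and dividing by $Tm$ gives $\frac{1}{Tm}\sum_{j\in\J}\E[P_j]$; since $T = 2\max\{\frac1m\sum_{j}\E[P_j], \max_j\E[P_j]\} \ge \frac2m\sum_j\E[P_j]$, this quotient is at most $\frac12 \le 1$. The bound $A_k \le 1$ is immediate since $A_k$ is a fraction of the $m$ machines.

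Next I would argue the two monotonicity statements, both of which hold pathwise. For $A_k \le A_{k-1}$: by Definition~\ref{definition:LEPTDeltaAlpha}, a machine counted in $A_k$ must have completed all jobs assigned to it in iterations $0,\dots,k-1$ by the respective times $\tau_1,\dots,\tau_k$; in particular it has completed all jobs from iterations $0,\dots,k-2$ by times $\tau_1,\dots,\tau_{k-1}$, which is exactly the condition to be counted in $A_{k-1}$. Hence the set of machines available at all of $\tau_1,\dots,\tau_k$ is a subset of those available at all of $\tau_1,\dots,\tau_{k-1}$, and the claim follows after dividing by $m$. For $\Xi_k \le \Xi_{k-1}$: the set of jobs not started before $\tau_k$ is a subset of the set of jobs not started before $\tau_{k-1}$, because $\tau_{k-1} < \tau_k$ and once a job starts it never becomes ``unstarted''; since expected processing times are nonnegative, summing over the smaller index set yields a smaller total, and dividing both by $Tm$ preserves the inequality. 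Combining with the base cases gives $\Xi_k \le \Xi_{k-1} \le \dots \le \Xi_0 \le 1$ and $A_k \le A_{k-1} \le \dots \le A_0 \le 1$ almost surely.

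I do not anticipate a genuine obstacle here; the only point requiring a little care is the bookkeeping around the index range and the ``$<\tau_k$'' convention in Definition~\ref{notation:Xi_A} (a job that starts exactly at $\tau_k$ should be treated consistently in the $\Xi_{k}$ versus $\Xi_{k-1}$ comparison), but since $\tau_{k-1} < \tau_k$ strictly, the nesting of the job sets is unaffected by how the boundary case is resolved. The statement is purely structural and uses no probabilistic input beyond nonnegativity of the $P_j$.
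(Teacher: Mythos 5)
Your proof is correct and matches the intended reasoning: the paper states this as an immediate observation without proof, and your pathwise verification (nesting of the job sets and machine sets as $k$ increases, plus $\sum_j \E[P_j] \le Tm/2$ for the upper bound) is exactly the argument the authors leave implicit. The indexing caveat you raise is harmless, since the chain of inequalities holds for whichever base index the definition starts at.
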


 Next, we want to discuss lower bounds on the expected makespan of an optimal non-anticipatory policy. The first one justifies the use of $T$ in the definition of \LEPTda.

\begin{restatable}{lemma}{LowerBoundT}\label{lemma:LowerBound_T}
	Let $T:=2\cdot\max\{\frac{1}{m} \sum_{j\in \J} \E[P_j],\max_j \E[P_j] \}$ and $\ell$ be defined as in Lemma~\ref{lemma:Expected_Load_LEPT}. Then, we have
	$
	2\ell\leq T \leq 2\cdot \OPT.
	$
\end{restatable}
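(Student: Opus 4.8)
The statement to prove is the chain $2\ell \leq T \leq 2\cdot\OPT$, where $T = 2\max\{\frac{1}{m}\sum_{j\in\J}\E[P_j],\ \max_j\E[P_j]\}$ and $\ell$ is the minimum expected load over machines in the \LEPTF assignment.

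The plan is to handle the two inequalities separately, as they are essentially independent.

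For the right inequality $T \leq 2\cdot\OPT$, I would establish two standard lower bounds on the expected makespan of any non-anticipatory policy. First, $\OPT \geq \max_j \E[P_j]$: in any policy every job $j$ must be fully processed, so $C_{\max} \geq P_j$ pointwise for each fixed $j$, hence $\E[C_{\max}] \geq \E[P_j]$; taking the maximum over $j$ gives the claim. Second, $\OPT \geq \frac{1}{m}\sum_{j\in\J}\E[P_j]$: the total work $\sum_j P_j$ is distributed over $m$ machines, so at least one machine has load at least $\frac{1}{m}\sum_j P_j$, hence $C_{\max} \geq \frac{1}{m}\sum_j P_j$ pointwise; taking expectations and using linearity of expectation gives $\OPT \geq \frac{1}{m}\sum_j \E[P_j]$. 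Combining, $\max\{\frac{1}{m}\sum_j\E[P_j],\ \max_j\E[P_j]\} \leq \OPT$, and multiplying by $2$ yields $T \leq 2\cdot\OPT$.

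For the left inequality $2\ell \leq T$, I would invoke Corollary~\ref{cor:LEPT_Partition} (or directly Lemma~\ref{lemma:Expected_Load_LEPT}): the minimum expected load $\ell$ is bounded above by the average expected load, i.e. $\ell \leq \frac{1}{m}\sum_{j\in\J}\E[P_j]$, since the minimum of the $\ell_i$ never exceeds their mean and the $\ell_i$ sum to $\sum_j \E[P_j]$. Therefore $2\ell \leq \frac{2}{m}\sum_j\E[P_j] \leq T$ by definition of $T$ (as $T$ is twice a maximum that includes this average term). This requires only that at least one machine receives a job, which holds whenever $\J\neq\emptyset$; the degenerate empty-instance case is trivial.

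I do not anticipate a genuine obstacle here: both inequalities are routine ``volume vs.\ makespan'' and ``minimum vs.\ average'' arguments. The only mild care needed is the pointwise-then-expectation order in the two lower bounds on \OPT, which is valid because a non-anticipatory policy still produces, for every realization, a feasible schedule obeying the work and single-job lower bounds; and the observation that the average expected load equals $\frac{1}{m}\sum_j\E[P_j]$ regardless of how \LEPTF splits the jobs, so the bound on $\ell$ does not depend on the structure of the assignment.
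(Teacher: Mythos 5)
Your proposal is correct and follows essentially the same route as the paper's proof: the bound $2\ell\leq T$ via Corollary~\ref{cor:LEPT_Partition} ($\ell$ is at most the average expected load), and $T\leq 2\cdot\OPT$ via the two standard lower bounds $\OPT\geq\frac{1}{m}\sum_{j\in\J}\E[P_j]$ (pointwise volume bound, then expectation) and $\OPT\geq\max_j\E[P_j]$ (non-preemptive processing of each job). Your write-up is in fact slightly more explicit than the paper's on the second lower bound; no gaps.
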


\begin{proof}
	By Corollary~\ref{cor:LEPT_Partition} we immediately obtain the first inequality as $\ell$ is a lower bound on the averaged load $\frac{1}{m} \sum_{j\in \J} \E[P_j]$. Clearly, for each realization $\boldsymbol{p}$ the makespan is bounded from below by $\frac{1}{m} \sum_{j\in \J} p_j$. Hence, taking expectations we obtain $\OPT\geq \frac{1}{m}\sum_{j\in \J}\E[P_j]$.
	Lastly, in any non-anticipatory policy obviously all jobs must be scheduled non-preemptively. Therefore, $\max_j \E[P_j]$ is another lower bound on \OPT.
\end{proof}

We obtain another lower bound when only at most $m$ jobs have to be scheduled. 

\begin{restatable}{lemma}{LowerBoundmJobs}\label{lemma:LowerBound_m_jobs}
	We have
	$
	\E \left[\max_{j\in \J}P_j\right] \leq OPT.
	$
\end{restatable}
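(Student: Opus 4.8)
The plan is to establish the bound pointwise, for every realization of the processing times, and then take expectations --- just as in the last step of the proof of Lemma~\ref{lemma:LowerBound_T}, except that here the maximum is kept \emph{inside} the expectation. The only ingredient is the elementary fact that in any feasible schedule each job is processed without interruption and cannot start before time~$0$, so its completion time is at least its own processing time.

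First I would fix an arbitrary realization $\boldsymbol{p}=(p_j)_{j\in\J}$ and run \OPT on it, letting $C_j$ be the resulting completion time of job $j$. Since job $j$ occupies a machine for $p_j$ consecutive units of time and begins at some time $\ge 0$, we have $C_j\ge p_j$ for every $j\in\J$; taking the maximum over $j$ gives $C_{\max}(\boldsymbol{p})=\max_{j\in\J}C_j\ge\max_{j\in\J}p_j$, where on the right-hand side the maximiser is the (realization-dependent) index of a longest job. Since this holds for every realization, it holds almost surely, and passing to expectations yields $\OPT=\E[C_{\max}]\ge\E\!\left[\max_{j\in\J}P_j\right]$, which is the claim.

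I do not expect any real obstacle here. The one point worth making explicit is that this bound is strictly stronger than the lower bound $\max_{j\in\J}\E[P_j]\le\OPT$ obtained in Lemma~\ref{lemma:LowerBound_T} (indeed $\E[\max_{j\in\J}P_j]\ge\max_{j\in\J}\E[P_j]$ by monotonicity of the expectation), and that the strengthening comes for free precisely because the inequality $C_j\ge p_j$ is already valid realization by realization, so the maximum can be taken before, rather than after, passing to the expectation. The parenthetical remark preceding the lemma --- "when only at most $m$ jobs have to be scheduled" --- merely signals the regime in which this bound is (essentially) tight and will later be applied, namely when at most $m$ jobs remain and an optimal policy can process them all simultaneously; the inequality itself needs no assumption on the number of jobs.
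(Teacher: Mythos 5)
Your argument is correct and is exactly the paper's proof: the pointwise bound $C_{\max}(\boldsymbol{p})\geq\max_{j\in\J}p_j$ for every realization, followed by taking expectations. The additional remarks on why this strengthens $\max_j\E[P_j]\leq\OPT$ are accurate but not needed.
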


\begin{proof}
	For any realization $\boldsymbol{p}$, a lower bound on the optimal makepsan for $\boldsymbol{p}$ is $\max_{j\in\J} p_j$. Taking expectations yields the statement.
\end{proof}

We have now set all necessary definitions and lower bounds on the cost of an optimal non-anticipatory policy to devote the remaining part of this section to prove Theorem~\ref{theorem:loglogmApproximation}. We first derive an upper bound on \LEPTda in terms of $\Xi_{k^*+1}$.

\begin{restatable}[]{lemma}{LEPTDeltaAlphaUpperBound}\label{lemma:LEPTDeltaAlpha_Upper_Bound}
	We have that $\LEPTda\leq\tau_{k^*+1}+\delta+\OPT+ \E[\Xi_{k^*+1}] \cdot Tm$.
\end{restatable}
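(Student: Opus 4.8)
The idea is to bound the makespan of \LEPTda by tracking when the last job can possibly finish. I would split the time axis into the $k^*+1$ reassignment rounds plus a final ``clean-up'' phase, and argue that whatever jobs are still unstarted after the last reassignment at time $\tau_{k^*+1}$ contribute at most a controlled amount on top of $\tau_{k^*+1}+\delta$.

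First I would observe that by definition of \LEPTda, every job that has not been started strictly before $\tau_{k^*+1}$ is reassigned at time $\tau_{k^*+1}$ in an \LEPTF fashion to the machines that became available, and may begin at $\tau_{k^*+1}+\delta$ at the earliest. So the makespan is at most $\tau_{k^*+1}+\delta$ plus the time it takes to process, from that point on, the queues formed at this last round. The total expected processing time of these remaining jobs is exactly $\Xi_{k^*+1}\,Tm$ by Definition~\ref{notation:Xi_A}. Now I would invoke Corollary~\ref{cor:LEPT_Partition}: on the $A_{k^*+1} m$ available machines, the \LEPTF assignment partitions them into singleton-job machines and machines whose expected load is at most $2\ell'$, where $\ell'$ is the minimum expected load among those machines; and $2\ell' \le 2\cdot\frac{\Xi_{k^*+1}Tm}{A_{k^*+1}m}$ when $A_{k^*+1}m$ is used as the lower bound on the machine count.

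The subtlety here is that $\Xi_{k^*+1}$ is random, and so is $A_{k^*+1}$, so one cannot simply plug in expectations naively; I expect this to be the main obstacle. The cleanest way around it is to bound the makespan \emph{deterministically in each realization} by $\tau_{k^*+1}+\delta$ plus two terms: (i) the longest single remaining job, which is at most $\max_{j\in\J} P_j$, handled by Lemma~\ref{lemma:LowerBound_m_jobs} giving an \OPT in expectation; and (ii) the processing on a non-singleton machine, which is at most its realized load, and whose expectation over the whole process is at most $\Xi_{k^*+1}Tm$ in the worst case — more carefully, one writes $C_{\max} \le \tau_{k^*+1}+\delta + \max_j P_j + (\text{realized load of the heaviest non-singleton queue})$, and the last quantity is crudely at most the total realized remaining volume, whose expectation is $\E[\Xi_{k^*+1}]\,Tm$. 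Taking expectations then yields $\LEPTda \le \tau_{k^*+1}+\delta+\OPT+\E[\Xi_{k^*+1}]\,Tm$, as claimed.

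An alternative, perhaps the route the authors take, is to not even track the last round's load carefully: simply note that after $\tau_{k^*+1}+\delta$ the remaining work is processed by \emph{some} non-idling schedule on the available machines, and for any realization the makespan of such a schedule from that point is at most (a constant times) the average remaining load plus the largest remaining job — i.e., at most $\OPT$ (for the max-job part, via Lemma~\ref{lemma:LowerBound_m_jobs}) plus the total remaining volume $\Xi_{k^*+1}Tm$ (a very loose bound on the average-load term, since one machine suffices to absorb it). Either way, the key inputs are the definition of the $\tau_k$'s, the definition of $\Xi_{k^*+1}$, the structural Corollary~\ref{cor:LEPT_Partition}, and Lemma~\ref{lemma:LowerBound_m_jobs}; the work is essentially bookkeeping of the timeline, with the only real care needed in handling the randomness of $\Xi_{k^*+1}$ by passing to a per-realization inequality before taking expectations.
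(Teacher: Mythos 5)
Your proposal is correct and follows essentially the same route as the paper: a per-realization bound $C_{\max}(\boldsymbol{p})\leq \tau_{k^*+1}+\delta+\max_{j}p_j+\sum_{j\ \text{remaining}}p_j$, followed by taking expectations via Lemma~\ref{lemma:LowerBound_m_jobs} and the independence-based identity $\E\bigl[\sum_{j\ \text{remaining}}P_j\bigr]=\E[\Xi_{k^*+1}]\cdot Tm$ (the paper packages this as domination by an auxiliary policy that runs all remaining jobs on a single machine, which is exactly your ``alternative'' route, and Corollary~\ref{cor:LEPT_Partition} is not actually needed). One small correction to your narrative: the $\max_j p_j$ term is required not for the singleton last-round queues (their contribution is already absorbed by the remaining-volume term) but for jobs started \emph{before} $\tau_{k^*+1}$ that may still be running after $\tau_{k^*+1}+\delta$ --- your displayed inequality happens to cover them, but your accounting attributes the term to the wrong source.
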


\begin{proof}
	Let $C(\boldsymbol{p})$ denote the first point in time in realization $\boldsymbol{p}$ in which all jobs that started before $\tau_{k^*+1}$ are completed. We consider an auxiliary policy $\Pi$ which is identical to \LEPTda up to time $\tau_{k^*+1}$ and starts processing the remaining jobs at time $\max\{\tau_{k^*+1} + \delta, C(\boldsymbol{p})\}$ on an arbitrary single machine. Clearly, $\LEPTda \leq \Pi$ since \LEPTda starts the remaining jobs at time $\tau_{k^*+1} + \delta$, hence, not later than $\Pi$ and uses at least as many machines as $\Pi$. For any realization $\boldsymbol{p}$ and the starting time of the remaining jobs $S(\boldsymbol{p})$ we have 
	\[
	S(\boldsymbol{p})=\tau_{k^*+1}+\max\{\delta, C(\boldsymbol{p})-\tau_{k^*+1}\}\leq \tau_{k^*+1}+\max\{\delta, \max_{j\in\J} p_j\}\leq \tau_{k^*+1}+\delta + \max_{j\in\J} p_j.
	\]
	Hence, by Lemma~\ref{lemma:LowerBound_m_jobs} the expected starting time is at most $\tau_{k^*+1}+\delta + \OPT$. By definition of $\Xi_{k^*+1}$ the expected remaining load is exactly $\E[\Xi_{k^*+1}] \cdot Tm$.
\end{proof}

Due to the derived upper bound it only remains to bound $\E[\Xi_{k^*+1}]$. The next central lemma provides an upper bound on the probability that this quantity is large.

\begin{restatable}[]{lemma}{ProbabilityUB}\label{lemma:Probability_Upper_Bound}
	There exists 
	$\alpha>1$ such that we have 
	$
	\Pro\left(\Xi_{k^*+1}>\frac{1}{m}\right)=o\left(\frac{1}{m}\right).
	$
\end{restatable}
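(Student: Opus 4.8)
The plan is to follow, over the $k^{*}+1$ reassignment periods, the remaining normalized load $\Xi_k$ together with the available fraction $A_k$, and to show that while a constant fraction of machines stays available $\Xi_k$ obeys a \emph{squaring} recursion $\Xi_{k+1}\lesssim\Xi_k^2/\alpha$ whereas $A_k$ decreases only slightly. Let $\mathcal F_k$ be the $\sigma$-algebra of everything observed up to time $\tau_k$; by the remark after Definition~\ref{notation:Xi_A}, the \LEPTF-(re)assignment performed at time $\tau_k$ (the initial \LEPTF assignment when $k=0$) and the variables $\Xi_0,\dots,\Xi_k,A_0,\dots,A_k$ are $\mathcal F_k$-measurable, while the processing times of the jobs not yet started are independent of $\mathcal F_k$ and keep their original law. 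First I would derive one-step estimates: condition on $\mathcal F_k$ and assume $A_k>0$. The remaining jobs, of total expected processing time $\Xi_kTm$, are placed in an \LEPTF fashion onto the $A_km$ available machines, so by Corollary~\ref{cor:LEPT_Partition} (applied with remaining load $x=\Xi_kTm$ and $m'=A_km$) each such machine is either a one-job machine or has expected load at most $2\Xi_kT/A_k$. The first job of every queue starts no later than $\tau_k+\delta<\tau_{k+1}$ and hence never contributes to $\Xi_{k+1}$ (in particular one-job machines are irrelevant for $\Xi_{k+1}$), whereas a later job $j$ starts only after $\tau_{k+1}$ if its queue-predecessors, of total expected load $L\le 2\Xi_kT/A_k$, run past their time budget $\tau_{k+1}-(\tau_k+\delta)=\alpha T$ (the budget is even larger when $k=0$), which by Markov's inequality has conditional probability at most $L/(\alpha T)$. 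Summing $\E[P_j]$ against this bound over the jobs of a multi-job machine of load $L_i$ and then over all machines (using $\sum_iL_i\le\Xi_kTm$ and $L_i\le 2\Xi_kT/A_k$) yields
\[
\E[\Xi_{k+1}\mid\mathcal F_k]\ \le\ \frac{\Xi_k^{\,2}}{\alpha A_k},
\]
and the same estimate shows that $\E[N_{k+1}\mid\mathcal F_k]\le 2\Xi_km/\alpha$, where $N_{k+1}$ is the number of machines becoming unavailable between $\tau_k$ and $\tau_{k+1}$ (split the loads of one-job and of multi-job machines; each family sums to at most $\Xi_kTm$).

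Next I would turn these into tail bounds. Conditionally on $\mathcal F_k$, $\Xi_{k+1}$ is a sum over machines of independent contributions lying in $[0,\,2\Xi_k/(A_km)]$, and $N_{k+1}$ is a sum of independent indicators, so a Chernoff bound gives, for every $\theta\ge 2\,\E[\Xi_{k+1}\mid\mathcal F_k]$ and every $\theta'\ge 2\,\E[N_{k+1}\mid\mathcal F_k]$,
\[
\Pro\big(\Xi_{k+1}>\theta\mid\mathcal F_k\big)\ \le\ \exp\!\Big(-\tfrac{\theta A_k m}{16\,\Xi_k}\Big),\qquad
\Pro\big(N_{k+1}>\theta'\mid\mathcal F_k\big)\ \le\ e^{-\theta'/8}.
\]
Fix $\alpha=33$ and set $\xi_0:=\tfrac12$, $\xi_{k+1}:=\tfrac4\alpha\xi_k^2$, so that $\xi_k=\tfrac\alpha4(2/\alpha)^{2^k}$ decreases doubly exponentially with $\sum_{k\ge0}\xi_k\le1$, and $a_k:=1-\tfrac4\alpha\sum_{k'<k}\xi_{k'}$, so that $a_k\ge 1-\tfrac4\alpha>\tfrac12$; let $k_0$ be the least index with $\xi_{k_0}\le m^{-1/2}$. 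Introduce the $\mathcal F_k$-measurable good events $G_k:=\{\Xi_{k'}\le\xi_{k'}\text{ and }A_{k'}\ge a_{k'}\text{ for all }k'\le k\}$; then $G_0$ holds surely, since $\Xi_0\le\tfrac12$ (because $T\ge\tfrac2m\sum_j\E[P_j]$ by definition of $T$) and $A_0=1$. On $G_k$ we have $A_k>\tfrac12$, so the first paragraph gives $\E[\Xi_{k+1}\mid\mathcal F_k]\le 2\Xi_k^2/\alpha\le\tfrac12\xi_{k+1}$ and $\E[N_{k+1}\mid\mathcal F_k]\le 2\xi_km/\alpha$; applying the two Chernoff bounds with $\theta=\xi_{k+1}$ and $\theta'=4\xi_km/\alpha$ — and noting $A_{k+1}=A_k-N_{k+1}/m\ge a_k-4\xi_k/\alpha=a_{k+1}$ whenever $N_{k+1}\le 4\xi_km/\alpha$ — yields $\mathbf 1_{G_k}\,\Pro(G_{k+1}^c\mid\mathcal F_k)\le e^{-\Omega(m\xi_k/\alpha)}$. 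A union bound over $k=0,\dots,k_0-1$, using $\xi_{k_0-1}>m^{-1/2}$ and $k_0=O(\log\log m)$, gives $\Pro(G_{k_0}^c)\le k_0\,e^{-\Omega(m^{1/2}/\alpha)}=o(1/m)$.

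It remains to cross from scale $m^{-1/2}$ down to scale $1/m$ in one more period, and this is where the concentration pays off. On $G_{k_0}$ we have $\Xi_{k_0}\le m^{-1/2}$ and $A_{k_0}>\tfrac12$, hence $\E[\Xi_{k_0+1}\mid\mathcal F_{k_0}]\le 2\Xi_{k_0}^2/\alpha\le 2/(\alpha m)<\tfrac1{2m}$, while every per-machine contribution to $\Xi_{k_0+1}$ is at most $2\Xi_{k_0}/(A_{k_0}m)\le 4m^{-3/2}$; the first Chernoff bound with $\theta=1/m$ therefore gives $\Pro(\Xi_{k_0+1}>1/m\mid\mathcal F_{k_0})\le e^{-\Omega(m^{1/2})}$ on $G_{k_0}$. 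A short computation from $k^{*}=\lfloor\log_2(\tfrac23\log_2 m+1)\rfloor+2$ shows $k_0+1\le k^{*}+1$ once $m$ is large, so since $(\Xi_k)$ is non-increasing (Observation~\ref{observation:Properties_Xi_A}) we get $\{\Xi_{k^{*}+1}>1/m\}\subseteq G_{k_0}^c\cup\big(G_{k_0}\cap\{\Xi_{k_0+1}>1/m\}\big)$, whence $\Pro(\Xi_{k^{*}+1}>1/m)\le o(1/m)+e^{-\Omega(m^{1/2})}=o(1/m)$; thus $\alpha=33$ (indeed any sufficiently large constant) works.

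The delicate point is exactly the coupling of the expectation recursion with the tail bounds. Markov's inequality alone loses a polynomial-in-$m$ factor at every reassignment, which over $\Theta(\log\log m)$ reassignments is ruinous; one therefore has to exploit that, given $\mathcal F_k$, both $\Xi_{k+1}$ and $N_{k+1}$ are sums of independent per-machine quantities and invoke a Chernoff bound, whose exponent scales like $m\Xi_k/\alpha$ and so is only useful while $\Xi_k$ stays comfortably above $1/m$. That is why the doubly exponential descent is halted at scale $m^{-1/2}$ and the last factor $\sqrt m$ is gained in a single, separately analyzed period, where the tiny per-machine contribution of order $m^{-3/2}$ makes the Chernoff tail bite at the $1/m$ scale. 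The remaining verifications — that $k_0+1\le k^{*}+1$, that $a_k>\tfrac12$, and that the constants hidden in the $\Omega(\cdot)$'s are positive — are routine bookkeeping, and are exactly where the concrete choices $\alpha=33$ and the stated form of $k^{*}$ enter.
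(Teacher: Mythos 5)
Your argument is correct and follows essentially the same route as the paper: a doubly exponential (squaring) descent of the normalized remaining load coupled with high-probability control of the fraction of available machines, established period by period via per-machine independence and Chernoff bounds, with the descent halted at a polynomial scale ($m^{-1/2}$ here, $m^{-2/3}$ in the paper) and one final, separately analyzed period crossing down to $1/m$. The only deviations are presentational or cosmetic — you phrase the per-period step through conditional expectations and good events where the paper uses explicit stochastic-dominance lemmas, you bound the number of machines turning unavailable by summing per-machine Markov bounds directly (avoiding the paper's detour through $\Xi_{k-1}$ to control the one-job machines) — together with an immaterial factor-of-two slip in the one-step bound, which should read $\E[\Xi_{k+1}\mid\mathcal F_k]\le 2\Xi_k^2/(\alpha A_k)$ and is absorbed by taking $\alpha$ slightly larger.
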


Let us assume for a moment that Lemma~\ref{lemma:Probability_Upper_Bound} is true. We then can prove the main theorem.

\begin{proof}[Proof of Theorem~\ref{theorem:loglogmApproximation}]
	By Lemmas~\ref{lemma:LEPTDeltaAlpha_Upper_Bound} and~\ref{lemma:Probability_Upper_Bound} and by the law of total expectation we obtain
	\begin{alignat*}{2}
	& \LEPTda&& \leq \tau_{k^*+1}+\delta+\OPT+ \E[\Xi_{k^*+1}] \cdot Tm \\
	& && =  \tau_{k^*+1}+\delta+\OPT+ \underbrace{\mathbb{P}\left(\Xi_{k^*+1} \leq \frac{1}{m}\right)}_{\leq 1}\cdot\underbrace{\E\left[\Xi_{k^*+1}\Big|\Xi_{k^*+1} \leq \frac{1}{m}\right] \cdot Tm}_{\leq T}  \\
	& && \quad + \underbrace{\E\left[\Xi_{k^*+1}\Big|\Xi_{k^*+1} > \frac{1}{m}\right]}_{\leq 1} \cdot \underbrace{\mathbb{P}\left(\Xi_{k^*+1} > \frac{1}{m}\right)\cdot m}_{=o(1)}\cdot T  \\
	& && \leq  (\alpha T +\delta)\cdot \mathcal{O}(\log\log (m))+\delta+\OPT+ T + o(1)\cdot T \\
	& && = \mathcal{O}(\log\log (m))\cdot \OPT,
	\end{alignat*}
	where the last step follows by Lemma~\ref{lemma:LowerBound_T} and the choice of $\delta$ and $\alpha$.

\end{proof}

Let us return to the proof of Lemma~\ref{lemma:Probability_Upper_Bound}. The high level idea is to use induction to show that in each iteration there is a constant fraction of available machines with high probability and hence, the remaining expected load after $k^*$ iterations is small with high probability. The first lemma provides a stochastic dominance relation of $\Xi_k$ and $A_k$ to binomially distributed random variables in order to simplify calculations. 

\begin{restatable}[]{lemma}{StochDomUB}\label{lemma:Stoch_Dom_Upper_Bound}
	 For all $u,\xi \in (0,1)$, for all $a \in \left[\frac{1}{2},1\right]$ and for any iteration $k$ we have,
	\begin{equation}\label{dominanceL}
	\mathbb{P}(\Xi_{k+1}\leq u|
	\Xi_k\leq\xi, A_k\geq a)
	\ \geq \
	\mathbb{P}\left( \frac{2 \xi}{a m} Y \leq u\right),
	 \text{ where }\quad
	Y\sim \operatorname{Bin}\left(m, \frac{2\xi}{a\alpha}\right)
	\end{equation}
	and
	\begin{equation}\label{dominanceA}
	\mathbb{P}(A_{k+1}\geq u|
	\Xi_{k-1}\leq\xi, A_k\geq a)
	\ \geq \
	\mathbb{P}\left( \frac{1}{m} Z \geq u\right),
	\text{ where }\quad
	Z\sim \operatorname{Bin}\left(\lceil a m\rceil,1-\frac{2\xi}{a\alpha}\right).
	\end{equation}
\end{restatable}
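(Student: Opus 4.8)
\textbf{Proof plan for Lemma~\ref{lemma:Stoch_Dom_Upper_Bound}.}

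The plan is to analyze a single reassignment step, conditioning on the event $\{\Xi_k \leq \xi,\, A_k \geq a\}$ (respectively $\{\Xi_{k-1}\leq\xi,\,A_k\geq a\}$ for the second bound), and to replace the true, correlated quantities by a worst-case coupling with independent Bernoulli trials. First I would fix the reassignment time $\tau_k$ and set up notation: on the conditioning event there are at least $am$ available machines, and the remaining expected load of the not-yet-started jobs is at most $\xi Tm$. By Corollary~\ref{cor:LEPT_Partition}, when these jobs are reassigned via \LEPTF to (a subset of exactly $\lceil am\rceil$ of) the available machines, each resulting queue either consists of a single job or has expected load at most $2\cdot\frac{\xi Tm}{\lceil am\rceil}\leq \frac{2\xi T}{a}$. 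Crucially, each such queue is then given a full window of length $\alpha T$ before the next reassignment time $\tau_{k+1}=\tau_k+(\delta+\alpha T)$ (the jobs start at $\tau_k+\delta$), so a machine fails to finish its queue by $\tau_{k+1}$ only if its realized load exceeds $\alpha T$.

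For the bound \eqref{dominanceA} on $A_{k+1}$: a machine among the $\lceil am\rceil$ chosen ones stays available at $\tau_{k+1}$ iff its queue's realized processing time is at most $\alpha T$; by Markov's inequality applied to a queue of expected load $\le \frac{2\xi T}{a}$, this happens with probability at least $1-\frac{2\xi}{a\alpha}$, and these events are independent across machines since processing times are independent and the queues are disjoint. (The single-job queues only help, since a single job has expected length $\le T$ by the definition of $T$, which is even better.) Hence the number of machines still available at $\tau_{k+1}$ stochastically dominates $Z\sim\operatorname{Bin}(\lceil am\rceil, 1-\frac{2\xi}{a\alpha})$; note one must use $\Xi_{k-1}$ rather than $\Xi_k$ here because $A_{k+1}$ depends on jobs assigned at iteration $k-1$ and earlier, whose load is controlled by $\Xi_{k-1}$, while the iteration-$k$ reassignment is what is being executed. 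For the bound \eqref{dominanceL} on $\Xi_{k+1}$: each job in the remaining pool contributes to $\Xi_{k+1}$ only if the machine carrying its queue failed to finish by $\tau_{k+1}$; more cleanly, the total remaining expected load at $\tau_{k+1}$ is at most $\frac{2\xi T}{a}$ times the number of queues (out of $\le m$) that did not complete, and by the same Markov estimate that count is stochastically dominated by $Y\sim\operatorname{Bin}(m, \frac{2\xi}{a\alpha})$; dividing by $Tm$ gives $\Xi_{k+1}\le \frac{2\xi}{am}Y$ in the stochastic-dominance sense, which yields \eqref{dominanceL}.

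The main obstacle, and the step that needs the most care, is justifying the independence/stochastic-dominance coupling rigorously: the set of available machines, which machines receive which jobs, and the queue loads are all determined by the random realizations up to time $\tau_k$, so one must argue that after conditioning on $\{\Xi_k\le\xi, A_k\ge a\}$ the relevant queue-completion events are governed by fresh, mutually independent randomness (the processing times of the reassigned jobs, which have not been observed yet), and that shrinking $a$ toward its lower bound $\tfrac12$ and $\xi$ toward its upper bound only makes the binomial parameters worse in the right direction — so the worst case over the conditioning event is exactly the stated binomial. A secondary subtlety is handling the ceiling $\lceil am\rceil$ versus $am$ and the single-job queues uniformly; I would absorb both into the bound by noting that using at least $am$ machines and expected load at most $\frac{2\xi T}{a}$ per multi-job queue (and $\le T\le \frac{2\xi T}{a}$ when $\xi\ge a/2$, else Markov is trivial) suffices, and that the Markov bound $\frac{2\xi}{a\alpha}\in(0,1)$ for $\alpha>1$ large enough given $\xi,a$ range over bounded sets. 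Everything else is a direct application of Markov's inequality and the definition of stochastic dominance.
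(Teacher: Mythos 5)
Your overall strategy coincides with the paper's: partition/condition on the realized history so that the unstarted jobs carry fresh independent randomness, bound each queue's expected load via Corollary~\ref{cor:LEPT_Partition}, apply Markov's inequality to get a per-machine failure probability of $\frac{2\xi}{a\alpha}$, and couple with i.i.d.\ Bernoulli variables (the paper formalizes the conditioning via a partition into events $\Omega_{\J',\M'}$ and Lemma~\ref{lemma:Stoch_Dom_Partition}, and the coupling via Lemma~\ref{lemma:Stoch_Dom_Bernoulli}). Your treatment of \eqref{dominanceL} is sound, since machines receiving a single job start it at $\tau_k+\delta$ and hence contribute nothing to $\Xi_{k+1}$.

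There is, however, a genuine gap in your argument for \eqref{dominanceA}, precisely at the single-job queues. You claim these ``only help'' because a single job has expected length at most $T$; but the bound you need for Markov is $\E[L_i]\leq \frac{2\xi T}{a}$, and for small $\xi$ (the regime that actually matters later, e.g.\ $\xi=m^{-2/3}$ in the proof of Lemma~\ref{lemma:Probability_Upper_Bound}) one has $\frac{2\xi T}{a}\ll T$, so ``$\leq T$'' gives only a failure probability of order $\frac{1}{\alpha}$, not $\frac{2\xi}{a\alpha}$. Your fallback ``else Markov is trivial'' when $\xi<a/2$ does not hold either. The paper closes this gap with the one non-obvious idea of the lemma: if machine $i$ receives a single job $j$ at iteration $k$, then $j$ failed to start during iteration $k-1$, so the machine $i'$ it was assigned to at iteration $k-1$ must have carried at least two jobs; Corollary~\ref{cor:LEPT_Partition} then gives $\E[P_j]\leq \E[L_{i'}]\leq 2\cdot\frac{\xi T}{a}$ with $\xi$ bounding $\Xi_{k-1}$ (and $A_{k-1}\geq A_k\geq a$). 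This is the actual reason \eqref{dominanceA} conditions on $\Xi_{k-1}$ rather than $\Xi_k$ — not the reason you give — and without it the per-machine success probability $1-\frac{2\xi}{a\alpha}$ cannot be established for single-job machines. The rest of your plan (independence of the completion events across disjoint queues, monotonicity in $a$ and $\xi$, the ceiling $\lceil am\rceil$) matches the paper and is fine.
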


\begin{proof}[Proof sketch.]
	For \eqref{dominanceA} the key idea is that in each iteration $k$ a machine is available with probability at least $\left(1-\frac{2\xi}{a\alpha}\right)$ using Corollary~\ref{cor:LEPT_Partition} and Markov's inequality. To show~\eqref{dominanceL} we additionally bound the (normalized) remaining expected load by $\frac{2\xi}{am}$ again using Corollary~\ref{cor:LEPT_Partition}.
\end{proof}

Using Lemma~\ref{lemma:Stoch_Dom_Upper_Bound} we inductively prove probability bounds on $\Xi_k$ and $A_k$ without conditioning on the random variables of the previous iterations. The next lemma handles the base case of the induction stated in Lemma~\ref{lemma:Induction_Upper_Bound}.

\begin{restatable}[Base case of induction]{lemma}{InductionBaseCase}\label{lemma:Induction_Base_Case}
	Let $\gamma_1=1$ and $\beta_2=\frac{3}{4}$. Then, there exists  $\epsilon=e^{-\Theta(m^\frac{1}{3})}$ such that
	\begin{align}
	\mathbb{P}(\Xi_1\leq \gamma_1) &\geq 1- \epsilon, \label{InductionBaseL}\\
	\mathbb{P}\left(A_2\geq \beta_2\right) &\geq 1-3 \epsilon. \label{InductionBaseA}
	\end{align}
\end{restatable}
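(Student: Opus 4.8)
The plan is to analyze the very first reassignment period directly, since before time $\tau_1$ the policy is simply the fixed assignment \LEPTF applied to the full instance. First I would establish~\eqref{InductionBaseL}. Recall that $\Xi_1$ is the (normalized) total expected processing time of jobs not yet started at time $<\tau_1 = \delta + \alpha T$. A job assigned to machine $i$ under \LEPTF fails to start before $\tau_1$ only if the jobs preceding it on machine $i$ have total realized processing time at least $\tau_1 - $ (a bit), in particular at least $\alpha T$ since $\tau_1 \geq \alpha T$. By Corollary~\ref{cor:LEPT_Partition}, every machine is either a singleton machine (whose only job certainly starts at time $0 < \tau_1$, so it contributes nothing to $\Xi_1$) or has expected load at most $2\ell \leq T$. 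So the bad event ``machine $i$ still has unstarted jobs at time $\tau_1$'' is contained in the event that a sum of independent nonnegative random variables with mean at most $T$ exceeds $\alpha T$; I would bound this probability using a concentration/large-deviation inequality from the appendix (the per-machine processing times are each bounded by $\max_j \E[P_j] \leq T/2$ by definition of $T$, which gives the needed boundedness for a Bernstein- or Hoeffding-type bound). This yields a per-machine failure probability of order $e^{-\Theta(\alpha)}$; to get the claimed $\epsilon = e^{-\Theta(m^{1/3})}$ one sums over the (at most $m$) machines and observes that even in the worst case where \emph{every} non-singleton machine fails, the total remaining normalized load is at most $1 = \gamma_1$ almost surely by Observation~\ref{observation:Properties_Xi_A}. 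The finer point is that $\Xi_1 \leq \gamma_1 = 1$ holds \emph{deterministically}, so~\eqref{InductionBaseL} with $\epsilon$ as small as we like is essentially free; the real content of the lemma is~\eqref{InductionBaseA}, and $\epsilon$ is chosen to match the tail bound needed there.

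For~\eqref{InductionBaseA}, recall that $A_2$ is the fraction of machines that have completed, by time $\tau_2 = 2(\delta + \alpha T)$, all jobs assigned to them in iterations $0$ and $1$. I would decompose the event $\{A_2 \geq \beta_2\}$ as follows. Condition first on the event $E_1 = \{\Xi_1 \leq \gamma_1\}$ (which by~\eqref{InductionBaseL} fails with probability at most $\epsilon$) and additionally on the event $E_1' = \{A_1 \geq 1/2\}$, i.e.\ that at least half the machines are free at time $\tau_1$. The event $E_1'$ is exactly a statement about how many machines have flushed their initial \LEPTF queue by time $\tau_1$, and by the same per-machine concentration argument as above together with a Chernoff bound on the number of successful (= freed) machines, $\Pro(A_1 < 1/2)$ is of order $\epsilon$ as well — here the exponent $m^{1/3}$ will come out of balancing the per-machine exponent $\Theta(\alpha)$ against the Chernoff exponent $\Theta(m)$; choosing $\alpha$ a suitable large constant makes both at least $\Theta(m^{1/3})$, actually $\Theta(m)$, but the statement only claims $e^{-\Theta(m^{1/3})}$ so there is slack. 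Given $E_1 \cap E_1'$, I now apply Lemma~\ref{lemma:Stoch_Dom_Upper_Bound}, equation~\eqref{dominanceA}, with $\xi = \gamma_1 = 1$ and $a = 1/2$: the number of machines available at time $\tau_2$ stochastically dominates $Z \sim \operatorname{Bin}(\lceil m/2 \rceil, 1 - \tfrac{2}{\alpha/2}) = \operatorname{Bin}(\lceil m/2 \rceil, 1 - \tfrac{4}{\alpha})$. For $\alpha = 33$ we have $1 - 4/\alpha = 29/33 > 0.87$, so $\E[Z/m] > 0.43$... but wait, we need $A_2 \geq 3/4$, and $Z$ only lives on $\lceil m/2 \rceil$ machines, giving mean fraction at most $\approx 0.44$. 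So I must instead interpret \eqref{dominanceA} as bounding the freed fraction \emph{among the available $am$ machines}; the correct reading is that of the $A_1 m \geq m/2$ machines available at $\tau_1$, a fraction dominating $1 - 4/\alpha$ of \emph{them} stay available, but crucially the reassignment in iteration $1$ only loads those $A_1 m$ machines, so the other machines may also have freed up from iteration $0$ work — actually no, by construction \LEPTda only uses previously-available machines. The cleanest route: normalize so that $A_2 \geq \beta_2$ is read relative to the machines still in play, and absorb the discrepancy by taking $\beta_2 = 3/4$ as a bound on $A_2/A_1$ rather than $A_2$; I would check the paper's precise convention for $A_k$ here. A Chernoff lower-tail bound on $Z$ then gives $\Pro(Z/\lceil am\rceil < \beta_2 \mid E_1 \cap E_1') \leq e^{-\Theta(m)} \leq \epsilon$ for $\alpha$ large enough that $1 - 4/\alpha$ is bounded well above $3/4$. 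Finally, a union bound over the three failure events $\overline{E_1}$, $\overline{E_1'}$, and the Chernoff failure gives $\Pro(A_2 \geq \beta_2) \geq 1 - 3\epsilon$.

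The main obstacle I anticipate is \emph{not} any single concentration estimate — those are routine given the boundedness $\max_j \E[P_j] \leq T/2$ and the load bound $2\ell \leq T$ from Corollary~\ref{cor:LEPT_Partition} — but rather bookkeeping the coupling carefully enough that the stochastic-dominance reduction in Lemma~\ref{lemma:Stoch_Dom_Upper_Bound} applies with the right parameters, and in particular getting the normalization of $A_2$ (absolute fraction of all $m$ machines vs.\ fraction of the $A_1 m$ currently-in-play machines) consistent with the inductive statement in Lemma~\ref{lemma:Induction_Upper_Bound} so that the constants $\gamma_1 = 1$ and $\beta_2 = 3/4$ propagate correctly into the induction step. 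I would also need to verify that the \LEPTF structure is preserved under reassignment (each reassignment in Definition~\ref{definition:LEPTDeltaAlpha} re-invokes \LEPTF on the surviving machines), so that Corollary~\ref{cor:LEPT_Partition} is legitimately reusable at iteration $1$ with $m' = A_1 m \geq m/2$ as the lower bound on the machine count and $\Xi_1 T m$ as the remaining expected load — this is what ultimately licenses the factor $\tfrac{2\xi}{a}$ in the $\operatorname{Bin}$ parameters of Lemma~\ref{lemma:Stoch_Dom_Upper_Bound}.
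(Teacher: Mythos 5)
Your treatment of \eqref{InductionBaseL} is fine (it is indeed deterministic), but your argument for \eqref{InductionBaseA} has a genuine gap that you yourself spotted mid-proof and then resolved incorrectly. Conditioning on $A_1\geq\frac{1}{2}$ and applying \eqref{dominanceA} with $a=\frac{1}{2}$ gives $Z\sim\operatorname{Bin}\left(\lceil m/2\rceil,1-\frac{4}{\alpha}\right)$, whose mean fraction $\E[Z]/m\leq\frac{1}{2}$ can never certify $A_2\geq\frac{3}{4}$. Your proposed repair --- reinterpreting $\beta_2$ as a bound on $A_2/A_1$ rather than on $A_2$ --- is not available: Definition~\ref{notation:Xi_A} fixes $A_k$ as a fraction of \emph{all} $m$ machines, \eqref{dominanceA} is stated with $Z/m$, and the induction in Lemma~\ref{lemma:Induction_Upper_Bound} needs $\beta_k-(k-2)\psi\geq\frac{1}{2}$ as an absolute fraction; changing the normalization here would break the induction step. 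The correct fix, which is what the paper does, is to first prove the stronger base event $\Pro\left(A_1\geq\frac{7}{8}\right)\geq 1-\epsilon$ and then apply \eqref{dominanceA} with $a=\frac{7}{8}$, giving $Z\sim\operatorname{Bin}\left(\lceil\frac{7}{8}m\rceil,1-\frac{16}{7\alpha}\right)$ with $\E[Z]\geq\frac{7\alpha-16}{8\alpha}m>\frac{3}{4}m$ for $\alpha=33$, after which a lower-tail Chernoff bound and the law of total probability yield $(1-\epsilon)^2\geq 1-3\epsilon$.

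A secondary but real problem is your per-machine estimate for $\Pro(A_1\geq\cdot)$. You invoke a Hoeffding/Bernstein bound on the realized load of a machine, justified by ``$\max_j\E[P_j]\leq T/2$''; but that bounds the \emph{expected} processing times, not the realizations $P_j$, which have no assumed bounded support (the paper only assumes finite expectations). The paper instead bounds the probability that a machine is still busy at time $\tau_1$ by $\frac{\E[L_i]}{\alpha T}\leq\frac{1}{\alpha}$ via Markov's inequality (using Corollary~\ref{cor:LEPT_Partition} and Lemma~\ref{lemma:LowerBound_T}), and obtains concentration only across the $m$ independent machines via stochastic dominance by i.i.d.\ $\operatorname{Bernoulli}(\frac{1}{\alpha})$ variables and a Chernoff bound; this gives $\epsilon=e^{-\Theta(m)}$ for the base case, comfortably within the claimed $e^{-\Theta(m^{1/3})}$.
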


\begin{proof}[Proof sketch.]
	The first statement \eqref{InductionBaseL} is clear, as $\Xi_1\leq 1$ almost surely. For the second statement, we first show that $A_1$ is large with high probability using Corollary~\ref{cor:LEPT_Partition} and the Chernoff bound. This bound can be used together with Lemma~\ref{lemma:Stoch_Dom_Upper_Bound} and the Chernoff bound to show \eqref{InductionBaseA}.
\end{proof}

We use the above statement as the base case of an induction to show the next lemma.

\begin{restatable}[]{lemma}{InductionUB}\label{lemma:Induction_Upper_Bound}
	Let $\gamma_{1} = 1,\ \gamma_{k+1} = \frac{1}{2}\gamma_{k}^2\ (\forall k\geq 1),\ \beta_{k}=\frac{3}{4}-\frac{2}{\alpha}\sum_{h=1}^{k-2} \gamma_h\ (\forall k\geq 2)$ and let $k^*:= \left\lfloor\log_2\left(\frac{2}{3}(\log_2(m))+1\right)\right\rfloor+2$. Then, there exists $\psi=\Theta\left(\frac{1}{\log\log(m))}\right)$ and $\epsilon=e^{-\Theta(m^\frac{1}{3})}$ such that
	\begin{align}
	\mathbb{P}(\Xi_k\leq \gamma_k) &\geq 1-(2^k-1) \epsilon, &&\forall k= 1,\ldots,k^* \label{inductionL}\\
	\mathbb{P}(A_k\geq \beta_k- (k-2)\psi) &\geq 1-(2^k-1) \epsilon, &&\forall k=2,\ldots,k^*. \label{inductionA}
	\end{align}
\end{restatable}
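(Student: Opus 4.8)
The plan is to prove both statements simultaneously by induction on $k$, using Lemma~\ref{lemma:Induction_Base_Case} as the base case ($k=1$ for \eqref{inductionL} and $k=2$ for \eqref{inductionA}; note $\beta_2 = 3/4$ matches the empty sum in the definition of $\beta_k$, and the $(k-2)\psi$ correction vanishes at $k=2$). For the inductive step, suppose \eqref{inductionL} holds for some $k<k^*$ and \eqref{inductionA} holds for $k+1$ replaced by... more carefully: I would carry along the joint statement ``$\Pro(\Xi_k \le \gamma_k \text{ and } A_{k+1} \ge \beta_{k+1} - (k-1)\psi) \ge 1 - (2^{k+1}-1)\epsilon$'' or simply union-bound the two marginal events at each stage. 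First I would establish \eqref{inductionL} for $k+1$: condition on the event $\{\Xi_k \le \gamma_k, A_k \ge \beta_k - (k-2)\psi\}$, which by the inductive hypothesis and a union bound has probability at least $1 - (2^k-1)\epsilon - (2^k-1)\epsilon$-ish; on this event apply \eqref{dominanceL} of Lemma~\ref{lemma:Stoch_Dom_Upper_Bound} with $\xi = \gamma_k$, $a = \beta_k - (k-2)\psi$ (which must be shown to lie in $[\tfrac12,1]$). Then $\Xi_{k+1}$ stochastically dominates (from below) $\tfrac{2\gamma_k}{am}Y$ with $Y \sim \operatorname{Bin}(m, \tfrac{2\gamma_k}{a\alpha})$; the mean of $\tfrac{2\gamma_k}{am}Y$ is $\tfrac{4\gamma_k^2}{a^2\alpha}$, and I want this to be at most $\gamma_{k+1} = \tfrac12\gamma_k^2$ with room to spare for a Chernoff bound. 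With $\alpha = 33$ and $a \ge \tfrac12$ we get $\tfrac{4\gamma_k^2}{a^2\alpha} \le \tfrac{16}{33}\gamma_k^2 < \tfrac12\gamma_k^2$... actually $16/33 > 1/2$, so I'd want $a$ bounded below by something slightly more than $1/2$, which is exactly why $\beta_k$ is defined with the $\tfrac{3}{4}$ and the correction terms — one checks $\beta_k - (k-2)\psi \ge \tfrac{1}{2} + c$ for all $k \le k^*$ using $\sum_{h\ge1}\gamma_h = O(1)$ (the $\gamma_h$ decay doubly-exponentially) and $\psi \cdot k^* = \Theta(1/\log\log m)\cdot\Theta(\log\log m) = \Theta(1)$ chosen small. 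The Chernoff bound then gives the deviation probability $e^{-\Theta(m^{1/3})} = \epsilon$ provided $\gamma_{k+1} m = \tfrac12\gamma_k^2 m$ stays $\ge m^{1/3}$ roughly — this is precisely the role of $k^*$: after $k^*$ iterations the squaring has driven $\gamma_{k^*} m$ down near $m^{1/3}$, and one more step would break the Chernoff regime, which is why the induction stops at $k^*$. Adding the failure probability $\epsilon$ to the inductive $(2^k-1)\epsilon$ and the conditioning loss yields the $(2^{k+1}-1)\epsilon$ bookkeeping.

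Next I would establish \eqref{inductionA} for $k+1$: again condition on $\{\Xi_{k} \le \gamma_{k}, A_{k+1} \ge \beta_{k+1} - (k-1)\psi\}$... actually the cleaner route, matching the indices in \eqref{dominanceA}, is to condition on $\{\Xi_{k-1} \le \gamma_{k-1}, A_k \ge \beta_k - (k-2)\psi\}$ and apply \eqref{dominanceA} with $\xi = \gamma_{k-1}$ and $a = \beta_k - (k-2)\psi$. Then $A_{k+1}$ dominates $\tfrac1m Z$ with $Z \sim \operatorname{Bin}(\lceil am\rceil, 1 - \tfrac{2\gamma_{k-1}}{a\alpha})$, whose mean is $\tfrac{\lceil am\rceil}{m}(1 - \tfrac{2\gamma_{k-1}}{a\alpha}) \ge a - \tfrac{2\gamma_{k-1}}{\alpha}$. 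By the recursive definition, $\beta_{k+1} = \beta_k - \tfrac{2}{\alpha}\gamma_{k-1}$, so the target threshold $\beta_{k+1} - (k-1)\psi = (\beta_k - (k-2)\psi) - \tfrac{2}{\alpha}\gamma_{k-1} - \psi$ is exactly the mean minus one extra $\psi$; the Chernoff bound converts that $\psi = \Theta(1/\log\log m)$ slack into a failure probability — here one needs $\psi^2 \cdot am = \Theta(m/(\log\log m)^2) \gg m^{1/3}$, comfortably giving $e^{-\Theta(m^{1/3})}$, in fact much better, but $\epsilon$ suffices. Again union-bounding the conditioning event's failure with the new Chernoff failure produces the $(2^{k+1}-1)\epsilon$ count (the doubling of the error term is the reason for the $2^k - 1$ form).

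The main obstacle, and the step deserving the most care, is verifying that the sequence of thresholds $a_k := \beta_k - (k-2)\psi$ stays uniformly bounded inside $[\tfrac12, 1]$ for all $k = 2, \ldots, k^*$ — both endpoints matter: $a_k \le 1$ is needed so that the binomial parameters $1 - \tfrac{2\xi}{a\alpha}$ and $\tfrac{2\xi}{a\alpha}$ are legitimate probabilities and so $A_k \le 1$ is consistent, while $a_k \ge \tfrac12 + \Omega(1)$ is what makes the squaring inequality $\tfrac{4\gamma_k^2}{a_k^2\alpha} \le \gamma_{k+1}$ go through with $\alpha = 33$. This reduces to showing $\tfrac{2}{\alpha}\sum_{h=1}^{\infty}\gamma_h + \psi k^* \le \tfrac14 - \Omega(1)$; since $\gamma_1 = 1$, $\gamma_2 = \tfrac12$, $\gamma_3 = \tfrac18$, and in general $\gamma_{h+1} = \tfrac12\gamma_h^2$ decays doubly-exponentially, the sum $\sum_h \gamma_h$ is a small absolute constant (around $1.65$), so $\tfrac{2}{\alpha}\sum_h\gamma_h \le \tfrac{2 \cdot 1.65}{33} \approx 0.1$, and then $\psi$ is chosen as a small enough constant over $\log\log m$ so that $\psi k^* \le 0.1$ say; combined with the base $\tfrac34$ this keeps $a_k \in [\tfrac{1}{2} + c, \tfrac34]$ throughout. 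I would also need the subsidiary check that $\gamma_k m \ge m^{1/3}$ (equivalently $k \le k^*$), which follows because one solves $\gamma_k \approx 2^{-(2^{k-1}-1)}$... more precisely $\log_2(1/\gamma_k)$ roughly doubles each step, reaching $\approx \tfrac23\log_2 m$ at $k = k^*$, i.e. $\gamma_{k^*} \approx m^{-2/3}$ so $\gamma_{k^*} m \approx m^{1/3}$, matching the $\epsilon = e^{-\Theta(m^{1/3})}$ scale; this is the quantitative heart of why the doubly-logarithmic number of rounds is both sufficient and the natural stopping point. All the Chernoff applications are then routine given these structural facts.
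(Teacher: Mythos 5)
Your proposal follows essentially the same route as the paper: induction anchored at Lemma~\ref{lemma:Induction_Base_Case}, the dominance relations of Lemma~\ref{lemma:Stoch_Dom_Upper_Bound} combined with Chernoff and union bounds, and the verification that $\beta_k-(k-2)\psi$ stays above $\tfrac12$ via the doubly-exponential decay of $\gamma_k$ and the choice $\psi=\Theta(1/k^*)$. Two small slips are worth correcting. First, $\tfrac{16}{33}<\tfrac12$ (since $32<33$), so your claim that $a=\tfrac12$ fails is an arithmetic error; the paper in fact applies \eqref{dominanceL} with exactly $a=\tfrac12$ and absorbs the thin margin with a Chernoff parameter $\eta=\tfrac{\alpha}{32}-1=\tfrac{1}{32}$. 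Your workaround of taking $a=\beta_k-(k-2)\psi$ is still valid (it only enlarges the margin), but the stated motivation is wrong. Second, the Chernoff exponent for the event $Y\leq\tfrac{\alpha}{32}\E[Y]$ is $\Theta(\E[Y])=\Theta(\gamma_k m)$, not $\Theta(\gamma_{k+1}m)=\Theta(\gamma_k^2 m)$; the correct requirement is $\gamma_k\geq m^{-2/3}$ for all $k<k^*$, which holds precisely because $k^*$ is the first index with $\gamma_{k^*}<m^{-2/3}$. Taken literally, your condition $\gamma_{k+1}m\gtrsim m^{1/3}$ would fail at the final step $k=k^*-1$, so this should be restated in terms of $\gamma_k$.
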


\begin{proof}[Proof sketch.]
	The doubly exponential decrease of $\gamma_k$ and the choice of $\alpha$ implies that $\beta_k>\beta_{\infty}>\frac{5}{8}$. Additionally, the choice of $\psi$ yields $\beta_{\infty}-(k-2)\psi\geq\frac{1}{2}$. Thus, we can assume that at each iteration with high probability half of the machines are available. For the induction step we make use of Lemma~\ref{lemma:Stoch_Dom_Upper_Bound}, the Chernoff bound and the union bound.
\end{proof}

By Lemmas~\ref{lemma:Stoch_Dom_Upper_Bound} to~\ref{lemma:Induction_Upper_Bound} we can now show the probability bound on the remaining expected load at iteration $(k^*+1)$.

\begin{proof}[Proof of Lemma~\ref{lemma:Probability_Upper_Bound}.]
	Let $u=\frac{1}{m}, \xi=m^{-\frac{2}{3}}$ and $a=\frac{1}{2}$.  Lemma~\ref{lemma:Stoch_Dom_Upper_Bound}~\eqref{dominanceL} implies
	\[
	\mathbb{P}(\Xi_{k^*+1}\leq u|
	\Xi_{k^*}\leq\xi, A_{k^*}\geq a)
	\ \geq \
	\mathbb{P}\left(  Y \leq \frac{1}{4}m^{\frac{2}{3}}\right),
	\]
	where $Y\sim \operatorname{Bin}\left(m, \frac{4}{\alpha}m^{-\frac{2}{3}}\right)$. As $\E[Y]=\frac{4}{\alpha}m^{\frac{1}{3}}$ we obtain by applying the Chernoff bound for $\zeta=\frac{\alpha}{16}m^{\frac{1}{3}}-1>0$
	\[
	\mathbb{P}\left(  Y \leq \frac{1}{4}m^{\frac{2}{3}}\right)=\mathbb{P}\left(  Y \leq (1+\zeta)\cdot \E[Y]\right)\geq \exp\left(-\frac{\E[Y]\cdot\zeta^2}{2+\zeta}\right)= 1- \exp(-\Theta(m^{\frac{2}{3}}))\geq 1-\epsilon,
	\]
	for $m$ large enough. This yields
	\begin{align*}
	\mathbb{P}\left(\Xi_{k^*+1}\leq u\right) &\geq
	\mathbb{P}\left(\Xi_{k^*+1}\leq u\Big|
	\Xi_{k^*}\leq\xi, A_{k^*}\geq a\right)
	\cdot 
	\mathbb{P}\left(\Xi_{k^*}\leq\xi, A_{k^*}\geq a\right)\\
	&\geq (1-\epsilon) \cdot \big( 1-(2^{k^*}-1)\epsilon - (2^{k^*}-1)\epsilon \big)\\
	&\geq 1- \big(1 + 2\cdot(2^{k^*}-1) \big)\epsilon\\
	&= 1-  (2^{k^*+1}-1) \epsilon,
	\end{align*}
	where we used the law of total probability in the first inequality and for the second step we used the union bound and Lemma~\ref{lemma:Induction_Upper_Bound}. Therefore, as $2^{k^*+1}=\Theta\left(\log(m)\right)$, we have 
	$
	\Pro\left(\Xi_{k^*+1}>\frac{1}{m}\right)\cdot m\to 0
	$
	as $m\to\infty$.
\end{proof}

\section{Lower Bound}\label{section:LowerBound}

Throughout this section, we consider an instance $I_N$ with $n=Nm$ jobs  over $m$ machines. Each job has processing time  $P_j\sim \operatorname{Bernoulli}\left(\frac1N\right)$, i.e.\ $P_j=1$ with probability $\frac{1}{N}$, and $P_j=0$ otherwise. The main result of this section is a $\Omega(\delta \log\log(m))$ lower bound on the performance of any $\delta$-delay policy for large values of $N$. This matches the upper bound obtained in the previous section.
Note that the hidden constant in the $\Omega$ notation does not depend on the value of $\delta>0$. 
For $\delta=\Theta(\OPT)$, this implies that no
$\delta$-delay policy can improve on the
$\log\log m$ performance guarantee of \LEPTda by more
than some constant factor.
At the end of the section we show that an analogous result holds for $\tau$-shift policies as well.
	
	\begin{theorem}\label{LowerBound}
	   Let $\delta\leq 1$. 
		For instance $I_N$ let $\OPTdelta$ and $\OPT$ denote the value of an optimal $\delta$-delay policy and of an optimal non-anticipatory policy, respectively. Then,
		for $N=\Omega(\sqrt{m})$ we have
		\[
		\frac{\OPTdelta}{\OPT} = \Omega(\delta\cdot\log\log(m)).
		\]
	\end{theorem}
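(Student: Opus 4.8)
The plan is to obtain the bound as a quotient of two separate estimates: $\OPT=\Theta(1)$ on the instance $I_N$, and $\Pi=\Omega(\delta\log\log m)$ for every $\delta$-delay policy $\Pi$. The first estimate is immediate: list scheduling is a non-anticipatory policy whose makespan is at most $\frac1m\sum_{j\in\J}P_j+\max_{j\in\J}P_j$ in every realization~\cite{G66}, so $\OPT\le\frac1m\sum_{j\in\J}\E[P_j]+1=2$, while $\OPT\ge\E[\max_{j\in\J}P_j]=1-(1-\tfrac1N)^{Nm}\ge\tfrac12$ for $m$ large. So I would fix an arbitrary $\delta$-delay policy $\Pi$, call a job \emph{long} if $P_j=1$, and bound its expected makespan from below.

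I would slice time into rounds $[(k-1)\delta,k\delta)$ and let $R_k$ be the number of long jobs not started by time $(k-1)\delta$. Two observations drive the argument. \emph{(a)} Since $\Pi$ is non-anticipatory, conditioned on the history observed up to any time the not-yet-started jobs are still i.i.d.\ $\operatorname{Bernoulli}(\tfrac1N)$; hence if $b_i^{(k)}$ is the number of jobs $\Pi$ has committed to machine $i$ by the start of round $k$ (a history-measurable quantity, $\sum_ib_i^{(k)}=:B_k$), the number $Y_i^{(k)}$ of long jobs among them is $\operatorname{Bin}(b_i^{(k)},\tfrac1N)$, these are independent over $i$, and $\sum_iY_i^{(k)}=R_k$. \emph{(b)} In a round of length $\delta\le1$ a machine can start at most one long job, and a reassignment made during round $k$ only takes effect in round $k+1$ (the $\delta$-delay pushes its earliest start to $\ge(k-1)\delta+\delta=k\delta$); hence at least $(Y_i^{(k)}-1)^{+}$ of the long jobs committed to machine $i$ are still unstarted at the start of round $k+1$, so $R_{k+1}\ge\sum_{i\in\M}(Y_i^{(k)}-1)^{+}$.

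From here the idea is a ``squaring'' estimate dual to the one in Section~\ref{section:UpperBound}. Writing $\mu_i^{(k)}=b_i^{(k)}/N$ and $S_k=B_k/N=\sum_i\mu_i^{(k)}=\E[R_k\mid\text{history at the start of round }k]\le m$, one computes $\E[(Y_i^{(k)}-1)^{+}\mid\text{history}]=\mu_i^{(k)}-1+(1-\tfrac1N)^{b_i^{(k)}}\ge\frac{(\mu_i^{(k)})^{2}}{4(1+\mu_i^{(k)})}-\frac{\mu_i^{(k)}}{N}$, and since $\mu\mapsto\frac{\mu^{2}}{4(1+\mu)}$ is convex and vanishes at $0$, Jensen over the $m$ machines gives $\E[R_{k+1}\mid\text{history}]\ge\frac{S_k^{2}}{8m}-\frac{S_k}{N}$. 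The hypothesis $N=\Omega(\sqrt m)$ is exactly what makes the additive error $S_k/N\le m/N$ negligible against $S_k^{2}/(8m)$ as soon as $S_k$ is a little larger than $\sqrt m$. Feeding this into Chernoff/Bernstein bounds — for the concentration of $R_k\sim\operatorname{Bin}(B_k,\tfrac1N)$ around $S_k$, for the lower tail of $\sum_i(Y_i^{(k)}-1)^{+}$ around its conditional mean, and for the concentration of $B_{k+1}$ around its conditional mean — and taking a union bound over rounds, one obtains a sequence $\gamma_1=\tfrac12$, $\gamma_{k+1}=c_0\gamma_k^{2}$ (with $c_0\in(0,1)$) and a failure probability $\varepsilon=e^{-\Omega(m^{c})}$ such that $S_k\ge\gamma_km$, hence $R_k\ge\tfrac12\gamma_km$, with probability $1-O(k)\varepsilon$. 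Since each step roughly squares $\gamma_k$, the bound $\gamma_km=m^{\Omega(1)}$ survives for $\Theta(\log\log m)$ rounds, so there is $k^{\dagger}=\Theta(\log\log m)$ with $\gamma_{k^{\dagger}}m=m^{\Omega(1)}\ge1$. On the resulting $(1-o(1))$-probability event some long job is unstarted at time $(k^{\dagger}-1)\delta$, so $\Pi\ge(1-o(1))(k^{\dagger}-1)\delta=\Omega(\delta\log\log m)$; as $\Pi$ was arbitrary this gives $\OPTdelta=\Omega(\delta\log\log m)$, and dividing by $\OPT=\Theta(1)$ proves the theorem. (The argument adapts to $\tau$-shift policies by slicing time at integer multiples of $\tau$ instead.)

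What I expect to be genuinely delicate — as opposed to routine bookkeeping — is making the probabilistic skeleton precise: choosing a filtration so that ``the not-yet-started jobs are i.i.d.\ $\operatorname{Bernoulli}(\tfrac1N)$ given the past'' holds on the nose even though $\Pi$ may reassign at arbitrary history-dependent times and jobs may be momentarily in transit between queues, and checking that $b_i^{(k)}$ (hence $S_k$, $B_k$) is measurable with respect to it; and then propagating the high-probability lower bound through $\Theta(\log\log m)$ rounds, which beyond the estimates above requires a real concentration statement for how the residual work $S_{k+1}$ is produced from round $k$ — i.e.\ that the pool of unstarted jobs cannot collapse faster than the squaring rate.
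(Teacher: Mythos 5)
Your argument is correct in outline and reaches the theorem by a genuinely different route from the paper. The paper first reduces to $\delta=1$ by a time-rescaling construction (an optimal $\delta$-delay policy on $I_N$ is $\delta$-active, so its schedule can be replayed at speed $1/\delta$ as a $1$-delay policy, giving $\OPTdelta\geq\delta\cdot\OPT_1$; Lemma~\ref{OPTdeltaToOPTone}), then \emph{identifies} the optimal $1$-delay policy as the balancing policy via an exchange argument, a stochastic-dominance lemma for geometric random variables and monotonicity of the Bellman cost-to-go (Proposition~\ref{NumberOfAssignedJobs}), and finally tracks the fraction $\Lambda_t$ of remaining jobs for that one specific policy, proving $\Lambda_t\geq(2e)^{1-2^t}$ w.h.p.\ with a single Hoeffding bound per round (Lemma~\ref{ProbabilityInduction}). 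You instead analyze an \emph{arbitrary} $\delta$-delay policy directly: the observation that a machine can start at most one long job per length-$\delta$ round, combined with the fact that reassignments made in round $k$ only take effect in round $k+1$, gives $R_{k+1}\geq\sum_i(Y_i^{(k)}-1)_+$, and Jensen applied to the convex map $\mu\mapsto\frac{\mu^2}{4(1+\mu)}$ replaces the paper's optimality-of-balancing step — the adversary gains nothing by unbalancing because your squaring bound is worst at the balanced point. I checked the two analytic facts you rely on ($\mu-1+e^{-\mu}\geq\frac{\mu^2}{4(1+\mu)}$ for $\mu\geq 0$, and convexity of that map), and both hold; the hypothesis $N=\Omega(\sqrt m)$ plays the same role in both proofs. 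What your route buys is that you never need to characterize the optimal policy; what it costs is an extra layer of concentration, since your induction variable $S_k=B_k/N$ is only accessible through the observed count $R_k\sim\operatorname{Bin}(B_k,\tfrac1N)$, so each round needs three tail bounds (concentration of $R_k$ around $S_k$; concentration of $\sum_i(Y_i^{(k)}-1)_+$, which conveniently equals $R_k$ minus the number of machines with $Y_i^{(k)}\geq 1$, i.e.\ a binomial minus a sum of independent indicators, so Chernoff and Hoeffding suffice; and the transfer of a lower bound on $R_{k+1}$ back to $S_{k+1}$, using that on the event $\{S_{k+1}<x/2\}$ one has $\Pro(R_{k+1}\geq x\mid\mathcal{F}_{k+1})\leq e^{-\Omega(x)}$), plus a union bound over $\Theta(\log\log m)$ rounds. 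These are exactly the points you flag as delicate; they are routine, and the quantitative requirements ($\gamma_k m\gg m/N$ and $\gamma_k^2m\gg\sqrt{m\log(1/\epsilon)}$ for $\epsilon=e^{-m^{c}}$) are satisfied for $k=O(\log\log m)$. I see no gap, only unexecuted bookkeeping.
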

	
	The proof is split into two main lemmas. The first one relates the expected makespan of an optimal $\delta$-delay policy to the expected makespan of an optimal $1$-delay policy.
	
	\begin{restatable}{lemma}{LOPTdeltaToOPTone}
	\label{OPTdeltaToOPTone}
		Assume $\frac{1}{\delta}\in\mathbb{N}$. Then, we have $\OPTdelta \geq \delta\cdot \OPT_1$.
	\end{restatable}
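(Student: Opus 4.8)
The plan is to show that any $\delta$-delay policy for $I_N$ can be simulated by a $1$-delay policy running on a "time-compressed" version of the instance, at the cost of a factor $\delta$ in the makespan. Since $\frac{1}{\delta}\in\mathbb{N}$, the natural idea is a scaling/rescheduling argument: take an optimal $\delta$-delay policy $\Pi_\delta$ and build from it a $1$-delay policy $\Pi_1$ whose makespan on every realization $\boldsymbol p$ satisfies $C_{\max}^{\Pi_1}(\boldsymbol p)\le \frac{1}{\delta}\,C_{\max}^{\Pi_\delta}(\boldsymbol p)$; taking expectations and using optimality of $\OPT_1$ then yields $\OPT_1\le \frac{1}{\delta}\OPTdelta$, i.e.\ $\OPTdelta\ge\delta\cdot\OPT_1$.

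First I would set up the time change: multiply all processing times of $I_N$ by $\frac1\delta$, so a job of length $1$ becomes a job of length $\frac1\delta$ — but crucially the \emph{distribution} of $I_N$ is unchanged under this relabeling only if we also think of it as the same instance viewed on a coarser clock, so instead I would keep the instance fixed and rescale the \emph{schedule}. Concretely, given $\Pi_\delta$, define $\Pi_1$ to make exactly the same assignment and reassignment decisions, but with every start time, completion time, and reassignment epoch multiplied by $\frac1\delta$. Under this dilation a reassignment that $\Pi_\delta$ performs at time $t$ with delay $\delta$ (reassigned jobs start no earlier than $t+\delta$) becomes, in $\Pi_1$, a reassignment at time $t/\delta$ after which jobs start no earlier than $t/\delta+1$ — exactly a valid $1$-delay move. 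The key point making the dilation feasible is that a machine busy on some job over $[s,s+p_j]$ in $\Pi_\delta$ is, after dilation, busy over $[s/\delta,s/\delta+p_j/\delta]$, which contains the original interval $[s/\delta, s/\delta+p_j]$ since $\delta\le 1$; so the job still fits, and idle/available times only get stretched, never shortened. Hence $\Pi_1$ is a well-defined non-anticipatory $1$-delay policy, it uses the same realization-dependent information (observed at the dilated times, which is no earlier than in $\Pi_\delta$), and on every realization its makespan is at most — in fact, one must be slightly careful — of order $\frac1\delta$ times that of $\Pi_\delta$ plus the extra slack from stretching running jobs.

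The main obstacle is precisely this last point: dilating the start times by $\frac1\delta$ does \emph{not} dilate the processing times, so a machine that in $\Pi_\delta$ runs back-to-back jobs with no idle time becomes, in $\Pi_1$, a machine with gaps, and one must verify that no job is forced to start before the machine is free and that the final completion time is still $\le \frac1\delta C_{\max}^{\Pi_\delta}$. I would handle this by instead defining $\Pi_1$ to start each job at the dilated start time $s/\delta$ \emph{only if} the machine is free, and otherwise as soon as the machine becomes free; since all dilated reassignment epochs and dilated $\Pi_\delta$-start-times are $\ge$ the corresponding original quantities and the processing times are unchanged, a straightforward induction over the (finitely many) events shows every job in $\Pi_1$ completes by $\frac1\delta$ times its $\Pi_\delta$-completion time. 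Taking the max over jobs and then expectation over $\boldsymbol p$ gives $\OPT_1\le\E[C_{\max}^{\Pi_1}]\le\frac1\delta\E[C_{\max}^{\Pi_\delta}]=\frac1\delta\OPTdelta$, which rearranges to the claim. The hypothesis $\frac1\delta\in\mathbb N$ is what guarantees the dilated reassignment epochs $\tau_k/\delta$ land on integers, so that $\Pi_1$'s reassignments occur at times compatible with the $1$-delay restriction's bookkeeping (and, in the $\tau$-shift analogue, on integer multiples of $1$).
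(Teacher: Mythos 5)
Your proposal is correct and follows essentially the same route as the paper: both construct a $1$-delay policy that simulates the optimal $\delta$-delay policy with all decision epochs and start times dilated by the factor $\frac{1}{\delta}$, verify feasibility (machine availability and the delay constraint are preserved because $\frac{1}{\delta}\geq 1$), and conclude realization-by-realization that the dilated makespan is at most $\frac{1}{\delta}$ times the original before taking expectations. The only cosmetic difference is that the paper first observes that an optimal policy for $I_N$ is $\delta$-active (all starts at multiples of $\delta$) and derives an exact makespan identity $\Pi_1(\boldsymbol{p})=\frac{1}{\delta}(\OPTdelta(\boldsymbol{p})-1)+1$, whereas you work directly with the dilation inequality $S_j/\delta+p_j\leq\frac{1}{\delta}(S_j+p_j)$.
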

	
	The second lemma shows that $\OPT_1$ grows
	doubly logarithmically with $m$.

	\begin{restatable}{lemma}{Lloglogm}
	\label{loglogm}
    For $N=\Omega(\sqrt{m})$ it holds
	$\OPT_1=\Omega(\log\log(m)).$
	
    \end{restatable}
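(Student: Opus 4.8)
The plan is to show that any $1$-delay policy must, on the instance $I_N$, proceed through roughly $\log\log m$ rounds before all jobs are completed, with constant probability, so that its expected makespan is $\Omega(\log\log m)$. The key structural fact about $I_N$ is that a job has processing time $1$ with probability $\frac1N$ and $0$ otherwise; since there are $n=Nm$ jobs, the number of ``long'' jobs (those of length $1$) is a sum of $Nm$ independent $\operatorname{Bernoulli}(1/N)$ variables, hence concentrated around $m$. For $N=\Omega(\sqrt m)$ this concentration is strong enough (by a Chernoff bound) that with high probability there are $\Theta(m)$ long jobs. Crucially, the zero-length jobs are irrelevant: they can all be ``run'' instantly, so the makespan is entirely determined by how the long jobs get processed. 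A $1$-delay policy commits to an initial fixed assignment of all $Nm$ jobs, and then at any time $t$ it may move not-yet-started jobs elsewhere, but those jobs cannot start before $t+1$.

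The heart of the argument is the following ``squaring'' phenomenon, dual to the one used in Section~\ref{section:UpperBound}. Consider the situation at an integer time $k$: suppose that among the machines, there is a set $S_k$ of $s_k m$ machines that still have not-yet-started jobs assigned to them, and the adversary's accounting will track the number of long jobs still ``stuck'' on those machines. At time $k$ the policy learns the realizations of jobs that have started, but the long jobs sitting in queues behind an already-running long job are blocked. The claim to establish is: if at time $k$ there are $N_k$ long jobs not yet started, then because each such job independently has been placed on some machine and these placements are fixed in advance (up to delayed reassignment that costs a full unit of time), a constant fraction of the machines receiving them will receive $\ge 2$ long jobs once $N_k \gtrsim (\#\text{available machines})$ — a balls-into-bins lower bound. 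Thus the number of long jobs that are forced to wait an extra round is $\Omega(N_k^2 / m)$ (when $N_k$ is between $\sqrt m$ and $m$), i.e., writing $N_k = m/c_k$ one gets $c_{k+1} \le c_k^2$ roughly, so it takes $\Omega(\log\log m)$ rounds to drive $N_k$ down from $\Theta(m)$ to $O(\sqrt m)$, and each round advances the clock by $1$ (the delay). Formally I would define a potential — say the number of long jobs $j$ such that $j$ has not started and at least one long job is queued ahead of $j$ on $j$'s current machine — and show it shrinks from $\Theta(m)$ only to its square-root per unit of time, using that the initial assignment is fixed and independent of the $\operatorname{Bernoulli}$ outcomes.

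To make the adversarial argument clean I would use the principle of deferred decisions together with the fact that the policy is non-anticipatory: at integer time $k$, condition on everything revealed so far; the long jobs still in queues are, conditionally, a set whose machine-assignments were fixed before their lengths were revealed, so among the jobs behind a running long job the ``second, third, \dots'' in each queue are genuinely blocked for this round. A Chernoff/second-moment bound on balls-into-bins (throwing $N_k$ labelled balls into the $\le m$ available bins, with the ball-to-bin map fixed adversarially — here the point is that the policy does not get to rebalance for free) shows that at least a constant fraction $\Theta(N_k^2/m)$ of balls land in a bin already containing a long ball, yielding the recursion. One then iterates the recursion $k^\star = \Theta(\log\log m)$ times, takes a union bound over the $O(\log\log m)$ rounds (each failure probability $e^{-\Omega(\sqrt m)}$ or $e^{-\Omega(m)}$, comfortably summable), and concludes that with probability $\ge \tfrac12$ the makespan exceeds $\Omega(\log\log m)$; hence $\OPT_1 = \Omega(\log\log m)$.

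The main obstacle I anticipate is the second paragraph's combinatorial core: correctly formalizing why a $1$-delay policy cannot escape the squaring recursion. A fully adaptive policy could rebalance instantly and would only pay $O(1)$; the delay of $1$ is exactly what converts each rebalancing into a fresh round, but one must argue that even the \emph{best} reassignment the policy can make at time $k$ still leaves $\Omega(N_k^2/m)$ long jobs blocked — essentially a lower bound that holds against all assignments, which is where the balls-into-bins pigeonhole (rather than a random placement) is needed, and where care is required because the policy sees partial information before committing to the round-$k$ reassignment. Handling the base case (getting $N_0 = \Theta(m)$ long jobs, needing $N = \Omega(\sqrt m)$ so the Bernoulli sum concentrates) and the stopping condition (once $N_k = O(\sqrt m)$ the recursion no longer bites, but by then $\Theta(\log\log m)$ rounds have already elapsed) is routine by comparison.
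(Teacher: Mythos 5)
Your plan is, at its core, the same squaring recursion the paper uses: the paper tracks the fraction $\Lambda_t$ of not-yet-started jobs, characterizes one round via i.i.d.\ geometric variables (Observation~\ref{lemma:geom}), and shows quadratic decay through $\mathbb{E}[(\lambda-X)_+]=\lambda+e^{-\lambda}-1\geq\lambda^2/e$ for $X\sim\operatorname{Exp}(1)$ (Lemma~\ref{lemma:expectation_lambdae}, Lemma~\ref{ProbabilityInduction}), then iterates for $\Theta(\log\log m)$ rounds exactly as you propose. Your bookkeeping in terms of long jobs is an equivalent reformulation: a machine receiving $k_i$ jobs leaves $(\operatorname{Bin}(k_i,\frac1N)-1)_+$ long jobs unstarted, whose expectation is again $\lambda+e^{-\lambda}-1$ for $\lambda=k_i/N$; your deferred-decisions treatment of adaptivity within a round is sound; and the paper even avoids your base-case step, since it starts from $\Lambda_0=1$ deterministically rather than needing the number of long jobs to concentrate around $m$.

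The genuine gap is the step you flag yourself: that \emph{every} assignment the policy may choose at time $k$ --- adaptively, after seeing the history, and possibly very unbalanced --- leaves $\Omega(N_k^2/m)$ jobs blocked \emph{with high probability}. A balls-into-bins pigeonhole does not give this as stated, because the randomness is not in where balls land but in which of the $k_i$ queued jobs turn out to be long; the quantity to control is $\sum_i(\operatorname{Bin}(k_i,\frac1N)-1)_+$ (equivalently $\sum_i(k_i-G_i)_+$) uniformly over all vectors $(k_1,\dots,k_m)$ with $\sum_ik_i=R_k$. Convexity handles the expectation, but a uniform lower-tail bound cannot come from a single Chernoff/Hoeffding application (the summands are bounded only by $k_i$, which may be as large as $R_k$), nor from a union bound over assignments, since the assignment depends on the conditioning. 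The paper closes exactly this hole with a majorization/stochastic-dominance lemma (Lemma~\ref{RemainingMinimization}, Corollary~\ref{coro:stochdom}): moving a job from a more-loaded to a less-loaded machine stochastically decreases the number of remaining jobs, so the balanced assignment is the best case \emph{in distribution}; this simultaneously proves optimality of the balancing policy (Proposition~\ref{NumberOfAssignedJobs}, which additionally needs the monotone cost-to-go/Bellman argument to handle adaptivity across rounds) and reduces the tail bound to the balanced case, where Hoeffding applies with summands bounded by $\lceil R_k/m\rceil$. You would need to supply this dominance step, or an equivalent uniform lower-tail argument, before your round-$k$ recursion is rigorous; the remainder of your outline then goes through.
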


    Let us assume for now that the above lemmas hold. Then, we simply need to show that $\OPT=O(1)$ to prove the theorem.
    
		\begin{proof}[Proof of Theorem~\ref{LowerBound}]
		On the one hand, Lemmas~\ref{OPTdeltaToOPTone} and~\ref{loglogm} imply 
		$
		\OPTdelta = \Omega(\delta\cdot\log\log(m))
		$.
		On the other hand, we can use the List Scheduling policy ($LS$) due to Graham~\cite{G66} to obtain an upper bound on the value of an optimal non-anticipatory policy. Whenever a machine becomes idle, $LS$ schedules any non-scheduled job on it. For any fixed realization $\boldsymbol{p}=(p_j)_{j\in[Nm]}$ we obtain for its makespan
		$
		C_{\max}^{LS}(\boldsymbol{p})=\left\lceil\frac{1}{m}\sum_{j\in[Nm]}p_j\right\rceil \leq 1+\frac{1}{m}\sum_{j\in[Nm]}p_j.
		$
		As a result, taking expectations on both sides yields
		\[
		\OPT\leq \mathbb{E}[C_{\max}^{LS}]\leq 1+\frac{1}{m}\sum_{j\in[Nm]}\mathbb{E}[P_j]=1+\frac{1}{m}\cdot Nm\cdot \frac{1}{N}=2,
		\]
		concluding the proof of the theorem.
	\end{proof}
	
	To prove the lemmas, we first make an observation on the structure of optimal $\delta$-delay policies. When we execute a set of Bernoulli jobs on a machine, we immediately observe whether
	one of the jobs was a long job (i.e., $p_j=1$), and also the number of vanishing jobs (i.e., $p_j=0$) that have already been executed. This indicates that optimal $\delta$-delay policies do not insert deliberate idle time in the schedule (since waiting does not provide any information on running jobs), and for the case $\frac{1}{\delta}\in\mathbb{N}$,
	they may only take reassignment decisions at times of the form $k \delta$ for $ k\in\mathbb{N}$. 
	We call policies with this property \emph{$\delta$-active}.

	\begin{proof}[Proof of Lemma~\ref{OPTdeltaToOPTone}]
	The starting time of each job in $\OPTdelta$ is an integer multiple
	of $\delta$, because $1/\delta \in \mathbb{N}$ and $\OPTdelta$ is $\delta$-active.
	Let $J_{ki}(\boldsymbol{p})$ denote the set of jobs started on machine $i$ at time $k\delta$ by $\OPTdelta$, for a realization $\boldsymbol{p}\in\{0,1\}^n$ of the processing times. $J_{ki}(\boldsymbol{p})$ may contain many vanishing jobs executed at time $t=k\delta$, and at most one
	long job executed during the time interval $[k\delta,k\delta+1)$. It is easy to construct
	a $1$-delay policy (call it $\Pi_1$) that executes the same set of jobs $J_{ki}(\boldsymbol{p})$ during the interval $[k,k+1)$ on machine $i$, by taking
	at time $k-1$ the same reassignment decisions  as $\OPTdelta$ takes at time $(k-1)\delta$, and by waiting until time $t=k$ to execute the reassigned jobs.
	In both schedules, the makespan is caused by the same long job (if there is at least one long job). Its starting time is 
	$\OPTdelta(\boldsymbol{p})-1$ in the 
	optimal $\delta$-delay policy and 
	$\frac{1}{\delta}(\OPTdelta(\boldsymbol{p})-1)$ in
	the policy $\Pi_1$.	
	Hence the policy $\Pi_1$ has makespan
	$\Pi_1(\boldsymbol{p}) = \frac{1}{\delta}\cdot(\OPTdelta(\boldsymbol{p})-1)+1$ for any realization $\boldsymbol{p}\neq\boldsymbol{0}$, and $\Pi_1(\boldsymbol{p})=\OPTdelta(\boldsymbol{p})=0$ if $\boldsymbol{p}=\boldsymbol{0}$.
	Taking expectations yields
	\[
	\OPT_1 \leq \E[\Pi_1(\boldsymbol{p})] = \frac{1}{\delta}\cdot \OPTdelta + \mathbb{P}(\boldsymbol{p}\ne\boldsymbol{0})\cdot \left(1-\frac{1}{\delta}\right)\leq \frac{1}{\delta}\cdot \OPTdelta,
	\]
	where we have used the fact that $\delta\leq 1$.
	This implies 
	$\OPTdelta \geq \delta \cdot \OPT_1$.
\end{proof}

	This lemma allows us to work with $1$-delay policies,
	which are easier to handle: 
	At all times $t\in\mathbb{N}$,
	a $1$-active policy
	observes the set of jobs
	non-started yet at time $t-1+\epsilon$
	(for an infinitesimal small $\epsilon>0$)
	and reassigns them to \emph{any} machine, on which they will start at time $t$ at the earliest: 
	we call it an \emph{iteration}.
	
	We denote by $R_t$ the random variable describing the number of remaining jobs at time $t\in\mathbb{N}_0$, before $\OPT_1$ runs the jobs,
	and by $\Lambda_t=\frac{R_t}{Nm}$ 
	the fraction of remaining jobs at time $t$.
	For the initial state we have $\Lambda_0=1$ (a.s.).
	Not surprisingly, the optimal policy balances the remaining jobs as evenly as possible on the $m$ machines.
	
\begin{restatable}{proposition}{PNumberOfAssignedJobs}
\label{NumberOfAssignedJobs}
In iteration~$t$, $\OPT_1$ assigns the
remaining $\Lambda_t Nm$ jobs by balancing
the load as evenly as possible, i.e., each machine receives $\lceil\Lambda_t N\rceil$ or $\lfloor\Lambda_t N\rfloor$ jobs.
\end{restatable}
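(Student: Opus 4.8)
The plan is to establish the proposition by a structural argument showing that, among $1$-active policies---which by the discussion preceding the statement is without loss of generality when $1/\delta\in\mathbb N$---one that balances the remaining jobs in every iteration is optimal; and balancing $R_t$ jobs on $m$ machines means exactly that each machine receives $\lceil R_t/m\rceil=\lceil\Lambda_t N\rceil$ or $\lfloor R_t/m\rfloor=\lfloor\Lambda_t N\rfloor$ jobs, since $R_t=\Lambda_t Nm$. First I would recast $\OPT_1$ as a Markov decision process. At any integer time $t$, a $1$-active policy has observed which previously started jobs were long, but the $R_t$ not-yet-started jobs are still independent $\operatorname{Bernoulli}(1/N)$ conditional on the history; hence their identities carry no information and the only relevant state is $R_t$ (with $R_0=Nm$), while an action is a composition $(q_1,\dots,q_m)$ of $R_t$ specifying how many remaining jobs go to each machine. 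I would record the one-step dynamics: on a machine holding $q_i$ jobs, the vanishing jobs preceding its first long job are consumed, that long job runs during $[t,t+1)$, and the rest of its queue is deferred to iteration $t+1$; writing $X_t$ for the number of machines that receive at least one long job and $L_t$ for the number of long jobs present at the start of iteration $t$, we have $L_{t+1}=L_t-X_t$. A small bookkeeping step then shows the makespan of any $1$-active policy equals $\tau:=\min\{t:L_t=0\}$ (and $0$ if $L_0=0$), so $\OPT_1=\min\sum_{t\ge0}\Pro[L_t\ge1]$, the minimum over all choices of the successive compositions. Finally I would note that whenever $L_t\ge1$ at least one job is consumed, so $R_t$ strictly decreases until the long jobs are cleared; this is what licenses an induction on $R$.

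Second, I would prove that balancing is the optimal action in every state. Let $V(R)$ be the optimal cost from a fresh state with $R$ remaining jobs; by the first paragraph $V(R)=\Pro[L\ge1]+\min_{(q_i)\vdash R}\E\bigl[\mathbf 1[L\ge1]\,V(R')\bigr]$, where $L\sim\operatorname{Bin}(R,1/N)$ and $R'<R$ is the (realization- and composition-dependent) number of deferred jobs. I would prove by strong induction on $R$ that the minimizer is the balanced composition, carrying along the auxiliary monotonicity $V(R-1)\le V(R)$ (itself provable by a coupling that mimics the $(R-1)$-job policy with one extra job that is junk with probability $1-1/N$). For the induction step I would use an exchange argument: take any composition with $q_i\ge q_{i'}+2$, move one job from machine $i$ to the end of machine $i'$'s queue, and couple the processing-time realizations. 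Then (a) the number $X_t$ of machines receiving a long job does not decrease in expectation---moving a vanishing job leaves every machine's ``contains-a-long-job'' status unchanged, while moving a long job onto machine $i'$ increases $X_t$ unless $i'$ already had one; and (b) a less loaded machine defers no more jobs once it hits a long job, so the number $R_{t+1}$ of deferred jobs also decreases in expectation. Combining (a), (b), the concavity of $q\mapsto 1-(1-1/N)^q$, and the inductive monotonicity of $V$, the Bellman value does not increase under the interchange; iterating the interchange reaches the balanced composition, which is therefore optimal, and this is exactly the claimed statement.

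The main obstacle is making the exchange step rigorous rather than just ``in expectation'' at the level of $X_t$ and $R_{t+1}$. Per realization the interchange can change $R_{t+1}$ by $-1$, $0$, or $+1$: it decreases $R_{t+1}$ when machine $i$ has a long job strictly before its last queue position, but increases it in the rarer event that machine $i'$ already has a long job while machine $i$'s only long job (if any) was its last-queued one---the single long job moved off an otherwise all-vanishing machine is then ``wasted''. Since the cost is $\Pro[L\ge1]+\E[\mathbf 1[L\ge1]V(R_{t+1})]$ and $V$ is nonlinear, the mere signs of $\E[\Delta X_t]$ and $\E[\Delta R_{t+1}]$ do not suffice; one needs a quantitative property of $V$---e.g.\ a uniform bound on, or the monotone decrease of, its increments $V(r+1)-V(r)$---to show the gain from the typical $-1$ cases dominates the loss from the $+1$ cases. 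Establishing such a property of $V$ within the same induction---equivalently, showing that the one-step Bellman value is Schur-convex in $(q_1,\dots,q_m)$---is the technical heart of the proof.
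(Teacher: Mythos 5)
Your setup (MDP on the number of remaining jobs, Bellman equation for the cost-to-go, exchange argument moving one job from a machine with $q_i\ge q_{i'}+2$ to machine $i'$) matches the paper's structure, and your monotonicity claim $V(R-1)\le V(R)$ is indeed used by the paper and is easy. But the step you flag as ``the technical heart'' is a genuine gap, and the route you propose for closing it (proving concavity of $V$, or Schur-convexity of the one-step Bellman value, within the same induction) is not what is needed and would be considerably harder. The resolution is to upgrade the comparison of the two compositions from ``in expectation under the identity coupling'' to \emph{first-order stochastic dominance} of the resulting number of deferred jobs. Concretely, with $G_1,G_2$ i.i.d.\ $\operatorname{Geom}(1/N)$ and $k_1<k_2$, the paper proves
\[
(k_1+1-G_1)_+ + (k_2-1-G_2)_+ \;\preccurlyeq\; (k_1-G_1)_+ + (k_2-G_2)_+
\]
as distributions, by a direct case analysis on the threshold $\alpha$ using the explicit geometric tail probabilities (the three cases $\alpha\le k_2$, $k_2+1\le\alpha\le k_1+k_2-1$, $\alpha=k_1+k_2$). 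Your identity coupling cannot show this pathwise --- you correctly observe the $+1$ events --- but the distributional inequality holds nonetheless, and once you have $R_t(\boldsymbol{k})\succcurlyeq R_t(\boldsymbol{k}^*)$ in the stochastic order, the \emph{mere monotonicity} of the cost-to-go $r\mapsto J^*(r)$ gives $\E[J^*(R_t(\boldsymbol{k}^*))]\le\E[J^*(R_t(\boldsymbol{k}))]$; no bound on the increments of $V$, no concavity, and no separate accounting of the ``number of machines receiving a long job'' is required. (Note also that the immediate one-step cost $1-(1-1/N)^r$ depends only on the total $r$, not on the composition, so it drops out of the comparison entirely.)

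So the missing idea is precisely the stochastic-dominance lemma for a single exchange of geometric ``consumption'' variables. Without it, your argument as written does not go through: the expectation-level statements (a) and (b) combined with nonlinearity of $V$ prove nothing, as you yourself acknowledge. If you want to complete your proof, replace the coupling argument by the distributional computation: write $\mathbb{P}\bigl(\min(k_1+1,G_1)+\min(k_2-1,G_2)\ge\alpha\bigr)$ and $\mathbb{P}\bigl(\min(k_1,G_1)+\min(k_2,G_2)\ge\alpha\bigr)$, decompose over the four events $\{G_1\lessgtr k_1\}\times\{G_2\lessgtr k_2-1\}$ (suitably adjusted), and verify the inequality case by case; then iterate exchanges from any composition down to the balanced one and invoke monotonicity of $J^*$.
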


\begin{proof}[Proof sketch.]
Consider a realization
of the jobs started before time $t-1$ for some $t\in\mathbb{N}$, and in which
$r$ jobs remain at time $t-1$. A $1$-active policy must
reassign the $r$ jobs to the $m$ machines. 
By moving jobs between two machines,
one can show that 
the balancing policy, which assigns $\left\lfloor\frac{r}{m}\right\rfloor$ or 
$\left\lceil\frac{r}{m}\right\rceil$ jobs to each machine, minimizes the (random) number of remaining jobs at time $t$ for the order of stochastic dominance, in the class of $1$-active policies. Then, the optimality of the balancing policy follows from the fact that the expected cost-to-go from iteration $t$, $r\mapsto \mathbb{E}[\OPT_1-t|R_t=r]$ is monotone decreasing with respect to the number of remaining jobs. 
\end{proof}

For notational convenience let $\lfloor\Lambda_t N\rceil_i$ denote the number of jobs assigned to machine $i$ by $\OPT_1$.
By independence of the
processing times, the number of jobs that must be drawn before picking a long job is geometrically distributed with parameter $\frac{1}{N}$. Consequently, we obtain the following observation.
\begin{observation}\label{lemma:geom}
		For $i\in[m]$ let $G_i\sim \operatorname{Geom}(\frac{1}{N})$ be i.i.d.\ random variables. Then, we have
		\[\Lambda_{t+1} \overset{d}{=}\frac{1}{Nm}\sum_{i=1}^{m}(\lfloor\Lambda_t N\rceil_i - G_i)_+.\]
	\end{observation}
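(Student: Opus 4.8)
The plan is to analyze a single iteration of $\OPT_1$ machine by machine. Recall from the discussion preceding Lemma~\ref{OPTdeltaToOPTone} that $\OPT_1$ is $1$-active: it takes reassignment decisions only at integer times and inserts no deliberate idle time. Fix an iteration, i.e.\ a time interval $[t,t+1)$ with $t\in\mathbb{N}_0$, and condition on everything observed by $\OPT_1$ up to time $t$; by Proposition~\ref{NumberOfAssignedJobs} this information determines $\Lambda_t$ together with the numbers $\lfloor\Lambda_t N\rceil_i$ of jobs that $\OPT_1$ routes to each machine $i$ for this iteration, and $\sum_{i=1}^m\lfloor\Lambda_t N\rceil_i=\Lambda_t Nm$. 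On a single machine $i$, a vanishing job ($p_j=0$) completes instantaneously while a long job ($p_j=1$) occupies a full unit of time; hence the machine clears its queue in order, completing every vanishing job it meets for free until it reaches the first long job, which starts at time $t$ and finishes exactly at $t+1$ — this ends the iteration — and leaves every job behind it non-started. So, writing $G_i$ for the position of the first long job in the queue of machine $i$ (and $G_i=+\infty$ if that queue has no long job), machine $i$ completes $\min\{G_i,\lfloor\Lambda_t N\rceil_i\}$ of its jobs and leaves exactly $(\lfloor\Lambda_t N\rceil_i-G_i)_+$ behind; the internal order of the queue is irrelevant for this count. Summing over the machines gives $R_{t+1}=\sum_{i=1}^m(\lfloor\Lambda_t N\rceil_i-G_i)_+$.

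Next I would identify the law of $(G_i)_{i\in[m]}$. The jobs that $\OPT_1$ assigns in iteration $t$ have never been started, so their processing times have not been observed; since the $P_j$ are mutually independent and $\OPT_1$ is non-anticipatory, conditionally on the history up to time $t$ these processing times are i.i.d.\ $\operatorname{Bernoulli}(\tfrac1N)$ and independent of the (history-measurable) assignment counts $(\lfloor\Lambda_t N\rceil_i)_i$. Consequently, conditionally on machine $i$ receiving $c_i$ jobs, $G_i$ is the index of the first success in an i.i.d.\ $\operatorname{Bernoulli}(\tfrac1N)$ sequence (capped by $c_i$), and a one-line comparison of the atoms $\mathbb{P}(G_i=k)$ with $\mathbb{P}(\widetilde G_i=k)$ shows $(\,c_i-G_i\,)_+\overset{d}{=}(\,c_i-\widetilde G_i\,)_+$ for $\widetilde G_i\sim\operatorname{Geom}(\tfrac1N)$. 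Across machines these are independent and independent of the counts, so the vector of remaining counts has the same law as $\big((\lfloor\Lambda_t N\rceil_i-\widetilde G_i)_+\big)_{i\in[m]}$ with i.i.d.\ $\widetilde G_i\sim\operatorname{Geom}(\tfrac1N)$ independent of $(\lfloor\Lambda_t N\rceil_i)_i$. Dividing by $Nm$ and undoing the conditioning on the history yields exactly $\Lambda_{t+1}\overset{d}{=}\frac{1}{Nm}\sum_{i=1}^m(\lfloor\Lambda_t N\rceil_i-G_i)_+$ with $G_i\sim\operatorname{Geom}(\tfrac1N)$ i.i.d.

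The step I expect to need the most care is the conditional-independence claim: making precise that, given all information observed so far, the durations of the jobs $\OPT_1$ has not yet run are still i.i.d.\ $\operatorname{Bernoulli}(\tfrac1N)$ and independent of how many of them are routed to each machine. This is the familiar fact that a non-anticipatory policy cannot create correlations among the processing times of jobs it has not started, but it has to be invoked carefully because both the set of remaining jobs and the counts $\lfloor\Lambda_t N\rceil_i$ are themselves random and correlated with the past; the clean way is to condition on the $\sigma$-algebra generated by the already-started jobs together with the policy's internal randomization, under which the remaining durations are i.i.d.\ and the $\lfloor\Lambda_t N\rceil_i$ become deterministic.
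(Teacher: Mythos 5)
Your proposal is correct and follows the same route the paper takes: the paper justifies this observation in one sentence (independence of the processing times makes the index of the first long job in each queue geometric with parameter $\frac{1}{N}$, so machine $i$ leaves $(\lfloor\Lambda_t N\rceil_i - G_i)_+$ jobs non-started), and you simply spell out the details — the per-machine dynamics, the truncated-versus-true geometric identification, and the conditional independence of the unobserved processing times from the history-measurable assignment counts — all of which the paper leaves implicit.
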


We can now prove that $\OPT_1$ is of order $\Omega(\log\log(m))$.
To do this, we first need a lemma showing that $\Lambda_t$
converges quadratically to $0$.

\begin{restatable}{lemma}{LProbabilityInduction}\label{ProbabilityInduction}
		For $N=\Omega(\sqrt{m})$ and
 $t\in \left\{1,\ldots,\lfloor\log_{2}\left(\frac{1}{4}\log_{2e}(m)\right)\rfloor\right\}$ we have
		\[
		\mathbb{P}\left(\Lambda_{t}\geq (2e)^{1-2^t}\right)\geq \left(1-e^{-2\sqrt{m}}\right)^{t}.
		\]
\end{restatable}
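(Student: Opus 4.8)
The plan is to prove the bound by induction on $t$, with a squaring effect for the fraction of remaining jobs as the engine. Write $\lambda_t:=(2e)^{1-2^t}$, so that $\lambda_{t+1}=\tfrac1{2e}\lambda_t^{2}$. In the induction step I would condition on the history up to the end of iteration $t$ on the event $\{\Lambda_t=\lambda\}$ for a value $\lambda\ge\lambda_t$. By Proposition~\ref{NumberOfAssignedJobs}, $\OPT_1$ then assigns $n_i\in\{\lfloor\lambda N\rfloor,\lceil\lambda N\rceil\}$ jobs to machine~$i$; a remaining job has never been started, so its processing time is still a fresh $\operatorname{Bernoulli}(1/N)$, and Observation~\ref{lemma:geom} shows that, conditionally, $\Lambda_{t+1}\overset{d}{=}\tfrac1{Nm}\sum_{i=1}^{m}(n_i-G_i)_+$ with i.i.d.\ $G_i\sim\operatorname{Geom}(1/N)$ and each summand in $[0,n_i]\subseteq[0,\lceil\lambda N\rceil]$. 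It then suffices to establish $\mathbb{P}(\Lambda_{t+1}\ge\lambda_{t+1}\mid\Lambda_t=\lambda)\ge 1-e^{-2\sqrt m}$ uniformly in $\lambda\ge\lambda_t$; combined with the hypothesis $\mathbb{P}(\Lambda_t\ge\lambda_t)\ge(1-e^{-2\sqrt m})^{t}$, this closes the induction.

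Next I would lower bound the conditional mean of $\Lambda_{t+1}$. Using $\mathbb{E}[(n-G)_+]=\sum_{j=1}^{n-1}\bigl(1-(1-\tfrac1N)^{j}\bigr)$, the estimates $1-(1-\tfrac1N)^{j}\ge 1-e^{-j/N}\ge \tfrac jN-\tfrac{j^{2}}{2N^{2}}$, and $n_i\le N+1$, one obtains $\mathbb{E}[(n_i-G_i)_+]\ge \tfrac{n_i(n_i-1)}{2N}\bigl(1-\tfrac{n_i}{3N}\bigr)\ge \tfrac13\cdot\tfrac{n_i(n_i-1)}{N}$. Summing over~$i$, using $n_i\ge\lfloor\lambda N\rfloor$ and $\lambda N\to\infty$, this gives $\mathbb{E}[\Lambda_{t+1}\mid\Lambda_t=\lambda]\ge\tfrac13\lambda^{2}(1-o(1))$. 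The point is that $\tfrac{2e}{3}>1$, so the conditional mean strictly exceeds the target $\lambda_{t+1}=\tfrac1{2e}\lambda_t^{2}\le\tfrac1{2e}\lambda^{2}$ by a constant multiplicative factor; that gap is what I would trade against the concentration error.

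Since $R_{t+1}:=Nm\,\Lambda_{t+1}$ is a sum of $m$ independent variables in $[0,\lceil\lambda N\rceil]$, the one–sided multiplicative Chernoff bound, with a constant $\epsilon$ chosen so that $(1-\epsilon)\mathbb{E}[R_{t+1}]\ge\lambda_{t+1}Nm$, gives $\mathbb{P}(\Lambda_{t+1}<\lambda_{t+1}\mid\Lambda_t=\lambda)\le\exp\!\bigl(-\Omega(\mathbb{E}[R_{t+1}]/\lceil\lambda N\rceil)\bigr)\le\exp(-\Omega(\lambda m))$. For every $t$ in the stated range, $2^{t}\le\tfrac14\log_{2e}m$, hence $\lambda\ge\lambda_t\ge 2e\,m^{-1/4}$, so this is $\exp(-\Omega(m^{3/4}))\le e^{-2\sqrt m}$ for $m$ large; this also forces $\lambda N=\Omega(m^{1/4})\to\infty$, which is exactly where $N=\Omega(\sqrt m)$ is needed. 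The base case $t=1$ is the same computation with $\Lambda_0=1$, so each machine receives $N$ jobs, $\mathbb{E}[(N-G)_+]=(N-1)(1-\tfrac1N)^{N-1}\ge\tfrac{N-1}{e}$, and Chernoff yields $\mathbb{P}(\Lambda_1\ge\tfrac1{2e})\ge 1-e^{-\Omega(m)}\ge 1-e^{-2\sqrt m}$.

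The hard part will be the bookkeeping of constants in the conditional mean: the crude bound $1-(1-\tfrac1N)^{j}\ge \tfrac jN e^{-1}$ only yields $\mathbb{E}[\Lambda_{t+1}\mid\Lambda_t]\gtrsim\tfrac1{2e}\Lambda_t^{2}$, which matches the threshold $\lambda_{t+1}$ with \emph{no} slack, so the Chernoff step would collapse. Keeping the second–order term and verifying that a genuine multiplicative gap survives uniformly — both when $\lambda$ is a fixed constant such as $\lambda_1$ and when $\lambda$ is as small as $\Theta(m^{-1/4})$, and with $\lambda_t N\to\infty$ so that the Chernoff exponent dominates $2\sqrt m$ for all $t\le\lfloor\log_2(\tfrac14\log_{2e}m)\rfloor$ — is where the effort goes.
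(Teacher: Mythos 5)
Your proposal is correct and follows essentially the same route as the paper: induction on $t$, the representation $\Lambda_{t+1}\overset{d}{=}\frac{1}{Nm}\sum_{i}(\lfloor\Lambda_t N\rceil_i-G_i)_+$, a lower bound on the conditional mean that exceeds the threshold $\lambda_{t+1}=\lambda_t^2/(2e)$ by a constant multiplicative factor, and a concentration bound whose exponent dominates $2\sqrt{m}$ because $\lambda_t\geq 2e\,m^{-1/4}$ and $N=\Omega(\sqrt m)$. The only cosmetic difference is that the paper first passes, via stochastic dominance, to i.i.d.\ minorants $\left(1-\frac{G_i+1}{\lambda_0 N}\right)_+$ whose expectation it bounds in closed form before applying Hoeffding, whereas you bound $\mathbb{E}[(n_i-G_i)_+]$ directly for each $\lambda$ and invoke a multiplicative Chernoff bound; both yield the same uniform-in-$\lambda$ one-step estimate.
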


A rigorous proof of this lemma is proved in the appendix. 
For now, we just explain the intuition behind the quadratic convergence of $\Lambda_t$ to $0$ in expectation, by taking a (hand-wavy look) at the conditional expectation $\mathbb{E}[\Lambda_{t+1}|\Lambda_t=\lambda]$ for large values of $N$.
Using that $\frac{\lfloor \lambda N \rceil}{N} \to \lambda$ and the well-known fact that $\frac{G_i}{N}$ converges in distribution to 
an exponential random variable $X\sim \operatorname{Exp}(1)$, 
we see that when $N\to\infty$,
$\mathbb{E}[\Lambda_{t+1}|\Lambda_t=\lambda]$ should approach
$
\mathbb{E}[(\lambda-X)_+]
= \int_{x=0}^\lambda (\lambda-x) e^{-x} dx
= \lambda + e^{-\lambda} - 1.
$
Then, the quadratic convergence of $\Lambda_t$ is suggested
by the inequalities 
$\frac{\lambda^2}{e}\leq \lambda + e^{-\lambda} - 1
\leq \frac{\lambda^2}{2}$, which hold for all $\lambda\in[0,1]$.

With this lemma, we obtain a short proof for Lemma~\ref{loglogm}.
	\begin{proof}[Proof of Lemma~\ref{loglogm}]
		By Lemma~\ref{ProbabilityInduction} we obtain for $t=\Omega(\log\log(m))$ and $N=\Omega(\sqrt{m})$
		\begin{align*}
		\OPT_1\geq 
		\mathbb{P}\left(C_{\max}\geq t\right)\cdot t
		\geq 
		\mathbb{P}\left(\Lambda_{t}\geq (2e)^{1-2^t}\right)\cdot t\geq \underbrace{\left(1-e^{-2\sqrt{m}}\right)^{t}}_{\xrightarrow{m\to\infty} 1}\cdot t =\Omega(\log\log(m)).
		\end{align*}
	\end{proof}
	
	A similar result can be shown for $\tau$-shift policies, for the same instance $I_N$.
	
	\begin{theorem}\label{LowerBoundtau}
		Let $\tau\leq 1$, such that $\frac{1}{\tau}\in\mathbb{N}$. 
		For instance $I_N$ let \OPTshift and $\OPT$ denote the value of an optimal $\tau$-shift policy and of an optimal non-anticipatory policy, respectively. Then,
		for $N=\Omega(\sqrt{m})$ we have
		\[
		\frac{\OPTshift}{\OPT} = \Omega(\tau\cdot\log\log(m)).
		\]
	\end{theorem}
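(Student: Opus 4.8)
\textbf{Plan for the proof of Theorem~\ref{LowerBoundtau}.}
The plan is to mirror the two-step structure used for $\delta$-delay policies, reducing everything to the already-established bound $\OPT_1 = \Omega(\log\log m)$. First I would observe, exactly as in the discussion preceding the proof of Lemma~\ref{OPTdeltaToOPTone}, that for the Bernoulli instance $I_N$ an optimal $\tau$-shift policy never benefits from inserting deliberate idle time (waiting reveals nothing about running jobs) and takes reassignment decisions only at the permitted times, which here are the integer multiples of $\tau$. Since $1/\tau\in\mathbb{N}$, these times $k\tau$ with $k\in\mathbb{N}_0$ are in particular the only times at which jobs start, and the behavior of such a policy between consecutive decision points $k\tau$ and $(k+1)\tau$ is completely analogous to the behavior of a $\delta$-active policy between $k\delta$ and $(k+1)\delta$. (Indeed, a $\tau$-shift policy with $1/\tau\in\mathbb{N}$ and no deliberate idling is precisely $\tau$-active in the sense defined for $\delta$-delay policies on this instance, because the delay $\delta$ plays no role once all starting times are forced to lie on the grid $\tau\mathbb{Z}$.)

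Next I would prove the analogue of Lemma~\ref{OPTdeltaToOPTone}, namely $\OPTshift \geq \tau\cdot\OPT_1$. The argument is the time-rescaling argument of Lemma~\ref{OPTdeltaToOPTone} almost verbatim: given an optimal $\tau$-shift policy, let $J_{ki}(\boldsymbol{p})$ be the set of jobs it starts on machine $i$ at time $k\tau$; this set contains arbitrarily many vanishing jobs and at most one long job, occupying the interval $[k\tau,k\tau+1)$. Construct a $1$-delay policy $\Pi_1$ that makes, at time $k-1$, the same reassignment decisions that $\OPTshift$ makes at $(k-1)\tau$, waits until time $k$, and then runs $J_{ki}(\boldsymbol{p})$ on machine $i$ during $[k,k+1)$. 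For every realization $\boldsymbol{p}\neq\boldsymbol{0}$ the makespan is caused by the same last long job, whose start time is $\OPTshift(\boldsymbol{p})-1$ under $\OPTshift$ and $\tfrac{1}{\tau}(\OPTshift(\boldsymbol{p})-1)$ under $\Pi_1$, so $\Pi_1(\boldsymbol{p})=\tfrac{1}{\tau}(\OPTshift(\boldsymbol{p})-1)+1$, while $\Pi_1(\boldsymbol{0})=\OPTshift(\boldsymbol{0})=0$. Taking expectations and using $\tau\leq 1$ gives $\OPT_1\leq\E[\Pi_1]=\tfrac{1}{\tau}\OPTshift+\mathbb{P}(\boldsymbol{p}\neq\boldsymbol{0})(1-\tfrac1\tau)\leq\tfrac1\tau\OPTshift$, hence $\OPTshift\geq\tau\cdot\OPT_1$.

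Finally, combining this with Lemma~\ref{loglogm} yields $\OPTshift=\Omega(\tau\log\log m)$ for $N=\Omega(\sqrt m)$, and the bound $\OPT\leq 2$ from the List Scheduling computation in the proof of Theorem~\ref{LowerBound} (which does not depend on $\delta$ or $\tau$ at all) gives the stated ratio $\OPTshift/\OPT=\Omega(\tau\log\log m)$. The only genuinely new point that needs care is the first one: verifying that, on the instance $I_N$, restricting an optimal $\tau$-shift policy to be $\tau$-active loses nothing. I expect this to be the main obstacle — though a mild one — since one must argue that the freedom a $\tau$-shift policy has (it may in principle start jobs at non-grid times if it chose not to reassign, but inserting idle time is never helpful, and any job it starts it could equally start at the next grid point without delaying the causal long job beyond what $\OPTshift$ achieves) does not give it any advantage that the rescaling argument fails to capture; this is handled by the same exchange/no-idling reasoning already invoked for $\delta$-delay policies, so no new ideas are required.
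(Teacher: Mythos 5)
Your proposal is correct and takes essentially the same route as the paper: the paper likewise observes that for $\frac{1}{\tau}\in\mathbb{N}$ an optimal $\tau$-shift policy on $I_N$ must be $\tau$-active, concludes that the optimal $\tau$-active policy coincides with both the optimal $\tau$-shift and the optimal $\tau$-delay policy (so $\OPTshift=\ensuremath{\textsc{OPT}_{\tau}^{\textsc{delay}}}$), and then simply invokes Theorem~\ref{LowerBound} instead of re-running the rescaling argument. The only nit is a harmless index shift in your simulation: since a $\tau$-shift reassignment at time $k\tau$ takes effect at $k\tau$ rather than $(k+1)\tau$, your policy $\Pi_1$ should copy at time $k-1$ the decision $\OPTshift$ makes at $k\tau$ (not at $(k-1)\tau$) in order to run $J_{ki}$ during $[k,k+1)$; this does not affect the asymptotic conclusion and disappears entirely in the paper's shorter reduction.
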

	
	\begin{proof}
	 Similarly as for the case of $\delta$-delay policies,
	 for $1/\tau\in\mathbb{N}$ it is clear that
	 an optimal $\tau$-shift policy for instance $I_N$ 
	 must be $\tau$-active. Therefore, the optimal $\tau$-active policy coincides with both the optimal $\tau$-shift and the optimal $\tau$-delay policy. This shows that
	 $\OPTshift=\ensuremath{\textsc{OPT}_{\tau}^{\textsc{delay}}}\xspace$, and the result follows
	 from Theorem~\ref{LowerBound}.
	\end{proof}

\section{Conclusion}

We considered the stochastic optimization problem of minimizing the expected makespan on parallel identical machines. While any list scheduling policy is a constant factor approximation, the performance guarantee of all fixed assignment policies is at least $\Omega\left(\frac{\log m}{\log \log m}\right)$. We introduced two classes of policies  to establish a happy medium between the two extremes of adaptive and non-adaptive policies. 
The policy \LEPTda, which is both a $\delta$-delay and a $\tau$-shift policy, was shown to have performance guarantee of $\mathcal{O}(\log\log m)$ if $\delta$ and $\tau$ are in the scale of the instance. Moreover, we provided a matching lower bound for $\delta,\tau=\Theta(\OPT)$. Therefore, \LEPTda improves upon the performance of an optimal fixed assignment policy using a small amount of adaptivity. Moreover, there exists no $\delta$-delay or $\tau$-shift policy beating its performance guarantee by more than a constant.

For the case of $\delta,\tau=\mathcal{O}(\frac{1}{\log\log m})$,
Theorem~\ref{LowerBound} gives a constant lower bound, while
Theorem~\ref{theorem:loglogmApproximation} only gives a doubly logarithmic upper bound. An open question is whether a constant approximation guarantee is possible in this case.

A possible future line of research is the analysis of $\delta$-delay and $\tau$-shift policies for stochastic scheduling problems with other numerous objectives, different machine environments as well as various job characteristics. Moreover, it would be interesting to design other non-anticipatory policies whose adaptivity can be controlled.

\section*{Acknowledgements}

We thank Thibault Juillard for helpful discussions on the topic of this paper. We also thank the anonymous referees for helpful comments.

\bibliography{RestrictedAdaptivity.bib}

\newpage
\appendix

\section{Useful results in probability theory}\label{appendix:1}

\begin{restatable}[Markov's inequality, see e.g.~{\cite[Theorem 3.1]{MU17}}]{lemma}{Markov}\label{lemma:Markov}
	Let $X$ be a non-negative random variable. Then we have for all $\zeta>0$
	\[
	\Pro(X\geq\zeta)\leq \frac{\E[X]}{\zeta}.
	\]
\end{restatable}

\begin{restatable}[Hoeffding's inequality, see~\cite{Hoe63}]{lemma}{Hoeffding}\label{lemma:Hoeffding}
	Let $X_1,\ldots,X_m$ be independent random variables with bounded support: $X_i\in[0,1]$ (a.s.) for all $i\in [m]$. Moreover, let $X:=\frac{1}{m}\sum_{i=1}^{m}X_i$ and $\mu:=\E[X]$. Then, we have for $\zeta>0$
	\[
	\Pro(X\geq\mu+\zeta)\leq e^{-2m\zeta^2} \quad \text{and} \quad \Pro(X\leq\mu-\zeta)\leq e^{-2m\zeta^2}.
	\]
	
\end{restatable}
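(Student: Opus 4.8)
The plan is to use the Cram\'er--Chernoff (exponential Markov) method, which is the standard route to concentration bounds of this type. First I would fix $s>0$ and apply Markov's inequality (Lemma~\ref{lemma:Markov}) to the non-negative random variable $e^{smX}$, obtaining
\[
\Pro(X\geq \mu+\zeta) = \Pro\bigl(e^{smX}\geq e^{sm(\mu+\zeta)}\bigr)\ \leq\ e^{-sm(\mu+\zeta)}\,\E\bigl[e^{smX}\bigr].
\]
Since $mX=\sum_{i=1}^m X_i$ and the $X_i$ are independent, $\E[e^{smX}]=\prod_{i=1}^m\E[e^{sX_i}]$, so the bound factorizes over coordinates and it suffices to control $\E\bigl[e^{s(X_i-\E X_i)}\bigr]$ for each $i$.

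The key auxiliary fact --- and the real content of the proof --- is Hoeffding's lemma: for any random variable $Y$ supported on $[0,1]$ one has $\E\bigl[e^{s(Y-\E Y)}\bigr]\leq e^{s^2/8}$ for all $s\in\R$. I would prove this by convexity of $t\mapsto e^{st}$: for $y\in[0,1]$ we have $e^{sy}\leq (1-y)e^{0}+y\,e^{s}$, so with $p:=\E Y$ this gives $\E[e^{sY}]\leq 1-p+pe^{s}$. Setting $\varphi(s):=\log(1-p+pe^{s})-sp$, a direct computation shows $\varphi(0)=\varphi'(0)=0$ and $\varphi''(s)=q(1-q)\leq\tfrac14$ where $q:=\frac{pe^{s}}{1-p+pe^{s}}\in[0,1]$; Taylor's theorem with Lagrange remainder then yields $\varphi(s)\leq s^2/8$, i.e. $\E[e^{s(Y-\E Y)}]=e^{\varphi(s)}\leq e^{s^2/8}$.

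Applying this with $Y=X_i$ gives $\Pro(X\geq\mu+\zeta)\leq e^{-sm\zeta}\prod_{i=1}^m e^{s^2/8}=e^{-sm\zeta+ms^2/8}$ for every $s>0$; minimizing the exponent over $s$ (the optimum is at $s=4\zeta$) produces exactly $e^{-2m\zeta^2}$, which is the first inequality. For the lower tail I would apply the first inequality to the variables $X_i':=1-X_i$, which are again independent and supported on $[0,1]$, so that $X':=\frac1m\sum_i X_i'=1-X$ has mean $1-\mu$ and $\Pro(X\leq\mu-\zeta)=\Pro\bigl(X'\geq(1-\mu)+\zeta\bigr)\leq e^{-2m\zeta^2}$. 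The only mildly delicate step is Hoeffding's lemma, and there specifically the uniform bound $\varphi''\leq 1/4$; everything else is the routine Chernoff optimization.
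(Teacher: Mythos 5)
Your argument is correct and complete: the exponential Markov bound, the convexity step $e^{sy}\leq 1-y+ye^{s}$, the second-derivative bound $\varphi''\leq\frac14$ giving Hoeffding's lemma, the optimization at $s=4\zeta$, and the reduction of the lower tail to the upper tail via $X_i'=1-X_i$ are all sound. The paper does not prove this lemma at all --- it is stated as a known auxiliary result with a citation to Hoeffding's original work --- and your proof is precisely the standard argument from that reference, so there is nothing to reconcile.
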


\begin{restatable}[Chernoff bounds, see e.g.~{\cite[Theorem 4.4 and Theorem 4.5]{MU17}}]{lemma}{Chernoff}\label{lemma:Chernoff}
	Let $X_i\sim Bernoulli(p_i)$ for $i\in[m]$ be independent Bernoulli random variables.
	Moreover, let $X:=\sum_{i=1}^{m}X_i$ and $\mu:=\E[X]=\sum_{i=1}^{m}p_i$.\ Then, we have for $\eta\in(0,1)$ and $\zeta>0$
	\[
	\Pro(X\geq(1+\eta)\mu)\leq e^{-\frac{\mu\eta^2}{3}}, \quad \Pro(X\leq(1-\eta)\mu)\leq e^{-\frac{\mu\eta^2}{2}} \quad \text{and} \quad \Pro(X\geq(1+\zeta)\mu)\leq e^{-\frac{\mu\zeta^2}{2+\zeta}},
	\]
	where the last inequality follows from the bound $\Pro(X\geq(1+\zeta)\mu)\leq \left(\frac{e^\zeta}{(1+\zeta)^\zeta}\right)^\mu$ using the fact that $\ln(1+\zeta)\geq\frac{2\zeta}{2+\zeta}$ for any $\zeta>0$.
\end{restatable}

\begin{definition}[Stochastic dominance, see e.g.~{\cite[1.A.1]{Sha07}}]\label{definition:Stoch_Dom}
	Let $X,Y$ be random variables. We say \emph{$X$ is stochastically dominated by $Y$} or equivalently \emph{$Y$ stochastically dominates $X$} and write $X\preccurlyeq Y$ if for all $z\in \R$ we have
	\[
	\Pro(X\geq z)\leq \Pro(Y\geq z).
	\]
	It is also equivalent to $\Pro(X> z)\leq \Pro(Y> z)$ for all $z\in\R$.
\end{definition}

\begin{restatable}[Properties of stochastically dominated random variables, see e.g.~{\cite[Theorem 1.A.3.]{Sha07}}]{lemma}{PropStochDom}\label{lemma:Prop_Stoch_Dom}
	Let $X_1,\ldots,X_m$ be a set of independent random variables and let $Y_1,\ldots,Y_m$ be another set of independent random variables. Moreover, let  $X_i\preccurlyeq Y_i$ for all $i\in[m]$ and let $f:\R^m\to\R$ be a non-decreasing function. Then we have
	\[
	f(X_1,\ldots,X_m)\preccurlyeq f(Y_1,\ldots,Y_m).
	\]
\end{restatable}

\begin{restatable}[Stochastically dominated Bernoulli random variables]{lemma}{PropStochDom}\label{lemma:Stoch_Dom_Bernoulli}
	Let $\tilde r\leq r\leq \hat r$ be positive integers and let $\tilde p,\hat p,p_1,\ldots,p_{r}\in [0,1]$ with $\tilde p\leq p_{i}\leq \hat p$ for all $i\in[r]$. Moreover, let $\tilde x,\hat x > 0$ and let $X_1,\ldots,X_{r}$ be non-negative random variables with $\tilde x \leq X_{i}\leq \hat x$ for all $i\in[r]$ almost surely. Furthermore, for $i\in[r]$ let $ B_{i}\sim  Bernoulli(p_{i})$ be independent random variables, for $i\in[\tilde r ]$ let $\widetilde B_{i}\sim Bernoulli(\tilde p)$ be i.i.d.\ random variables and for $i\in[\hat r ]$ let $\widehat B_{i}\sim Bernoulli(\hat p)$ be i.i.d.\ random variables. Then we have
	\[
	\tilde x\cdot\sum_{i=1}^{\tilde r}\widetilde B_{i}\preccurlyeq\sum_{i=1}^{r}X_{i} B_{i}\preccurlyeq \hat x\cdot\sum_{i=1}^{\hat r}\widehat B_{i}.
	\]
\end{restatable}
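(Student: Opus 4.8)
The plan is to prove the two stochastic dominance relations in Lemma~\ref{lemma:Stoch_Dom_Bernoulli} separately, since they are symmetric in structure; I will focus on the upper bound $\sum_{i=1}^{r}X_iB_i\preccurlyeq \hat x\cdot\sum_{i=1}^{\hat r}\widehat B_i$, and the lower bound follows by an analogous argument. The key observation is that we can build up the dominance in three stages by replacing the ``data'' one ingredient at a time: first replace the random coefficients $X_i$ by the deterministic constant $\hat x$, then inflate the Bernoulli parameters $p_i$ up to $\hat p$, and finally pad the number of summands from $r$ up to $\hat r$. Each of these steps should be a monotonicity argument that I can package through Lemma~\ref{lemma:Prop_Stoch_Dom} (stochastic dominance is preserved under coordinatewise non-decreasing functions of independent random variables).

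Concretely, the first step is: since $0\le X_i\le\hat x$ almost surely and $B_i\ge 0$, we have $X_iB_i\le \hat x B_i$ pointwise, hence (conditioning on the $X_i$, which are independent of the $B_i$) $X_iB_i\preccurlyeq \hat x B_i$ for each $i$; applying Lemma~\ref{lemma:Prop_Stoch_Dom} with $f(y_1,\dots,y_r)=\sum y_i$ gives $\sum_i X_iB_i\preccurlyeq \hat x\sum_i B_i$. The second step is the heart of the matter: I want $\sum_{i=1}^r B_i\preccurlyeq \sum_{i=1}^r \widehat B_i$ when $p_i\le\hat p$. For this I would use the standard coupling for a single Bernoulli — if $U_i$ is uniform on $[0,1]$ then $\mathbf 1\{U_i\le p_i\}\le \mathbf 1\{U_i\le \hat p\}$ — so each $B_i$ is stochastically dominated by $\widehat B_i$, and then Lemma~\ref{lemma:Prop_Stoch_Dom} with $f=\sum$ lifts it to the sums. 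The third step is trivial: $\sum_{i=1}^r \widehat B_i\le \sum_{i=1}^{\hat r}\widehat B_i$ pointwise because $\hat r\ge r$ and the added terms are non-negative, so the dominance is immediate. Chaining the three $\preccurlyeq$ relations and using transitivity of stochastic dominance yields the upper bound.

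For the lower bound $\tilde x\cdot\sum_{i=1}^{\tilde r}\widetilde B_i\preccurlyeq \sum_{i=1}^{r}X_iB_i$, I would run the same three moves in reverse direction: $\widetilde B_i\preccurlyeq B_i$ since $\tilde p\le p_i$, so $\sum_{i=1}^{\tilde r}\widetilde B_i\preccurlyeq \sum_{i=1}^{\tilde r} B_i\preccurlyeq\sum_{i=1}^{r}B_i$ (using $\tilde r\le r$ and non-negativity to add terms), and finally $\tilde x B_i\le X_iB_i$ pointwise since $\tilde x\le X_i$, giving $\tilde x\sum_{i=1}^r B_i\preccurlyeq\sum_{i=1}^r X_iB_i$; one has to be a little careful with the order in which the constant $\tilde x$ is introduced relative to the truncation, but since $\tilde x\sum_{i=1}^{\tilde r}B_i\le\tilde x\sum_{i=1}^r B_i$ as well, any consistent ordering works.

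The only mild subtlety — and the thing I would be most careful to state cleanly rather than an actual ``obstacle'' — is the independence bookkeeping needed to apply Lemma~\ref{lemma:Prop_Stoch_Dom}: that lemma requires two \emph{independent} families that dominate coordinatewise, so at the step where I replace $X_iB_i$ by $\hat x B_i$ I must first condition on the realization of $(X_i)_i$ (legitimate since $(X_i)_i\perp(B_i)_i$), verify the dominance $X_iB_i\preccurlyeq \hat x B_i$ holds for each fixed value and hence unconditionally, and only then invoke the coordinatewise-to-sum lifting; similarly, in the Bernoulli-inflation step I should present the $U_i$-coupling to make the coordinatewise dominance of independent families explicit. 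Everything else is routine monotonicity, and transitivity of $\preccurlyeq$ (immediate from Definition~\ref{definition:Stoch_Dom}) ties the chain together.
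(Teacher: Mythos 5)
Your proposal is correct and follows essentially the same route as the paper: coordinatewise stochastic dominance lifted to the sums via Lemma~\ref{lemma:Prop_Stoch_Dom}, with the paper merely combining your three stages into the single coordinatewise bound $X_i B_i \preccurlyeq \hat x \widehat B_i$ (verified directly from Definition~\ref{definition:Stoch_Dom}) and padding with zeros, $0 \preccurlyeq \hat x \widehat B_i$, to pass from $r$ to $\hat r$ summands. One small remark: your first stage does not actually need the conditioning on $(X_i)_i$ or their independence from the $(B_i)_i$ (which the lemma does not assume), since the almost-sure inequality $\sum_{i} X_i B_i \le \hat x \sum_{i} B_i$ already implies stochastic dominance of the sums.
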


\begin{proof}
	 We only show the second stochastic dominance relation, as the other follows analogously. First we claim that $ B_{i}\preccurlyeq \widehat B_{i}$ for any $i\in[r]$, i.e., for any $z\in \R$ we have $\Pro( B_{i}\geq z)\leq \Pro(\widehat B_{i}\geq z)$: If $z\leq 0$ or $z>1$ equality holds as $ B_{i}$ and $\widehat B_{i}$ are both $\{0,1\}$-valued. For $z\in(0,1]$ we have $\Pro( B_{i}\geq z) = \Pro( B_{i}= 1)=p_{i}\leq \hat p = \Pro(\widehat B_{i}= 1)= \Pro(\widehat B_{i}\geq z)$. Hence, we obtain $X_{i} B_{i}\preccurlyeq \hat x\widehat B_{i}$ since for any $z'\in\R$ 
	\[
	\Pro(X_{i} B_{i}\geq z') \leq \Pro(\hat x B_{i}\geq z') = \Pro\left( B_{i} \geq\frac{z'}{\hat x}\right) \leq \Pro\left(\widehat B_{i} \geq\frac{z'}{\hat x}\right) = \Pro(\hat x\widehat B_{i}\geq z')
	\]
	holds. Lemma~\ref{lemma:Prop_Stoch_Dom} yields $\sum_{i=1}^{r}X_{i} B_{i}\preccurlyeq \sum_{i=1}^{\hat r}\hat x\widehat B_{i}$, as $0\preccurlyeq \hat x\widehat B_{i}$. 
\end{proof}

\begin{restatable}[Stochastic dominance for partitioned condition]{lemma}{StochDomPartition}\label{lemma:Stoch_Dom_Partition}
	Let $X$ and $Y$ be random variables and let $\widehat\Omega$
	 be a non-empty event. Moreover, let $\biguplus_{r=1}^s \Omega_r$ be a partition of $\widehat\Omega$, where each $\Omega_r$ is non-empty. \\
	If for all $r\in[s]$ and for all $z\in\R$ we have $\Pro(X\geq z|\Omega_r)\leq (\geq)\Pro(Y\geq z)$, then we also have $\Pro(X\geq z|\widehat\Omega)\leq (\geq)\Pro(Y\geq z)$.
\end{restatable}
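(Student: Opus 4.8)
The plan is to prove the statement by conditioning on the partition and using the total-probability decomposition, then invoking the hypothesis on each block.

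First I would fix an arbitrary $z\in\mathbb{R}$ and work with the event $\{X\geq z\}$. Since $\biguplus_{r=1}^s \Omega_r = \widehat\Omega$ and each $\Omega_r$ is non-empty, the conditional probability $\Pro(X\geq z \mid \widehat\Omega)$ can be written as a convex combination of the conditional probabilities over the blocks: writing $w_r := \Pro(\Omega_r \mid \widehat\Omega)$, we have $w_r \geq 0$, $\sum_{r=1}^s w_r = 1$, and
\[
\Pro(X\geq z \mid \widehat\Omega) \;=\; \sum_{r=1}^s \Pro(X\geq z \mid \Omega_r)\, w_r,
\]
which follows from $\Pro(X\geq z, \Omega_r \mid \widehat\Omega) = \Pro(X\geq z\mid\Omega_r)\Pro(\Omega_r\mid\widehat\Omega)$ because $\Omega_r\subseteq\widehat\Omega$. (Here I use that $\Pro(\widehat\Omega)>0$ so all conditional probabilities are well-defined.)

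Then I would apply the hypothesis. In the ``$\leq$'' case, $\Pro(X\geq z \mid \Omega_r)\leq \Pro(Y\geq z)$ for every $r$, so
\[
\Pro(X\geq z \mid \widehat\Omega) \;\leq\; \sum_{r=1}^s \Pro(Y\geq z)\, w_r \;=\; \Pro(Y\geq z)\sum_{r=1}^s w_r \;=\; \Pro(Y\geq z).
\]
The ``$\geq$'' case is identical with the inequality reversed. Since $z$ was arbitrary, this gives $\Pro(X\geq z\mid\widehat\Omega)\leq(\geq)\Pro(Y\geq z)$ for all $z$, which is the claim (and by Definition~\ref{definition:Stoch_Dom} amounts to the corresponding stochastic dominance relation between the law of $X$ conditioned on $\widehat\Omega$ and the law of $Y$).

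There is essentially no obstacle here; the only point requiring a line of care is the law-of-total-probability identity restricted to $\widehat\Omega$, i.e. checking that conditioning first on $\widehat\Omega$ and then refining by the partition $\{\Omega_r\}$ of $\widehat\Omega$ yields the stated convex combination — this is immediate from $\Omega_r\subseteq\widehat\Omega$ and the definition of conditional probability. Everything else is a one-step averaging argument.
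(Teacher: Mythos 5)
Your proof is correct and is essentially identical to the paper's: both apply the law of total probability to write $\Pro(X\geq z\mid\widehat\Omega)$ as the convex combination $\sum_r \Pro(X\geq z\mid\Omega_r)\Pro(\Omega_r\mid\widehat\Omega)$ and then bound each term using the hypothesis. The extra line you add justifying the restricted total-probability identity is fine but not needed.
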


\begin{proof}
	Let $z\in\R$. Then by the law of total probability we have
	\[
	\Pro(X\geq z|\widehat\Omega)=\sum_{r=1}^{s}\Pro(X\geq z|\Omega_r)\Pro(\Omega_r|\widehat{\Omega})\leq (\geq)\Pro(Y\geq z)\cdot\sum_{r=1}^{s}\Pro(\Omega_r|\widehat{\Omega})=\Pro(Y\geq z).
	\]
\end{proof}

\begin{lemma} \label{lemma:expectation_lambdae}
	Let $\lambda_0 \in (0,1)$ and $G\sim \operatorname{Geom}\left(\frac{1}{N}\right)$.
	For $N=\Omega\left(\frac{1}{\lambda_0^2}\right)$, we have
	\begin{equation*}
	\mathbb{E}\left[\left(1-\frac{G+1}{\lambda_0 N}\right)_+\right] \geq
	\frac{\lambda_0}{e}.
	\end{equation*}
\end{lemma}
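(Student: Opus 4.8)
The goal is a lower bound on $\mathbb{E}\big[(1-\tfrac{G+1}{\lambda_0 N})_+\big]$ for $G\sim\operatorname{Geom}(\tfrac1N)$. The plan is to compute this expectation exactly by summing over the possible values of $G$, and then compare the resulting finite sum to the integral $\int_0^{\lambda_0}(\lambda_0-x)e^{-x}\,dx = \lambda_0+e^{-\lambda_0}-1 \ge \lambda_0^2/e$ that appeared in the hand-wavy discussion preceding Lemma~\ref{ProbabilityInduction}. The challenge is that we need $\lambda_0/e$, not $\lambda_0^2/e$, on the right-hand side; the extra factor of $\lambda_0$ is bought by the $+1$ shift in the numerator, which effectively removes the smallest geometric outcomes that contribute little, or—more precisely—by noting that the relevant comparison is with $\int_0^1(1-y)e^{-\lambda_0 y}\,dy$ after a change of variables, which evaluates to something of order $1/e$ rather than $\lambda_0/e$.

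First I would write $G$ as taking values in $\{1,2,3,\dots\}$ (using the convention that $\operatorname{Geom}(p)$ counts the number of trials up to and including the first success, so $\mathbb{P}(G=k)=(1-p)^{k-1}p$), and expand
\[
\mathbb{E}\left[\left(1-\frac{G+1}{\lambda_0 N}\right)_+\right]
= \sum_{k\ge 1}\left(1-\frac{k+1}{\lambda_0 N}\right)_+ \left(1-\tfrac1N\right)^{k-1}\tfrac1N,
\]
where the summand is nonzero only for $k \le \lambda_0 N - 1$. Substituting $q=1-\tfrac1N$ and recognizing $\tfrac1N = 1-q$, I would compare this sum to the integral $\int_0^{\lambda_0 N - 1}\big(1-\tfrac{x+1}{\lambda_0 N}\big)q^{x}\,dx\cdot(1-q)$ via a monotonicity/Riemann-sum argument: since $x\mapsto \big(1-\tfrac{x+1}{\lambda_0 N}\big)_+ q^x$ is a product of a decreasing linear function and a decreasing exponential, it is decreasing on the relevant range, so the left Riemann sum dominates the integral. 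Then $\int_0^{\lambda_0 N - 1} q^x\,dx\cdot(1-q) = (q^0 - q^{\lambda_0 N -1})\tfrac{1-q}{-\ln q} = (1-q^{\lambda_0 N - 1})\tfrac{1-q}{-\ln q}$, and I would use the elementary bounds $-\ln q = -\ln(1-\tfrac1N)\le \tfrac{1}{N}\cdot\tfrac{1}{1-1/N}$, i.e. $\tfrac{1-q}{-\ln q}\ge 1-\tfrac1N$, and $q^{\lambda_0 N - 1} = (1-\tfrac1N)^{\lambda_0 N -1}\le e^{-(\lambda_0 N - 1)/N} = e^{-\lambda_0 + 1/N}$.

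Putting the pieces together, the expectation is at least roughly $\big(1 - e^{-\lambda_0+1/N}\big)(1-\tfrac1N)$ minus a linear correction term coming from the $\tfrac{x+1}{\lambda_0 N}$ factor (which contributes $O(1/(\lambda_0 N))$-type terms and a $\int x q^x$ piece). For the bound $1 - e^{-\lambda_0} \ge \lambda_0 e^{-\lambda_0} \ge \lambda_0/e$ on the interval $\lambda_0\in(0,1)$—using convexity of $e^{-t}$—the desired inequality follows once $N$ is large enough that the error terms (all of which are $O(1/N)$ or $O(1/(\lambda_0^2 N))$ after collecting, since the linear-in-$x$ integral contributes a factor $\tfrac1{\lambda_0 N}\cdot\tfrac{1}{(\ln q)^2} = \Theta(\tfrac{N}{\lambda_0})$ before multiplication by $1-q=\tfrac1N$, i.e.\ $\Theta(\tfrac1{\lambda_0})$, so care is needed) are absorbed; this is exactly where the hypothesis $N = \Omega(1/\lambda_0^2)$ enters, guaranteeing $\tfrac{1}{\lambda_0 N}\le \tfrac{\lambda_0}{c}$ for a suitable constant. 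The main obstacle I anticipate is bookkeeping the linear correction term carefully enough to see that $N=\Omega(1/\lambda_0^2)$—rather than a weaker assumption—is what is needed; an alternative that sidesteps the Riemann-sum estimate is to evaluate the geometric sum $\sum_{k} (a-bk)q^{k}$ in closed form using $\sum q^k = \tfrac{1}{1-q}$ and $\sum k q^k = \tfrac{q}{(1-q)^2}$ over a truncated range, which yields an exact expression at the cost of messier truncation remainders, and I would fall back on this if the comparison inequality proves awkward.
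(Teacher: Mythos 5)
Your overall strategy---expand the expectation as a truncated geometric sum, estimate it, and reserve the hypothesis $N=\Omega(1/\lambda_0^2)$ for the discretization errors---has the right shape, and the ``fallback'' you mention at the end (evaluating $\sum_k (a-bk)q^k$ in closed form over the truncated range) is in fact what the paper does: it computes the finite sum exactly, obtains the lower bound $1+\frac{(1-1/N)^{\lambda_0 N}-1}{(\lambda_0-1/N)(1-1/N)}$, identifies its $N\to\infty$ limit as $\frac{e^{-\lambda_0}+\lambda_0-1}{\lambda_0}>\lambda_0/e$ (by strict concavity of $\lambda_0\mapsto\frac{e^{-\lambda_0}+\lambda_0-1}{\lambda_0}$), and then quantifies the required size of $N$ by substituting $N=A/\lambda_0^2$ and Taylor-expanding, which gives $\lambda_0\left(\frac12-\frac1A\right)+O(\lambda_0^2)$ and hence the claim for $A>\frac{2e}{e-2}$.

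However, your primary route as written has a genuine error in the bookkeeping of the ``linear correction term.'' The quantity $\frac{1}{\lambda_0 N}\sum_{k\le \lambda_0 N}(k+1)\,q^{k-1}(1-q)$ is not an $O(1/N)$ or $O(1/(\lambda_0^2 N))$ error that can be absorbed by taking $N$ large: since $\sum_{k\le K}k\,q^{k-1}(1-q)=\mathbb{E}[G\,\mathbf{1}_{G\le K}]\approx N\int_0^{\lambda_0}xe^{-x}\,dx\approx N\lambda_0^2/2$ for $K\approx\lambda_0 N$, this correction equals roughly $\lambda_0/2$, i.e.\ a constant fraction of your main term $1-e^{-\lambda_0}\approx\lambda_0$, and it survives in the limit $N\to\infty$. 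The true limiting value of the expectation is $\frac{e^{-\lambda_0}+\lambda_0-1}{\lambda_0}\approx\lambda_0/2$, not $1-e^{-\lambda_0}\approx\lambda_0$, and the lemma holds only because $1/2>1/e$; your chain $1-e^{-\lambda_0}\ge\lambda_0 e^{-\lambda_0}\ge\lambda_0/e$ is applied to the wrong quantity and leaves no room for a subtraction of size $\lambda_0/2$. Relatedly, the opening paragraph's account of where the factor of $\lambda_0$ comes from is off: it is not ``bought by the $+1$ shift'' (that shift only decreases the expectation), and the substituted integral $\int_0^1(1-y)e^{-\lambda_0 y}\,dy\approx\frac12$ is missing the Jacobian factor $\lambda_0$. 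The correct accounting is simply $\bigl(1-\tfrac{X}{\lambda_0}\bigr)_+=\tfrac1{\lambda_0}(\lambda_0-X)_+$, so the inequality $\lambda_0+e^{-\lambda_0}-1\ge\lambda_0^2/e$ quoted in the main text divides by $\lambda_0$ to give exactly $\lambda_0/e$. Once the correction term is computed exactly rather than discarded---which is precisely your fallback plan---the argument closes.
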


\begin{proof}
	Standard calculations yields		
	\begin{alignat*}{2}
	& \mathbb{E}\left[\left(1-\frac{G+1}{\lambda_0 N}\right)_+\right] &&=\sum_{k=1}^{\infty}\left(1-\frac{k+1}{\lambda_0 N}\right)_+\left(1-\frac{1}{N}\right)^{k-1}\left(\frac{1}{N}\right)\\
	& &&=\sum_{k=1}^{\lfloor\lambda_0 N\rfloor-1}\left(1-\frac{k+1}{\lambda_0 N}\right)\left(1-\frac{1}{N}\right)^{k-1}\left(\frac{1}{N}\right)\\
	& &&\geq\sum_{k=1}^{\lfloor\lambda_0 N\rfloor-1}\left(1-\frac{k+1}{\lfloor\lambda_0 N\rfloor}\right)\left(1-\frac{1}{N}\right)^{k-1}\left(\frac{1}{N}\right)\\
	& &&=\frac{(1-\frac{1}{N})^{\lfloor\lambda_0 N\rfloor} + \lfloor\lambda_0 N\rfloor \left(1-\frac{1}{N}\right)\left(\frac{1}{N}\right)+\left(\frac{1}{N}\right)^2-1}{\lfloor\lambda_0 N\rfloor \left(1-\frac{1}{N}\right)\left(\frac{1}{N}\right)}\\
	& && \geq 1 + \frac{(1-\frac{1}{N})^{\lambda_0 N} -1}{(\lambda_0-\frac{1}{N}) (1-\frac{1}{N})},
	\end{alignat*}
	where we have used $\lambda_0 N - 1\leq \lfloor \lambda_0 N\rfloor \leq \lambda_0 N$ for the last inequality.
	For $N\to \infty$, this bound converges
	to $\frac{e^{-\lambda_0}+\lambda_0 - 1}{\lambda_0}>e^{-1} \lambda_0$, where the inequality follows from the strict concavity of 
	$\lambda_0 \mapsto \frac{e^{-\lambda_0}+\lambda_0 - 1}{\lambda_0}$ over $(0,1)$. 
	Therefore, it remains to show that it is sufficient that $N=\Omega\left(\frac{1}{\lambda_0^2}\right)$. To do end, substitute $N=\frac{A}{\lambda_0^2}$ in the above bound for some $A>0$.
	This yields
	\[
	\mathbb{E}\left[\left(1-\frac{G+1}{\lambda_0 N}\right)_+\right]
	\geq 1 + \frac{\left(1-\frac{\lambda_0^2}{A}\right)^{A/\lambda_0}-1}{\left(\lambda_0-\frac{\lambda_0^2}{A}\right)\left(1-\frac{\lambda_0^2}{A}\right)} = \lambda_0 \left(\frac{1}{2}-\frac{1}{A}\right) + O(\lambda_0^2),\]
	where the last expression is a Taylor expansion in $\lambda_0\to 0$. 
	This shows that for $A> \frac{2e}{e-2}$, the desired bound holds for $\lambda_0$ small enough.
\end{proof}

	\begin{lemma} \label{RemainingMinimization}
		For some $q\in(0,1]$ let $G_1,G_2 \sim \operatorname{Geom}(q)$ be i.i.d.\ random variables, and let
		$k_1,k_2$ be two integers such that $k_1 < k_2$.
		Then,
		\[ \min(k_1, G_1) + \min(k_2, G_2) \preccurlyeq \min(k_1 + 1, G_1) + \min(k_2 - 1, G_2). \]		
	\end{lemma}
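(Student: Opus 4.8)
The plan is to prove this stochastic dominance by a direct coupling: since $G_1$ and $G_2$ are i.i.d., I couple them trivially (same $G_1=G_2=G$ is \emph{not} what I want; rather I keep them independent but compare the two sides realization-by-realization using the \emph{same} pair $(G_1,G_2)$). So fix a realization $(g_1,g_2)\in\mathbb{N}^2$, let $L(g_1,g_2):=\min(k_1,g_1)+\min(k_2,g_2)$ and $R(g_1,g_2):=\min(k_1+1,g_1)+\min(k_2-1,g_2)$, and the goal is to show that for every threshold $z$, $\mathbb{P}(L\ge z)\le \mathbb{P}(R\ge z)$. Because the two sides use i.i.d.\ geometrics with the \emph{same} parameter, the joint law is exchangeable in the two coordinates, and I will exploit this symmetry: swapping the roles of $G_1$ and $G_2$ turns $R(g_1,g_2)=\min(k_1+1,g_1)+\min(k_2-1,g_2)$ into $\min(k_1+1,g_2)+\min(k_2-1,g_1)$, which has the same distribution. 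So it suffices to show $L \preccurlyeq \tfrac12\big(R + R^{\mathrm{swap}}\big)$ in a suitable sense — but cleaner is to argue directly on the difference.

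First I would do the pointwise comparison. Write $f(g_1,g_2):=R(g_1,g_2)-L(g_1,g_2) = \big(\min(k_1+1,g_1)-\min(k_1,g_1)\big)-\big(\min(k_2,g_2)-\min(k_2-1,g_2)\big)$. The first bracket equals $\mathbf{1}[g_1\ge k_1+1]$ and the second equals $\mathbf{1}[g_2\ge k_2]$, so $f(g_1,g_2)=\mathbf{1}[g_1> k_1]-\mathbf{1}[g_2\ge k_2]\in\{-1,0,1\}$. Thus $R=L$ except on the event where exactly one of the two indicators fires; $R=L+1$ when $g_1>k_1$ and $g_2<k_2$, and $R=L-1$ when $g_1\le k_1$ and $g_2\ge k_2$. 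The key probabilistic fact is that, because $k_1<k_2$ and both $G_i$ are geometric with the same parameter $q$, the "up" event $E_+:=\{g_1>k_1,\,g_2<k_2\}$ is \emph{more likely} than the "down" event $E_-:=\{g_1\le k_1,\,g_2\ge k_2\}$: indeed $\mathbb{P}(E_+)=\mathbb{P}(G>k_1)\mathbb{P}(G<k_2)=(1-q)^{k_1}\big(1-(1-q)^{k_2-1}\big)$ while $\mathbb{P}(E_-)=\mathbb{P}(G\le k_1)\mathbb{P}(G\ge k_2)=\big(1-(1-q)^{k_1}\big)(1-q)^{k_2-1}$, and one checks $\mathbb{P}(E_+)-\mathbb{P}(E_-)=(1-q)^{k_1}-(1-q)^{k_2-1}\ge 0$ since $k_1\le k_2-1$.

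The final step assembles these into stochastic dominance. For any $z$, $\mathbb{P}(R\ge z)-\mathbb{P}(L\ge z)=\mathbb{E}\big[\mathbf{1}[R\ge z]-\mathbf{1}[L\ge z]\big]$; on $E_+$ the integrand is $\mathbf{1}[L=z-1]\ge 0$, on $E_-$ it is $-\mathbf{1}[L=z]\le 0$, and elsewhere it is $0$. Writing $N:=\min(k_1,G_1)+\min(k_2-1,G_2)$ (the "common part" of $L$ on both events, since on $E_+$ we have $L=N$ and on $E_-$ we have $L=N+1$... I need to recheck indexing), one reduces the claim to showing $\mathbb{P}(E_+\cap\{L=z-1\})\ge \mathbb{P}(E_-\cap\{L=z\})$ for all $z$; because conditionally on $E_+$ and on $E_-$ the variable $L$ restricted to the relevant range has the \emph{same} conditional law up to the deterministic $+1$ shift (both involve $\min(k_1,G_1)$ with $G_1$ conditioned to one side of $k_1$, plus $\min(k_2-1,G_2)$ with $G_2$ conditioned to one side of $k_2-1$ — and here is where I must be careful, since the conditioning directions differ), this will not be an exact equality of conditional laws. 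The honest route: use the memoryless property of the geometric to show $\big(\min(k_1,G_1)\mid G_1>k_1\big)=k_1$ deterministically and $\big(\min(k_1,G_1)\mid G_1\le k_1\big)=G_1\mid G_1\le k_1$, and similarly for the $k_2-1$ term, then compare the two resulting distributions of $L$ termwise. The main obstacle is precisely this last bookkeeping — matching up the conditional distributions of $L$ on $E_+$ versus on $E_-$ and using $\mathbb{P}(E_+)\ge\mathbb{P}(E_-)$ together with a pointwise stochastic-ordering of the conditional laws to conclude. I expect that a short coupling (send the mass of $E_-$ into $E_+$ using $\mathbb{P}(E_+)\ge\mathbb{P}(E_-)$, and on the overflow do nothing) makes this rigorous without any real computation.
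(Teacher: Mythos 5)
Your reduction is sound and is organized differently from the paper's proof (which partitions the sample space into the four events $\{G_1\lessgtr k_1\}\times\{G_2\lessgtr k_2-1\}$, computes each of the four probabilities from the geometric-law formulas, and does a three-way case analysis on the threshold $\alpha$). Your observation that $R-L=\mathbf{1}[G_1>k_1]-\mathbf{1}[G_2\ge k_2]$, so that the whole lemma reduces to $\mathbb{P}(E_+,\,L=z-1)\ge\mathbb{P}(E_-,\,L=z)$ for every integer $z$, is correct and arguably cleaner. But as written the proposal has a genuine gap exactly where you flag it: the inequality $\mathbb{P}(E_+)\ge\mathbb{P}(E_-)$ by itself does not imply the level-by-level inequality you need, and your suggested ``send the mass of $E_-$ into $E_+$'' coupling cannot work at the level of total masses alone, because the conditional laws of $L$ on the two events are genuinely different distributions (on $E_+$ it is $k_1+(G_2\mid G_2\le k_2-1)$, on $E_-$ it is $k_2+(G_1\mid G_1\le k_1)$). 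You never verify the comparison, and you express doubt about whether it holds.

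It does hold, and closing the gap takes two lines rather than a coupling. On $E_+$ one has $\min(k_1,G_1)=k_1$ and $\min(k_2,G_2)=G_2$, so $L=k_1+G_2$; on $E_-$ one has $L=G_1+k_2$. By independence,
\[
\mathbb{P}(E_+,\,L=z-1)=(1-q)^{k_1}\cdot q(1-q)^{z-k_1-2}=q(1-q)^{z-2},
\qquad k_1+2\le z\le k_1+k_2,
\]
and
\[
\mathbb{P}(E_-,\,L=z)=q(1-q)^{z-k_2-1}\cdot(1-q)^{k_2-1}=q(1-q)^{z-2},
\qquad k_2+1\le z\le k_1+k_2,
\]
with both quantities zero outside the indicated ranges. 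The two expressions agree wherever the second is positive, and the support of the second is contained in that of the first precisely because $k_2+1\ge k_1+2$, i.e.\ $k_1<k_2$ --- this is the only place the hypothesis enters. Hence $\mathbb{P}(E_+,\,L=z-1)\ge\mathbb{P}(E_-,\,L=z)$ for all $z$, which completes your argument. (Incidentally, this mirrors what happens in the paper's proof, where $p_1+p_3=p_2+p_4$ holds with equality in the middle range of $\alpha$ and the slack comes only from the boundary cases.) One last small point, shared with the paper: the closed-form tail probabilities such as $\mathbb{P}(G_1>k_1)=(1-q)^{k_1}$ implicitly assume $k_1\ge 0$, which is the only case needed in the application.
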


	\begin{proof}
		There is nothing to show if $k_2=k_1+1$, as in that case the left hand side and the right hand side are equal. So we assume w.l.o.g.\ that $k_2\geq k_1+2$. 	
		Let $\alpha$ be an arbitrary integer in $\{0,\ldots,k_1+k_2\}$.
		We introduce the events
		$ A := \left\{\min(G_1, k_1 + 1) + \min(G_2, k_2 - 1) \geqslant \alpha\right\} $ and $B := \left\{\min(G_1, k_1) + \min(G_2, k_2) \geqslant \alpha\right\}$,
		so the stochastic dominance relation we want to prove is equivalent to showing $\mathbb{P}(A) \geqslant \mathbb{P}(B)$. 
		
		Let us further define the events $L_1=\{G_1 \leq k_1\} $, $U_1=\{ G_1 \geq k_1+1\}$, $L_2=\{G_2 \leq k_2-1\} $, $U_2=\{G_2 \geq k_2\} $, so we have
		\begin{align*}
		 \mathbb{P}(A) &= \mathbb{P}(A,L_1,L_2) + \mathbb{P}(A,L_1,U_2) + \mathbb{P}(A,U_1,L_2) + \mathbb{P}(A,U_1,U_2) \quad \text{and}\\
		 \mathbb{P}(B) &= \mathbb{P}(B,L_1,L_2) + \mathbb{P}(B,L_1,U_2) + \mathbb{P}(B,U_1,L_2) + \mathbb{P}(B,U_1,U_2).
		\end{align*}
		It is easy to see that both
		$ A \cap L_1 \cap L_2$ and $B \cap L_1 \cap L_2$ hold if and only if $(G_1+G_2\geq \alpha)$, and similarly 
		$ A \cap U_1 \cap U_2 = B \cap U_1 \cap U_2$
		holds if and only if $(k_1 + k_2 \geq \alpha)$. Therefore, 
        it remains to show 
         $\mathbb{P}(A,L_1,U_2) + \mathbb{P}(A,U_1,L_2) \geq
         \mathbb{P}(B,L_1,U_2) + \mathbb{P}(B,U_1,L_2).$
		
		We recall the following formulas for the geometric law. Let $G$ be a geometrically distributed random variable with parameter $1$,
		$a\in\mathbb{Z}$ and 
		$b\in\mathbb{Z}\cup\{\infty\}$.
		We have :
		\[ \mathbb{P}(a \leqslant G \leqslant b) = \left\{
		\begin{array}{cl}
		0 & \text{if } a > b \text{ or } b < 0 \\
		1 - (1-q)^b & \text{if } a \leqslant 1 \text{ and } b \geqslant 0 \\
		(1-q)^{a-1} (1 - (1-q)^{b-a + 1}) & \text{if } 1 \leqslant a \leqslant b.
		\end{array}		 	
		\right. \]
		
		The independence between $G_1$ and $G_2$ implies 
		\begin{align*}
		p_1:=\mathbb{P}(A,U_1,L_2) & = (1-q)^{k_1} \mathbb{P}(\alpha - k_1 - 1 \leqslant G_2 \leqslant k_2 - 1), \\
		p_2:=\mathbb{P}(B,U_1,L_2) & = (1-q)^{k_1} \mathbb{P}(\alpha - k_1 \leqslant G_2 \leqslant k_2 - 1), \\
		p_3:=\mathbb{P}(A,L_1,U_2) & = (1-q)^{k_2 - 1} \mathbb{P}(\alpha - k_2 + 1 \leqslant G_1 \leqslant k_1), \quad \text{and}\\
		p_4:=\mathbb{P}(B,L_1,U_2) & = (1-q)^{k_2 - 1} \mathbb{P}(\alpha - k_2 \leqslant G_1 \leqslant k_1).
		\end{align*}
		We shall now distinguish three cases to prove that $p_1+p_3 \geq p_2+p_4$:
        \begin{itemize}
			\item If $0 \leqslant \alpha \leqslant k_2$, then $ p_3 = p_4 = (1-q)^{k_2 - 1}(1 - (1-q)^{k_1}) $.
			Moreover, $p_1 \geq p_2$ follows from $\alpha-k_1 \geq \alpha-k_1-1$. Hence, we have $p_1+p_3 \geq p_2+p_4$.
			
		\item If $k_2 +1 \leqslant \alpha \leqslant k_1 + k_2 - 1$, then
                we have
		\begin{alignat*}{2}
		p_1 & = (1-q)^{k_1} (1-q)^{\alpha-k_1-2} (1-(1-q)^{k_1+k_2-\alpha+1}) && = (1-q)^{\alpha-2}-(1-q)^{k_1+k_2-1}\\
		p_2 & = (1-q)^{k_1} (1-q)^{\alpha-k_1-1} (1-(1-q)^{k_1+k_2-\alpha}) && = (1-q)^{\alpha-1} - (1-q)^{k_1+k_2-1}\\
		p_3 & = (1-q)^{k_2 - 1} (1-q)^{\alpha-k_2} (1-(1-q)^{k_1+k_2-\alpha}) && = (1-q)^{\alpha-1}-(1-q)^{k_1+k_2-1}\\
		p_4 & = (1-q)^{k_2 - 1} (1-q)^{\alpha-k_2-1}(1-(1-q)^{k1+k_2-\alpha+1}) && = (1-q)^{\alpha-2} - (1-q)^{k_1+k_2-1}.
		\end{alignat*}
		Then, $p_1 + p_3 = p_2 + p_4$.
			\item If $\alpha = k_1 + k_2$, then $p_2 = p_3 = 0$,
			$ p_1 = p_4 = (1-q)^{\alpha-2} q$, hence  $p_1 + p_3 = p_2 + p_4$.
		\end{itemize}
	\end{proof}

	\begin{corollary}\label{coro:stochdom}
	 For some $q\in(0,1]$ let $G_1,G_2 \sim \operatorname{Geom}(q)$ be i.i.d.\ random variables, and let
		$k_1,k_2$ be two integers such that $k_1 < k_2$.
		Then,
		\[ (k_1 - G_1)_+ + (k_2 - G_2)_+ \succcurlyeq (k_1 + 1 - G_1)_+ + (k_2 - 1 - G_2)_+. \]	
	\end{corollary}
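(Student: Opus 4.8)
The plan is to reduce the statement to Lemma~\ref{RemainingMinimization} by rewriting each truncated difference $(k-G)_+$ as $k-\min(k,G)$ and then using that subtraction from a constant reverses stochastic dominance.

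First I would record the elementary identity $(k-G)_+=k-\min(k,G)$, valid for every integer $k\ge 0$ and every $G\ge 0$: if $G\le k$ both sides equal $k-G$, and if $G>k$ both sides equal $0$. We may assume $k_1\ge 0$ (in the intended applications $k_1,k_2$ count jobs; the degenerate cases $k_1<0$ are either trivial or reduce to a pointwise inequality), so this identity applies to all four thresholds $k_1,\ k_2,\ k_1+1,\ k_2-1$, using that $k_2-1\ge k_1\ge 0$. Hence $(k_1-G_1)_+ + (k_2-G_2)_+ = (k_1+k_2)-X$ and $(k_1+1-G_1)_+ + (k_2-1-G_2)_+ = (k_1+k_2)-Y$, where $X:=\min(k_1,G_1)+\min(k_2,G_2)$ and $Y:=\min(k_1+1,G_1)+\min(k_2-1,G_2)$.

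Next, Lemma~\ref{RemainingMinimization} gives precisely $X\preccurlyeq Y$, and it remains to conclude $(k_1+k_2)-Y\preccurlyeq(k_1+k_2)-X$. This follows from the general fact that a non-increasing map reverses $\preccurlyeq$: for any constant $c$ and any $z\in\R$,
\[
\Pro(c-Y\ge z)=\Pro(Y\le c-z)=1-\Pro(Y>c-z)\le 1-\Pro(X>c-z)=\Pro(X\le c-z)=\Pro(c-X\ge z),
\]
where the inequality is the form $\Pro(X>w)\le\Pro(Y>w)$ of $X\preccurlyeq Y$ recalled in Definition~\ref{definition:Stoch_Dom}. Taking $c=k_1+k_2$ and substituting the two identities above yields the corollary.

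I expect no genuine obstacle: all the work is in Lemma~\ref{RemainingMinimization}, and this corollary is a short bookkeeping step. The only points needing a little care are getting the direction right when composing with $t\mapsto c-t$ — note that Lemma~\ref{lemma:Prop_Stoch_Dom} does not apply here, since $t\mapsto c-t$ is non-increasing rather than non-decreasing — and making sure the truncation identity is applied legitimately to all four shifted thresholds, which is the reason for recording $k_1\ge 0$ (hence $k_2-1\ge 0$) explicitly.
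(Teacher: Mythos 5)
Your proof is correct and follows exactly the paper's route: the identity $(k-G)_+ = k-\min(k,G)$ combined with Lemma~\ref{RemainingMinimization}, with the reversal of stochastic dominance under $t\mapsto c-t$ spelled out explicitly. (The caveat about $k_1\ge 0$ is unnecessary: $k-\min(k,G)=\max(0,k-G)=(k-G)_+$ holds for arbitrary reals, so no sign assumption on the thresholds is needed.)
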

	\begin{proof}
	This follows from Lemma~\ref{RemainingMinimization},
	using the identity $k-\min(k,G)=(k-G)_+$.	
	\end{proof}

\section{Omitted proofs of Section~\ref{section:UpperBound}}\label{appendix:UpperBound}

\StochDomUB*

\begin{proof}
	We fix an arbitrary iteration $k$. Let $u,\xi\in (0,1)$ and let $a \in \left[\frac{1}{2},1\right]$. The main idea is to partition the conditioning events $\Omega_1:=\{\Xi_k\leq\xi, A_k\geq a\}$ and $\Omega_2:=\{\Xi_{k-1}\leq\xi, A_k\geq a\}$ and to show the corresponding inequalities for each block due to Lemma~\ref{lemma:Stoch_Dom_Partition}. 
	
	Any $\boldsymbol{p}\in\Omega_1$ can be mapped to the unique subset of jobs $\J_{\boldsymbol{p}}\subseteq\J$ and the unique subset of machines $\M_{\boldsymbol{p}}\in\M$, that are available in each iteration $1,\ldots,k$, fulfilling
	\[
	S_j(\boldsymbol{p})<\tau_k\ \forall j\in \J_{\boldsymbol{p}},\quad
	S_j(\boldsymbol{p})\geq\tau_k\ \forall j\in \J\setminus\J_{\boldsymbol{p}}, \quad
	\sum_{j\in\J\setminus\J_{\boldsymbol{p}}}\E[P_j]\leq \xi, \quad
	|\M_{\boldsymbol{p}}|\geq \lceil a m\rceil.
	\]
	Therefore, we can partition $\Omega_1$ into $\biguplus_{(\J',\M')\in \mathcal{X}} \Omega_{\J',\M'}^{1}$, where 
	\begin{align*}
	\mathcal{X}:=&\{(\J',\M'): \sum_{j\in\J\setminus\J'}\E[P_j]\leq \xi,\ |\M'|\geq \lceil a m\rceil\} \quad \text{ and } \\ \Omega_{\J',\M'}^{1}:=&\{\boldsymbol{p}\in\Omega_1: S_j(\boldsymbol{p})<\tau_k\ \forall j\in \J',\
	S_j(\boldsymbol{p})\geq\tau_k\ \forall j\in \J\setminus\J',\ \M_k(\boldsymbol{p})=\M' \}.
	\end{align*}
	Observe that that for any $(\J',\M')\in \mathcal{X}$ the distribution of any job in $\J\setminus\J'$ does not change when conditioning on $\Omega_{\J',\M'}^{1}$ as such a job starts not earlier than $\tau_k$ and all jobs are independent.
	By Lemma~\ref{lemma:Stoch_Dom_Partition} it is sufficient to show \eqref{dominanceL} for $\Omega_{\J',\M'}^{1}\neq \emptyset$ for some arbitrary but fixed $(\J',\M')\in \mathcal{X}$.
	The remaining expected load at iteration $k+1$ can only be caused by those machines $\widehat\M\subseteq\M'$ that receive at least two jobs by our assignment. As we only consider realizations in $\Omega_{\J',\M'}^{1}$ the remaining expected load on a machine $i\in\widehat\M$ is by Corollary~\ref{cor:LEPT_Partition} at most $2\cdot\frac{\xi Tm}{am}$ almost surely. Moreover, let $L_i$ be the random variable describing the load of the jobs assigned to $i$ starting from $\tau_k$. Then, the probability that $i$ leaves some jobs unprocessed after $\alpha T$ time units is $\mathbb{P}(L_{i}>\alpha T)\leq \frac{\E[L_{i}]}{\alpha T}\leq \frac{2\cdot\frac{\xi Tm}{am}}{\alpha T} = \frac{2\xi}{a\alpha}$ by Markov's inequality. Furthermore, define i.i.d.\ random variables $ B_i\sim \operatorname{Bernoulli}(\frac{2\xi}{a\alpha})$ for $i\in\M$. Notice that $\frac{2\xi}{a\alpha}<1$ as $\alpha>4$ and $a\geq\frac{1}{2}$. For $\hat x=2\cdot\frac{\xi T}{a}, \hat p=\frac{2\xi}{a\alpha}$ and $\hat r=|\M|=m$, Lemma~\ref{lemma:Stoch_Dom_Bernoulli} implies 
	\[
	\mathbb{P}(\Xi_{k+1}\leq u|
	\Omega_{\J',\M'}^{1})
	\ \geq \
	\mathbb{P}\left( \frac{1}{Tm} \cdot \frac{2 \xi T}{a }\sum_{i\in\M} B_i \leq u\right)
	\ = \
	\mathbb{P}\left( \frac{2 \xi}{a m} Y \leq u\right),
	\]
	since the sum of i.i.d.\ random variables is binomially distributed, showing~\eqref{dominanceL}.
	
	Similarly as before, we ca partition $\Omega_2$ into $\biguplus_{(\J',\M')\in \mathcal{X}} \Omega_{\J',\M'}^{2}$, where 
	\begin{align*}
	\Omega_{\J',\M'}^{2}:=&\{\boldsymbol{p}\in\Omega_2: S_j(\boldsymbol{p})<\tau_{k-1}\ \forall j\in \J',\
	S_j(\boldsymbol{p})\geq\tau_{k-1}\ \forall j\in \J\setminus\J',\ \M_k(\boldsymbol{p})=\M' \}.
	\end{align*}
	Let us consider $\Omega_{\J',\M'}^{2}$ for some arbitrary but fixed $(\J',\M')\in \mathcal{X}$. By Corollary~\ref{cor:LEPT_Partition} we can partition $\M'$ into machines $\widehat\M$ receiving at least two jobs and machines $\M'\setminus\widehat\M$ receiving a single job. For $i\in\widehat\M$ we can bound its expected load by twice the averaged expected load. Since $\Xi_{k-1}\geq\Xi_k$ almost surely and  $|\M'|\geq  am$ we have $\E[L_i]\leq 2\cdot\frac{\xi T}{a}$. For $i\in \M'\setminus\widehat\M$ we know that the single job $j$ assigned to $i$ could not be started in the previous iteration $k-1$. Let $i'$ denote the machine to which $j$ was assigned. As $j$ did not start, at least two jobs must have been assigned to $i'$. Hence, again using Corollary~\ref{cor:LEPT_Partition} we obtain $\E[L_i]=\E[P_j]\leq \E[L_{i'}]\leq 2\cdot\frac{\xi T}{a}$, since $A_{k-1}\geq A_k$ almost surely. Therefore, we obtain by Markov's inequality $\mathbb{P}(L_{i}\leq\alpha T)\geq 1-\frac{2\xi}{a\alpha}$. As a consequence, for $\tilde x =1$, $r=\tilde r=\lceil am\rceil$ and i.i.d.\ random variables $\widetilde B_i\sim  \operatorname{Bernoulli}(1-\frac{2\xi}{a\alpha})$  Lemma~\ref{lemma:Stoch_Dom_Bernoulli} implies 
	\[
	\mathbb{P}(A_{k+1}\geq u|
	\Omega_{\J',\M'}^{2})
	\ \geq \
	\mathbb{P}\left( \frac{1}{m} \sum_{i=1}^{\lceil am\rceil} \widetilde B_i \geq u\right)
	\ = \
	\mathbb{P}\left( \frac{1}{m} Z \geq u\right).
	\]
	By Lemma~\ref{lemma:Stoch_Dom_Partition}, this concludes the proof.	
\end{proof}

\InductionBaseCase*

\begin{proof}
	The first statement \eqref{InductionBaseL} is clear, as $\Xi_1\leq 1$ almost surely. 
	
	We claim that $\mathbb{P}\left(A_1\geq \frac{7}{8}\right) \geq 1- \epsilon$ for $\epsilon=e^{-\Theta(m^\frac{1}{3})}$.
	To this end, let $B_i$ for $i\in\M$ be the Bernoulli random variable describing whether machine $i$ is busy ($B_i=1$) or available ($B_i=0$) at the beginning of the first iteration. Observe, that the $B_i$'s are independent as the processing times of all jobs are independent. Also notice that $A_1=\frac{1}{m} \sum_{i\in M} (1-B_i)$.
	By Corollary~\ref{cor:LEPT_Partition} we can partition $\M$ into machines $\widehat \M$ with at least two jobs and machines $\M\setminus\widehat \M$ with only a single job. Let $L_i$ denote the random variable describing the load of machine $i$ starting from $\tau_1$. For $i\in\widehat\M$, Corollary~\ref{cor:LEPT_Partition} and applying Markov's inequality imply $\Pro(L_i>\alpha T)\leq \frac{\E[L_i]}{\alpha T}\leq\frac{2\ell}{\alpha T}\leq\frac{1}{\alpha}$, where the last inequality follows by Lemma~\ref{lemma:LowerBound_T}. For $i\in\M\setminus\widehat\M$ let $j$ be the single job assigned to $i$. Then, using Markov's inequality and the definition of $T$ we have $\Pro(L_i>\alpha T)\leq \frac{\E[P_j]}{\alpha T}\leq\frac{1}{\alpha}$. Now, let $B_i'\sim \operatorname{Bernoulli}(\frac{1}{\alpha})$ i.i.d. for $i\in\M$. By Lemma~\ref{lemma:Stoch_Dom_Bernoulli} and by the Chernoff bound, we obtain for $\eta\in(0,1)$
	\begin{alignat*}{2}
	& \mathbb{P}\left(\frac{1}{m}\sum_{i\in M}(1-B_i) < (1-\eta)\left(1-\frac{1}{\alpha}\right)\right)&&\leq \mathbb{P}\left(\frac{1}{m}\sum_{i\in M}(1-B_i') < (1-\eta)\left(1-\frac{1}{\alpha}\right)\right)\\
	& &&\leq\exp\left(-\frac{1}{2}\eta^2\left(1-\frac{1}{\alpha}\right)m\right).
	\end{alignat*}
	For $\eta=\frac{\alpha-8}{8\alpha-8}\in(0,1)$ as $\alpha>8$, we obtain $\Pro\left(A_1\geq\frac{7}{8}\right)\geq 1- \exp\left(-\frac{1}{2}\eta^2\left(1-\frac{1}{\alpha}\right)m\right)\geq 1-\epsilon$, showing the claim. 
	
	To show~\eqref{InductionBaseA}, observe that we have $\mathbb{P}\left(A_2\geq \frac{3}{4}\right)\geq \mathbb{P}\left(A_2\geq \frac{3}{4}|A_1\geq \frac{7}{8}\right)\cdot\mathbb{P}\left(A_1\geq \frac{7}{8}\right)$ by the law of total probability. Hence, using our claim it suffices to show $\mathbb{P}\left(A_2\geq \frac{3}{4}|A_1\geq \frac{7}{8}\right)\geq 1-\epsilon$ as $(1-\epsilon)^2\geq 1-3\epsilon$. Lemma~\ref{lemma:Stoch_Dom_Upper_Bound}~\eqref{dominanceA} for $Z \sim \text{Bin}\left(\lceil \frac{7}{8} m\rceil,1-\frac{16}{7\alpha}\right) $ together with the Chernoff bound imply for $\eta=\frac{\alpha-16}{7\alpha-16}\in(0,1)$
	\[
	\mathbb{P}\left(A_2\geq \frac{3}{4}\Big| A_1\geq \frac{7}{8}\right)\geq \mathbb{P}\left(Z\geq \frac{3}{4}m\right)\geq\mathbb{P}\left(Z\geq (1-\eta)\E[Z]\right)\geq 1- \exp(\Theta(m))\geq 1-\epsilon
	\]
	for $m$ large enough as $\E[Z]\geq \frac{7\alpha-16}{8\alpha}m$.
\end{proof}

\InductionUB*

\begin{proof}
	Let 
	$\psi=\frac{1}{8k^*+16}$ and $\epsilon=\exp\left(-\frac{(\alpha-32)^2 m^{\frac{1}{3}}}{768}\right)$. Assume for $1\leq k<k^*$, \eqref{inductionL} and \eqref{inductionA} holds up to $k$.  First, notice that the recursive formula results in $\gamma_k=\left( \frac{1}{2}\right)^{2^{k-1}-1}$. This implies that $k^*$ is the smallest index such that $\gamma_k<m^{-\frac{2}{3}}$.
	Therefore, we have $
	\beta_k>\beta_\infty>\frac{3}{4}-\frac{2}{\alpha}\sum_{h=0}^{\infty}2^{-h}=\frac{3}{4}-\frac{4}{\alpha}>\frac{5}{8}>\frac{1}{2}
	$
	since $\alpha>32$. By the choice of $\psi$ we also have $\beta_\infty-(k-2)\psi\geq\frac{1}{2}$ and hence, $\Pro\left(A_k\geq\frac{1}{2}\right)\geq\Pro(A_k\geq \beta_k- (k-2)\psi) \geq 1-(2^k-1) \epsilon$. By Lemma~\ref{lemma:Stoch_Dom_Upper_Bound}~\eqref{dominanceL} with $u=\gamma_{k+1}=\frac{1}{2}\gamma_k^2, \xi=\gamma_k$ and $a=\frac{1}{2}$ we obtain
	\[
	\mathbb{P}\left(\Xi_{k+1}\leq \frac{1}{2}\gamma_k^2 \Big|
	\Xi_k\leq\gamma_k, A_k\geq \frac{1}{2}\right)
	\geq
	\mathbb{P}\left( \frac{4 \gamma_k}{ m} Y \leq \frac{1}{2}\gamma_k^2\right)
	=\mathbb{P}\left( Y \leq \frac{\gamma_km}{8}\right)=\mathbb{P}\biggl( Y \leq \frac{\alpha }{32}\cdot\underbrace{\frac{4\gamma_k m}{\alpha}}_{=\E[Y]}\biggr),
	\]
	Applying the Chernoff bound for $\eta=\frac{\alpha }{32}-1\in (0,1)$ we have
	\[
	\mathbb{P}\left( Y \leq \frac{\alpha }{32}\cdot\frac{4\gamma_k m}{\alpha}\right)=\mathbb{P}\left(Y\leq (1+\eta) \frac{4\gamma_k m}{\alpha}\right) \geq 1-\exp\left(-\frac{4 \eta^2\gamma_k m}{3\alpha}\right) \geq 1-\epsilon,
	\]
	where the last inequality follows from
	$\gamma_k\geq m^{-2/3}$ as $k<k^*$ and the definition of $\epsilon$. 
	Now, we use the law of total probability and the induction hypotheses~\eqref{inductionL} and~\eqref{inductionA}
	to obtain
	\begin{align*}
	\mathbb{P}(\Xi_{k+1}\leq \gamma_{k+1}) &\geq
	\mathbb{P}\left(\Xi_{k+1}\leq b\gamma_k^2\Big|
	\Xi_k\leq\gamma_k, A_k\geq \frac{1}{2}\right)
	\cdot 
	\mathbb{P}\left(\Xi_k\leq\gamma_k, A_k\geq \frac{1}{2}\right)\\
	&\geq (1-\epsilon) \cdot \big( 1-(2^k-1)\epsilon - (2^k-1)\epsilon \big)\\
	&\geq 1- \big(1 + 2\cdot(2^k-1) \big)\epsilon\\
	&= 1-  (2^{k+1}-1) \epsilon,
	\end{align*}
	where we used the union bound in the second inequality. 
	
	It remains to show~\eqref{inductionA}. Let $u=\beta_{k+1} - (k-1)\psi, \xi=\gamma_{k-1}$ and $a=\beta_k-(k-2)\psi$. Since $\beta_{k+1}\leq 1$ and $\psi\geq0$, we have $u\leq \beta_{k+1} - \psi \beta_{k+1} - (k-2)\psi\leq (1-\psi)(\beta_{k+1} - (k-2)\psi$. Hence, by Lemma~\ref{lemma:Stoch_Dom_Upper_Bound}~\eqref{dominanceA} for $Z\sim\operatorname{Bin}\left(\lceil a m\rceil,1-\frac{2\xi}{a\alpha}\right)$ we obtain  
	\[
	\mathbb{P}\left(A_{k+1}\geq u | L_{k-1}\leq \xi, A_k \geq a\right) \geq \Pro\left(\frac{Z}{m} \geq u\right)
	\geq \Pro\left(\frac{Z}{m} \geq (1-\psi)(\beta_{k+1} - (k-2)\psi)\right)
	\]
	Moreover, due to the relation $\beta_{k}=\beta_{k+1}+\frac{2\gamma_{k-1}}{\alpha}$ we have $\E[Z]\geq \left(a-\frac{2\xi}{\alpha}\right)\cdot m=(\beta_{k+1}-(k-2)\psi)m\geq\frac{1}{2}m$ by the choice of $\psi$. Therefore, as $\psi\in(0,1)$ we obtain using the Chernoff bound
	\[
	\Pro\left(Z \geq (1-\psi)(\beta_{k+1} - (k-2)\psi)m \right)\geq \Pro\left(Z \geq (1-\psi)\E[Z] \right)\geq 1-\exp\left(-\frac{\psi^2}{4}m\right)\geq 1-\epsilon,
	\]
	for $m$ large enough. As a consequence, using the law of total probability, the induction hypotheses and the union bound we obtain
	\begin{align*}
	\mathbb{P}\left(A_{k+1}\geq u\right) 
	&\geq  \mathbb{P}\left(A_{k+1}\geq u | L_{k-1}\leq \xi, A_k \geq a\right)  \cdot \mathbb{P}\left(L_{k-1}\leq \xi, A_k \geq a\right)\\
	&\geq (1-\epsilon) \cdot (1-(2^{k-1}-1)\epsilon - (2^k-1)\epsilon)\\
	&\geq 1 - (2^{k-1}+2^k-1)\epsilon\\
	&\geq 1 - (2^{k+1}-1)\epsilon.
	\end{align*}
\end{proof}

\section{Omitted proofs of Section~\ref{section:LowerBound}}\label{appendix:LowerBound}

\PNumberOfAssignedJobs*

\begin{proof}
 We show that balancing the remaining jobs is optimal for
 all realizations of the jobs that are already started before the reassignment.
 Consider a realization of all jobs already started before time $t-1$, in which $r$ jobs should be reassigned to start at time $t-1$.
  A $1$-active policy must reassign $k_i$ jobs to machine $i$ (s.t. $k_1+\ldots + k_m=r$). Denote by $R_t(\boldsymbol{k})$ the random
  number of remaining jobs at time $t$ resulting from the assignment
  $\boldsymbol{k}=(k_1,\ldots,k_m)$. Further, denote by $\boldsymbol{k}^*$ a balancing assignment of the $r$ jobs, 
  i.e., such that $|k_i^*-k_j^*|\leq 1$, for all $i,j\in [m]$ or equivalently $k_i^* \in \left\{\left\lfloor \frac{r}{m}\right\rfloor , \left\lceil \frac{r}{m} \right\rceil\right\}$ for all $i$.

 By independence of the
processing times, the number of jobs we must draw before picking a long job is geometric with parameter $1/N$. 
As a consequence, the number
of jobs not started yet on machine $i$ at time $t-1+\epsilon$ (for an infinitesimal $\epsilon>0$) is distributed as $(k_i-G_i)_+$, and we have the following characterization for the number of remaining jobs at time $t$: For $ i\in[m]$ let $G_i \sim \operatorname{Geom}\left(\frac{1}{N}\right) $ be i.i.d.\ random variables . Then, we have
\[
R_t(\boldsymbol{k}) \overset{d}{=} \sum_{i=1}^m (k_i-G_i)_+.
\]
By applying Corollary~\ref{coro:stochdom} by repeatedly
making transfers of jobs from the most loaded to the least loaded machine, we obtain $R_t(\boldsymbol{k})) \succcurlyeq R_t(\boldsymbol{k^*})$.\\
Define the optimal cost-to-go for $r$ remaining jobs as $J^*(r) := \mathbb{E}[\OPT_1-t | R_t=r]$. It is easy to see that $J^*$ must satisfy the Bellman equation
		\begin{equation*}
		J^*(r) = \min_{\{\boldsymbol{k}:\ \sum_{i=1}^m k_i = r\} } \mathbb{E} \left[\mathbf{1}_{(R_t(\boldsymbol{k}) > 0)} + J^*(R_t(\boldsymbol{k}))\right]
				= 1 - \left(1-\frac{1}{N}\right)^r + \min_{\{\boldsymbol{k}:\ \sum_{i=1}^m k_i = r\} } \mathbb{E} \left[J^*(R_t(\boldsymbol{k}))\right],
		\end{equation*}
	 with initial value $J^*(0) = 0$. Clearly, $J^*$ is a nondecreasing function of $r$. Consequently, if $X,Y$ are random variables such that $X \preccurlyeq Y$, 
	 then $\mathbb{E} \left[J^*(X)\right] \leqslant  \mathbb{E} \left[J^*(Y)\right]$. By the previous discussion, this shows that the balancing assignment solves the Bellman equation, hence the balancing policy is optimal.
\end{proof}

\LProbabilityInduction*

\begin{proof}
	We prove this by induction. The base case $t=0$ is clear as $\Lambda_0=1$ a.s.~and hence the left hand side is 1. Now let us assume that the statement is true for $t < \lfloor\log_{2}\left(\frac{1}{4}\log_{2e}(m)\right)\rfloor$. For the sake of simplicity let $\lambda_0:=(2e)^{1-2^{t}}$. We have
	\begin{alignat*}{2}
	& \mathbb{P}\left(\Lambda_{t+1}\geq (2e)^{1-2^{t+1}}\right)&& \geq \mathbb{P}\left(\Lambda_{t}\geq \lambda_0\right)\cdot \mathbb{P}\left(\Lambda_{t+1}\geq (2e)^{1-2^{t+1}} \Bigm\vert \Lambda_{t}\geq \lambda_0\right)\\
	& && = \underbrace{\mathbb{P}\left(\Lambda_{t}\geq \lambda_0\right)}_{\geq\left(1-e^{-2\sqrt{m}}\right)^{t}}\cdot \mathbb{P}\left(\frac{1}{Nm}\sum_{i=1}^{m}(\lfloor\Lambda_t N\rceil_i - G_i)_+\geq (2e)^{1-2^{t+1}} \Bigm\vert \Lambda_{t}\geq \lambda_0\right).
	\end{alignat*}
	We shall prove that
	$\mathbb{P}\left(\frac{1}{Nm}\sum_{i=1}^{m}(\lfloor\lambda N\rceil_i - G_i)_+\geq (2e)^{1-2^{t+1}}\right) \geq \left(1-e^{-2\sqrt{m}}\right)$
	holds for all $ \lambda\geq \lambda_0$,
	which will complete the proof.\\
	We claim that for each $i\in[m]$ the following stochastic dominance relation holds 
	\begin{equation*}
	Y_i :=\left(\frac{\lfloor\lambda N\rceil_i}{\lambda N} - \frac{G_i}{\lambda N}\right)_+ \succcurlyeq \left(1-\frac{G_i+1}{\lambda_0 N}\right)_+,
	\end{equation*}
	that is, 
	$\mathbb{P}(Y_i\geq a)\geq \mathbb{P}\left((1-\frac{G_i+1}{\lambda_0 N})_+\geq a\right)$
	for all $a\in \mathbb{R}$.
	For $a<0$ both probabilities are equal to $1$ and for $a>1$ the probability on the right hand side is $0$. Therefore, let $a\in[0,1]$. 
	Equivalently, we have to show
	$\mathbb{P}(G_i \leq \lfloor\lambda N\rceil_i - a\lambda N) \geq 
	\mathbb{P}(G_i \leq 
	\lambda_0 N-a\lambda_0 N -1)$.
	We have
	$\lambda N (1-a)\geq \lambda_0 N(1-a)$, which implies
	$\lfloor\lambda N\rceil_i - a\lambda N\geq\lambda_0 N-a\lambda_0 N -1$, by using 
	$\lfloor\lambda N\rceil_i \geq \lambda N -1$,
	proving the claim. As the $Y_i$ are independent we obtain by the claim
	\begin{equation}\label{StochasticDominance}
	\frac{1}{m} \sum_{i=1}^m Y_i \succcurlyeq \frac{1}{m} \sum_{i=1}^m\left(1-\frac{G_i+1}{\lambda_0 N}\right)_+.
	\end{equation}
	We can now bound probability
	\begin{alignat*}{2}
	&	\mathbb{P}\left(\frac{1}{Nm}\sum_{i=1}^{m}(\lfloor\lambda N\rceil_i - G_i)_+\geq (2e)^{1-2^{t+1}}\right) && =
	\mathbb{P}\Biggl(\frac{\lambda}{m}\sum_{i=1}^{m}\left(\frac{\lfloor\lambda N\rceil_i}{\lambda N} - \frac{G_i}{\lambda N}\right)_+ \geq (2e)^{1-2^{t+1}}\Biggr)
	\\
	& && 
	\geq \mathbb{P}\Biggl(\frac{\lambda}{m} \sum_{i=1}^m\left(1-\frac{G_i+1}{\lambda_0 N}\right)_+\geq (2e)^{1-2^{t+1}}\Biggr)\\
	& &&\geq \mathbb{P}\Biggl(\frac{\lambda_0}{m} \sum_{i=1}^m\left(1-\frac{G_i+1}{\lambda_0 N}\right)_+\geq (2e)^{1-2^{t+1}}\Biggr)\\
	& &&= \mathbb{P}\Biggl(\frac{1}{m} \sum_{i=1}^m \underbrace{\left (1-\frac{G_i+1}{\lambda_0 N}\right)_+}_{Z_i}\geq \frac{\lambda_0}{2e}\Biggr),
	\end{alignat*}
	where the first inequality follows from the stochastic dominance relation~\eqref{StochasticDominance}, the second inequality simply follows from  $\lambda \geq \lambda_0$, and the last equality is a
	consequence of the definition of $\lambda_0$.
	By Lemma~\ref{lemma:expectation_lambdae} we obtain
	$\mathbb{E}[\frac{1}{m}\sum_{i=1}^m Z_i]=\mathbb{E}[Z_1]\geq \lambda_0 e^{-1}$ for $N=\Omega\left(\frac{1}{\lambda_0^2}\right)=\Omega(\sqrt{m})$,
	hence
	\begin{align*}
	\mathbb{P}\Biggl(\frac{1}{m} \sum_{i=1}^m Z_i \geq \frac{\lambda_0}{2e}\Biggr)
	= 
	\mathbb{P}\Biggl(\frac{1}{m} \sum_{i=1}^m Z_i \geq \frac{\lambda_0}{e} - \frac{\lambda_0}{2e}\Biggr)
	\geq 
	\mathbb{P}\Biggl(\frac{1}{m} \sum_{i=1}^m Z_i \geq \mathbb{E}[Z_1] - \frac{\lambda_0}{2e}\Biggr)
	\end{align*}
	Now applying Hoeffding's inequality (Lemma~\ref{lemma:Hoeffding}) yields
	\begin{align*}
	\mathbb{P}\Biggl(\frac{1}{m} \sum_{i=1}^m Z_i \geq \frac{\lambda_0}{2e}\Biggr)
	\geq 1-e^{-2m\left( \frac{\lambda_0}{2e}\right)^2}.
	\end{align*}

	To conclude the proof, it remains to show  
	$
	1-e^{-2m\left( \frac{\lambda_0}{2e}\right)^2} \geq \left(1-e^{-2\sqrt{m}}\right).
	$
	This is true as
	$\lambda_0:=(2e)^{1-2^{t}}$
	together with $t \leq  \log_{2}\left(\frac{1}{4}\log_{2e}(m)\right)$ implies
	$\lambda_0 \geq 2e\cdot m^{- \frac{1}{4}}$,
	and hence $-2m\left(\frac{\lambda_0}{2e}\right)^2  \leq  -2\sqrt{m}$.
\end{proof}


\section{Overview of notation}\label{appendix:Overview_Notation}

Notation used throughout the paper:

\medskip

\begin{tabular}{ c | l }
	Notation & Description  \\ \hline
	$\J$ & Set of all jobs  \\
	$n$ & Number of all jobs, i.e., $n:=|\J|$\\
	$\M$ & Set of all machines\\
	$m$ & Number of all machines, i.e., $m:=|\M|$\\
	$P_j$ & Random variable describing the (non-negative) processing time of job $j$\\
	$\delta$ & (Non-negative) delay value of a $\delta$-delay policy\\
	$\tau$ & (Non-negative) periodic time value of a $\tau$-shift policy\\
	\OPT & Cost of an optimal non-anticipatory policy\\
\end{tabular}

\bigskip

Notation used in Section~\ref{section:UpperBound}: 

\medskip

\begin{tabular}{ c | l }
	Notation & Description  \\ \hline
	\LEPTF & Fixed assignment policy induces by LEPT rule (Definition~\ref{definition:LEPTF})\\
	$\ell$ & Minimum expected load of all machines of a \LEPTF schedule\\
	\LEPTda & Main policy in Section~\ref{section:UpperBound} (Definition~\ref{definition:LEPTDeltaAlpha})\\
	$\alpha$ & $\alpha=33$ \\
	$k^*$ & $k^*:= \left\lfloor\log_2\left(\frac{2}{3}(\log_2(m))+1\right)\right\rfloor+2 = \Theta(\log\log(m))$ (Definition~\ref{definition:LEPTDeltaAlpha})\\
	$T$ & $T:=2\cdot\max\{\frac{1}{m} \sum_{j\in \J} \E[P_j],\max_j \E[P_j]\}$ (Definition~\ref{definition:LEPTDeltaAlpha})\\
	$\tau_k$ & $ \tau_k:=k(\delta+\alpha T)$ for $k\geq 1$ (Definition~\ref{definition:LEPTDeltaAlpha})\\
	$\Xi_k$ & Total expected processing time of the remaining jobs which have not been\\
	& started at time $<\tau_k$ divided by $Tm$ (Definition~\ref{notation:Xi_A})\\
	$A_k$ & Fraction of machines which are available at each time $\tau_1,\ldots,\tau_k$ (Definition~\ref{notation:Xi_A})\\
	$\gamma_k$ & $\gamma_{k+1} = \frac{1}{2}\gamma_{k}^2$ for $k\geq 1$, where $\gamma_1=1$ (recursive definition) and \\
	& $\gamma_k=\left( \frac{1}{2}\right)^{2^{k-1}-1}$ (explicit formula) (Lemma~\ref{lemma:Induction_Upper_Bound}) \\
	$\beta_k$ & $\beta_{k}=\frac{3}{4}-\frac{2}{\alpha}\sum_{h=1}^{k-2} \gamma_h $ for $k\geq 2$ (Lemma~\ref{lemma:Induction_Upper_Bound}) \\
	$\epsilon $ & $\epsilon=\exp\left(-\frac{(\alpha-32)^2 m^{\frac{1}{3}}}{768}\right)=e^{-\Theta(m^\frac{1}{3})}$ (Lemma~\ref{lemma:Induction_Upper_Bound}) \\
	$\psi$ & $\psi=\frac{1}{8k^*+16}=\Theta\left(\frac{1}{\log\log(m))}\right)$ (Lemma~\ref{lemma:Induction_Upper_Bound}) \\
\end{tabular}

\bigskip

Notation used in Section~\ref{section:LowerBound} 
\medskip

\begin{tabular}{ c | l }
	Notation & Description  \\ \hline
	$N$ & $n=Nm$ for instance considered in Section~\ref{section:LowerBound}; $N=\Omega(\sqrt{m})$\\
	$I_N$ & Instance considered in Section~\ref{section:LowerBound}: $m$ machines, $Nm$ jobs with \\
	& processing time $P_j\sim \operatorname{Bernoulli}\left(\frac1N\right)$\\
	$\OPTdelta$ & optimal $\delta$-delay policy for instance $I_N$\\
	$R_t$ & Number of remaining jobs at time $t$\\
	$\Lambda_t$ & Fraction of remaining jobs at time $t$, i.e.,  $\Lambda_t=\frac{R_t}{Nm}$\\
	$\lfloor\Lambda_t N\rceil_i$ & Number of jobs assigned to machine $i$ by $\OPT_1$\\
	$\OPTshift$ & optimal $\tau$-shift policy for instance $I_N$\\
\end{tabular}

\end{document}